\documentclass[reprint,floatfix,superscriptaddress,longbibliography,twocolumn,notitlepage,nofootinbib]{revtex4-2}
\usepackage[T1]{fontenc}
\usepackage{tabularx}
\usepackage{booktabs}
\usepackage{pst-node}
\usepackage{graphicx}
\usepackage{amsmath}
\usepackage{amssymb}
\usepackage{xcolor}
\usepackage{xfrac}
\usepackage{caption}
\usepackage{subcaption}
\usepackage[colorlinks=true,linkcolor=blue, citecolor=blue]{hyperref}
\usepackage{cleveref}
\newcommand{\bra}[1]{\left\langle #1 \right|}
\newcommand{\ket}[1]{\left|#1\right\rangle}

\newcommand{\ketbra}[2]{\ket{#1}\!\!\bra{#2}}
\def\tr{\mathrm{tr}}
\def\deg{\mathrm{deg}}

\def\bolddot#1#2{\boldsymbol{#1}\cdot \boldsymbol{#2}}
\def\bbF{\mathbb{F}_2}
\newcommand\XQAOA[1]{\mathrm{XQAOA}_1^{\mathrm{#1}}}
\def\etal{\textit{et al.}}
\def\bbackslash{\backslash\!\!\backslash}

\newcommand*\circled[1]{\tikz[baseline=(char.base)]{
            \node[shape=circle,draw,inner sep=2pt] (char) {#1};}}

\usepackage{tikz}
\usepackage{pgfplots}
\usepackage{tikz-3dplot}

\tdplotsetmaincoords{60}{115}
\pgfplotsset{compat=newest}

\usepackage{mathtools}
\usepackage{float}
\usepackage{orcidlink}
\usepackage{amsthm}
\usepackage{algpseudocode}
\newtheorem{theorem}{Theorem}
\newtheorem{corollary}[theorem]{Corollary}
\newtheorem{lemma}[theorem]{Lemma}

\begin{document}
\title{An Expressive Ansatz for Low-Depth Quantum Approximate Optimisation}
\author{V Vijendran\,\orcidlink{0000-0003-3398-1821}}
\email{v.vijendran@anu.edu.au}
\affiliation{Centre for Quantum Computation and Communication Technologies (CQC2T), Department of Quantum Science, Research School of Physics and Engineering, Australian National University, Acton 2601, Australia}
\affiliation{A*STAR Quantum Innovation Centre (Q.InC), Institute of Materials Research and Engineering (IMRE), Agency for Science, Technology and Research (A*STAR), 2 Fusionopolis Way, Innovis \#08-03, Singapore 138634, Republic of Singapore}

\author{Aritra Das\,\orcidlink{0000-0001-7840-5292}}
\email{aritra.das@anu.edu.au}
\affiliation{Centre for Quantum Computation and Communication Technologies (CQC2T), Department of Quantum Science, Research School of Physics and Engineering, Australian National University, Acton 2601, Australia}

\author{Dax Enshan Koh\,\orcidlink{0000-0002-8968-591X}}
\email{dax\textunderscore koh@ihpc.a-star.edu.sg}
\affiliation{A*STAR Quantum Innovation Centre (Q.InC), Institute of High Performance Computing (IHPC), Agency for Science, Technology and Research (A*STAR), 1 Fusionopolis Way, \#16-16 Connexis, Singapore 138632, Republic of Singapore}

\author{Syed M Assad\,\orcidlink{0000-0002-5416-7098}}
\email{cqtsma@gmail.com}
\affiliation{Centre for Quantum Computation and Communication Technologies (CQC2T), Department of Quantum Science, Research School of Physics and Engineering, Australian National University, Acton 2601, Australia}
\affiliation{A*STAR Quantum Innovation Centre (Q.InC), Institute of Materials Research and Engineering (IMRE), Agency for Science, Technology and Research (A*STAR),
2 Fusionopolis Way, Innovis \#08-03, Singapore 138634, Republic of Singapore}

\author{Ping Koy Lam\,\orcidlink{0000-0002-4421-601X}}
\email{pingkoy@imre.a-star.edu.sg}
\affiliation{Centre for Quantum Computation and Communication Technologies (CQC2T), Department of Quantum Science, Research School of Physics and Engineering, Australian National University, Acton 2601, Australia}
\affiliation{A*STAR Quantum Innovation Centre (Q.InC), Institute of Materials Research and Engineering (IMRE), Agency for Science, Technology and Research (A*STAR),
2 Fusionopolis Way, Innovis \#08-03, Singapore 138634, Republic of Singapore}

\begin{abstract}
The Quantum Approximate Optimisation Algorithm (QAOA) is a hybrid quantum-classical algorithm used to approximately solve combinatorial optimisation problems. It involves multiple iterations of a parameterised ansatz that consists of a problem and mixer Hamiltonian, with the parameters being classically optimised. While QAOA can be implemented on near-term quantum hardware, physical limitations such as gate noise, restricted qubit connectivity, and state-preparation-and-measurement (SPAM) errors can limit circuit depth and decrease performance. To address these limitations, this work introduces the eXpressive QAOA (XQAOA), an overparameterised variant of QAOA that assigns more classical parameters to the ansatz to improve its performance at low depths. XQAOA also introduces an additional Pauli-Y component in the mixer Hamiltonian, allowing the mixer to implement arbitrary unitary transformations on each qubit. To benchmark the performance of XQAOA at unit depth, we derive its closed-form expression for the MaxCut problem and compare it to QAOA, Multi-Angle QAOA (MA-QAOA) [Sci Rep 12, 6781 (2022)], a Classical-Relaxed algorithm, and the state-of-the-art Goemans-Williamson algorithm on a set of unweighted regular graphs with 128 and 256 nodes for degrees ranging from 3 to 10. Our results indicate that at unit depth, XQAOA has benign loss landscapes with local minima concentrated near the global optimum, allowing it to consistently outperform QAOA, MA-QAOA, and the Classical-Relaxed algorithm on all graph instances and the Goemans-Williamson algorithm on graph instances with degrees greater than 4. Small-scale simulations also reveal that unit-depth XQAOA invariably surpasses both QAOA and MA-QAOA on all tested depths up to five. Additionally, we find an infinite family of graphs for which XQAOA solves MaxCut exactly and analytically show that for some graphs in this family, special cases of XQAOA are capable of achieving a much larger approximation ratio than QAOA. Overall, XQAOA is a more viable choice for variational quantum optimisation on near-term quantum devices, offering competitive performance at low depths.
\end{abstract}

\maketitle

\section{Introduction} \label{introduction}

Full-fledged fault-tolerant quantum computers capable of executing quantum algorithms that can solve problems of interest are expected to involve at least millions of physical qubits, high-fidelity gate operations, and quantum error correction techniques~\cite{beverland2022assessing}. While the physical realisation of such devices is still a long way off, noisy intermediate-scale quantum (NISQ) devices capable of running quantum algorithms with limited circuit depth are becoming more widely available \cite{preskill2018quantum,lau2022nisq}. Particularly promising are the variational quantum algorithms (VQAs)~\cite{cerezo2021variational, RevModPhys.94.015004,tilly2022variational,peruzzo2014variational,farhi2014quantum} capable of potentially realising a quantum advantage on NISQ devices. Unlike traditional quantum algorithms like Shor's algorithm \cite{shor1999polynomial} that use specially designed quantum circuits to solve specific problems, VQAs use parameterised quantum circuits whose objective is to drive a quantum state close to the desired state that minimises a cost function by varying the gate parameters.

The Quantum Approximate Optimisation Algorithm (QAOA)~\cite{farhi2014quantum} is one such algorithm that can solve optimisation problems by encoding their solutions into the ground state of a quantum Hamiltonian and preparing a quantum state that approximates this ground state. QAOA involves a $p$-level quantum circuit described by a collection of $2p$ classical parameters to generate a quantum state. The classical parameters are fine-tuned to optimise the expectation of the cost for the generated quantum state. This quantum state can then be measured to obtain an approximate solution to the optimisation problem. Besides its ability to solve combinatorial optimisation problems, QAOA can be used to perform universal quantum computation~\cite{lloyd2018quantum, morales2020universality}. Moreover, even at its lowest level $p=1$, QAOA can efficiently generate probability distributions that likely cannot be generated efficiently by classical computers \cite{farhi2016quantum,dalzell2020how}.

Several variants of the original QAOA algorithm have been developed, each with different operators and initial states \cite{bartschi2020grover, hadfield2019quantum, wurtz2021classically, wang2020x, egger2021warm, tate2020bridging, golden2021threshold, sack2021quantum, PhysRevResearch.4.013141, chalupnik2022augmenting,golden2023quantum,lee2022depth,leontica2022quantum} or different objective functions for tuning the variational parameters \cite{barkoutsos2020improving,li2020quantum}. Depth-reduction techniques \cite{majumdar2021optimizing,majumdar2021depth} or methods like circuit cutting \cite{bechtold202investigating,peng2020simulating} that optimise QAOA circuits while taking into account quantum hardware limitations; as well as classical aspects such as hyper-parameter optimisation and exploitation of problem structure, have been studied as well \cite{herrman2021impact, shaydulin2021classical, jain2022graph, streif2020training, akshay2021parameter, wurtz2022counterdiabaticity, egger2021warm, tate2020bridging, bravyi2020obstacles, bravyi2022hybrid}. However, one key drawback of realistic QAOA implementations is the need for deep quantum circuits with many qubits \cite{guerreschi2019qaoa, herrman2021lower, wurtz2021fixed, akshay2020reachability, wurtz2021maxcut, farhi2020quantum}. This poses a hurdle since NISQ devices are significantly limited due to their restricted qubit connectivity, inadequate qubit control, limited coherence times, and absence of quantum error correction, causing noise to grow with circuit depth and eventually affecting the fidelity of the resulting quantum state \cite{xue2021effects, wang2021noise, marshall2020characterizing, alam2019analysis, alam2020design, streif2021quantum, anschuetz2022beyond,stilck2021limitations,weidinger2023error,shaydulin2021error}.

There are several approaches that have been proposed to improve the performance of low-depth QAOA by adding new parameters to the ansatz \cite{chalupnik2022augmenting,herrman2022multi, PhysRevA.104.062428, yu2022quantum, PhysRevResearch.4.033029, tate2022warm}. These approaches include Multi-Angle QAOA (MA-QAOA)~\cite{herrman2022multi}, which increases the number of classical parameters added in each layer for more precise control of the optimisation process; Free-Axis Mixer QAOA (FAM-QAOA)~\cite{PhysRevA.104.062428}, which includes additional variational parameters in the mixer Hamiltonian that allow for rotation about an axis in the XY plane; QAOA with Adaptive Bias Fields (AB-QAOA)~\cite{yu2022quantum}, which adds a Pauli-Z component to the mixer Hamiltonian; Adaptive Derivative Assembled Problem Tailored QAOA (ADAPT-QAOA)~\cite{PhysRevResearch.4.033029}, which grows the ansatz iteratively using a gradient criterion; and QAOA+~\cite{chalupnik2022augmenting}, which augments the traditional QAOA ansatz with an additional multi-parameter layer that is independent of the specific problem being solved. Despite these improvements, there remains an imperative for problem-inspired quantum ansatzes with minimal computational overhead, which are not only expressive but also readily trainable allowing for greater flexibility in the optimisation process.

This paper presents a modified version of the QAOA called eXpressive QAOA (XQAOA). It shares the same inspiration behind the recently proposed Multi-Angle QAOA (MA-QAOA) approach~\cite{herrman2022multi} but goes beyond it by including an additional Pauli-Y component in the mixing Hamiltonian. This modification strategically overparameterises the quantum ansatz, facilitating the exploration of all relevant directions of the Hilbert space by allowing the mixer to effectively implement arbitrary unitary operations on each qubit with just a single iteration. As a result, XQAOA does not suffer from \textit{reachability deficits}~\cite{akshay2020reachability,akshay2021reachability}; with appropriately chosen angles, XQAOA can output any computational-basis state. To quantify the performance of the quantum algorithm, we apply it to the problem of maximum cut (MaxCut) on arbitrary graphs. We derive closed-form expressions for XQAOA, MA-QAOA, and QAOA at $p=1$ for the MaxCut problem and benchmark their performance against a naive Classical-Relaxed (CR) algorithm and the state-of-the-art Goemans-Williamson (GW)~\cite{goemans1995improved} algorithm on unweighted $D$-regular graphs---graphs where every node is connected to $D$ other nodes---with 128 and 256 nodes for $3 \leq D \leq 10$. The benchmark reveals that at $p = 1$, XQAOA outperforms MA-QAOA, QAOA, and the CR algorithm on all graph instances and the GW algorithm on graphs with $D > 4$; interestingly, the CR algorithm also outperforms QAOA and MA-QAOA on all graphs with QAOA matching MA-QAOA's performance for graphs with $D > 5$. We find that the exceptional performance of the XQAOA ansatz is attributed to the favourable characteristics of its benign loss landscape, which is notably free of barren plateaus and spurious local minima, with any remaining local minima being concentrated around the global optimum. Lastly, we show that for unweighted triangle-free graphs with edges of odd degrees, XQAOA can solve MaxCut exactly. Here, the edge degree $d(e)$ of an edge $e = \{u,v\}$ is defined as the number of neighbours of $e$, i.e., $ d(e) = |\mathcal{N}(u)\cup \mathcal{N}(v)|-2$, where $\mathcal{N}(w)$ is the set of all nodes connected to the node $w$.

The structure of the remainder of this paper is as follows: in \cref{prelim}, we review the necessary background material, where we explain the MaxCut problem and the challenges in finding its optimal solution (\cref{maxcut}), the traditional QAOA ansatz and its application to the MaxCut problem (\cref{qaoa}), and the MA-QAOA ansatz and its extension to MaxCut on arbitrary graphs in (\cref{ma_qaoa}). In \cref{xqaoa}, we introduce XQAOA and discuss its variants and other notable properties. In \cref{computational_results}, we present the results of our numerical simulations. In \cref{discussion}, we interpret and discuss our results, and in \cref{conclusion}, we provide some concluding remarks.

\begin{figure*}
  \includegraphics[width=\textwidth]{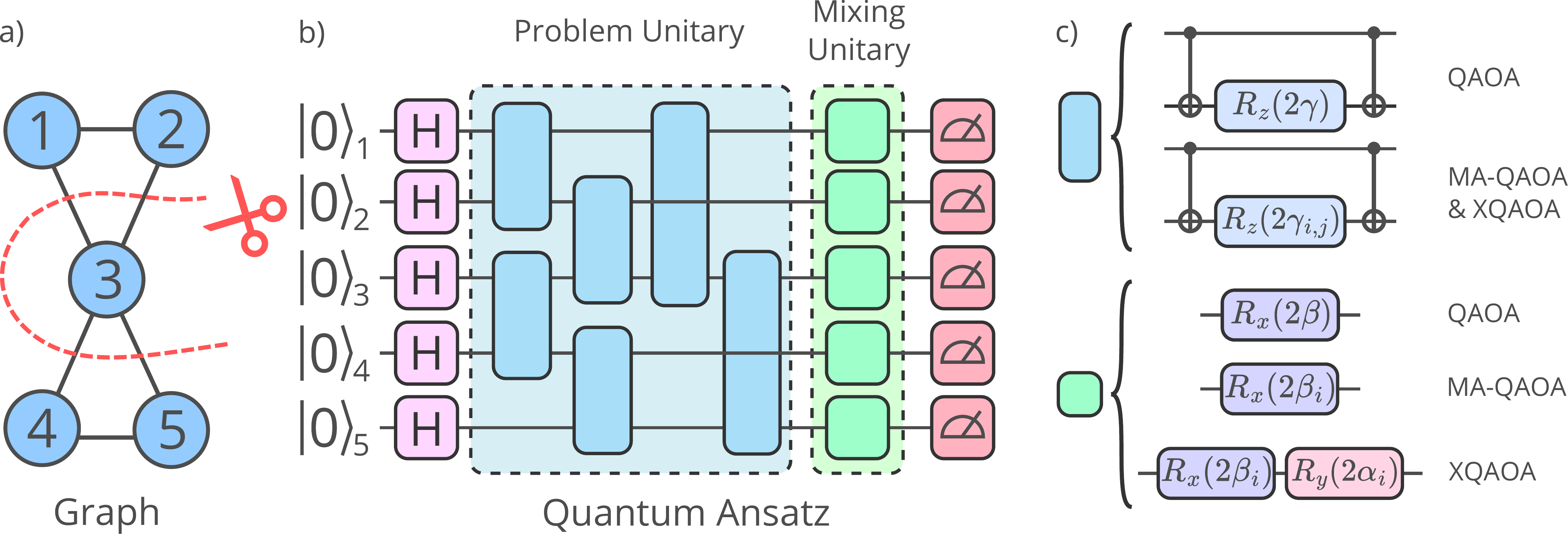}
  \captionsetup{justification=raggedright, singlelinecheck=false}
  \caption{a) A specific instance of a graph for which we want to identify a set of vertices that maximises the number of edges that are cut. b) A quantum circuit with a single iteration of a quantum ansatz applied to it. The quantum ansatz consists of a unitary operation specific to the problem being solved and a problem-independent mixing unitary. c) Decomposing the problem and mixing unitaries for QAOA, MA-QAOA, and XQAOA into CNOT and single-qubit rotation gates.}
  \label{fig:xqaoa}
\end{figure*}

\section{Preliminaries} \label{prelim}
\subsection{Maximum Cut (MaxCut)} \label{maxcut} 
Many real-world problems can be phrased as combinatorial optimisation problems~\cite{vazirani2001approximation}. Here, we lay emphasis on XQAOA's application to an archetypal problem known as MaxCut, which has numerous applications in computer science and operations research, including statistical physics and circuit layout design~\cite{barahona1988application}, analysis of social networks~\cite{mining}, data clustering~\cite{data_clustering}, semi-supervised learning~\cite{wang2013semi}, and more~\cite{DEZA1994191, DEZA1994217}. The (weighted) MaxCut problem is an optimisation problem in which we are given an undirected weighted graph and asked to partition its vertices into two disjoint sets $S$ and $\overline S$ such that the sum of the weights of the edges between the two sets is as large as possible. 

Formally, given an undirected graph $G = (V, E)$ and non-negative weights $w_{uv} = w_{vu}$ on the edges $\{u, v\} \in E$, the MaxCut problem is that of finding a set $S$ of vertices that maximises the weight of the edges in the cut $(S, \overline{S})$; that is, the weight of the edges with one endpoint in $S$ and the other in $\overline{S}$. The MaxCut problem can be formulated as a binary quadratic program of the form
\begin{equation}
\begin{array}{ll@{}ll}
\text{Maximise}  & \displaystyle\sum\limits_{\{u,v\} \in E} \frac{1}{2} w_{uv} \left(1 - y_uy_v \right) &\\[0.5cm]
\text{s.t}& y_u \in \{-1, 1 \} \quad \forall u \in V.
\end{array}
\label{maxcut_quadratic_program}
\end{equation}
The optimisation problem given by \cref{maxcut_quadratic_program} is $\mathsf{NP}$-hard\footnote{Historically, the $\mathsf{NP}$-hardness of MaxCut was one of the earliest results known in computational complexity theory: the decision version of the MaxCut problem was one of Karp's first $\mathsf{NP}$-complete problems~\cite{karp1972reducibility}. Here, a decision problem is a problem in which a yes-or-no answer is sought. A decision version of the MaxCut problem may be phrased as follows: given a graph $G$ and an integer $j$, determine if $G$ has a cut whose size is at least $j$.
}, which suggests that it is highly plausible that no efficient algorithm exists that can solve it.

However, there are approximation algorithms that can find good solutions in polynomial time for many instances of the problem. The GW algorithm holds the current record for an approximation ratio guarantee on generic graphs, achieving an approximation ratio of $r^* \approx 0.87856$ using semidefinite programming~\cite{goemans1995improved}. When confined to unweighted 3-regular graphs, this lower bound can be increased to $r^* \approx 0.9326$~\cite{halperin2004max}\footnote{This bound by Halperin \etal~is an improvement over an earlier result by Feige \etal, who found a smaller lower bound of $r^* \approx 0.924$ for unweighted 3-regular graphs~\cite{feige2002improved}.}.
Assuming the unique games conjecture \cite{khot2002power}\footnote{The unique games conjecture asserts that the problem of estimating the approximate value of a certain type of game, known as a unique game, has an $\mathsf{NP}$-Hard computational complexity.} and that $\mathsf P \neq \mathsf {NP}$,
this is the best possible approximation ratio for MaxCut~\cite{khot2007optimal, 5497893, khot2015unique} that polynomial-time classical algorithms can achieve.
Additionally, it has been proven that it is $\mathsf{NP}$-hard to approximate the MaxCut value with an approximation ratio that is better than $r^* \geq 16/17 \approx 0.94117$~\cite{haastad2001some, trevisan2000gadgets}.

\subsection{Quantum Approximate Optimisation Algorithm (QAOA)} \label{qaoa}
Combinatorial optimisation problems can be formulated using $n$ bits and $m$ clauses, where each clause represents a constraint on a subset of the bits that is satisfied for certain combinations of values for those bits but not for others. We consider the case when each clause $\mu$ is associated with a cost $c_\mu \in \mathbb R$. The objective function defined on $n$-bit strings is then given by the sum of the costs of the satisfied clauses: \begin{equation}
C(z) = \sum\limits^m_{\mu=1}\, C_\mu (z),
\label{FarhiEq1}
\end{equation}
where $z= z_1 z_2\cdots z_n \in \{0,1\}^n$ is an $n$-bit string and 
$C_{\mu} (z) = c_\mu$ if $z$ satisfies the clause $\mu$ and 0 otherwise. An approximate optimisation algorithm aims to find a string $z$ that achieves a desired approximation ratio $r^\star$, i.e., it seeks a string $z$ that satisfies
\begin{equation}
\frac{C(z)}{C_{\max }} \geq r^{*},
\end{equation}
where $C_{\max }=\max _{z} C(z)$. The QAOA algorithm consists of two operators (see \cref{fig:xqaoa}): the problem unitary and the mixing unitary, which are generated by the problem Hamiltonian and mixing Hamiltonian, respectively. The problem unitary is defined as the following unitary operator $U (C,\gamma)$ which depends on a real-valued angle $\gamma \in \mathbb R$:
\begin{equation}
U (C,\gamma) = e^{-i\gamma C} = \prod\limits^{m}_{\mu =1}\, e^{-i\gamma C_\mu }.
\end{equation}
The operators $C = \sum_z C(z) \ket z \!\!\bra z$ and $C_\mu = \sum_z C_\mu(z) \ket z \!\!\bra z$ are the diagonal operators whose entries are the objective function values. Next, the mixing unitary is defined as the $\beta$-dependent product of commuting one-qubit unitaries
\begin{equation}
U (B, \beta) = e^{-i\beta B} = \prod\limits^n_{\nu=1}\, e^{-i \beta  X_\nu},
\end{equation}
where $\beta \in [0, \pi)$ and $B$ is the sum of all single-qubit Pauli-X operators
\begin{equation}
B = \sum\limits^n_{\nu=1} X_\nu.
\end{equation}
For any positive integer $p \geq 1$, the QAOA algorithm generates an angle-dependent quantum state using $2p$ angles, $\boldsymbol{\gamma} = [\gamma_1, \gamma_2, \ldots, \gamma_p]$ and $\boldsymbol{\beta} = [\beta_1, \beta_2, \ldots, \beta_p]$, where the subscripts of $\gamma$ and $\beta$ indicate the iterate number of the quantum ansatz. The quantum state has the form
\begin{equation}
\ket{\boldsymbol{\gamma}, \boldsymbol{\beta}} = U (B, \beta_p) \, U (C, \gamma_p) \cdots U (B, \beta_1) \, U (C, \gamma_1)\, \ket{s},
\label{QAOA_Var_State}
\end{equation}
where $ \ket{s} $ denotes the uniform superposition over all $n$-bit strings
\begin{equation}
\ket{s} = \frac{1}{\sqrt{2^n}} \! \! \! \sum_{z\in \{0,1\}^n} \! \! \! \! \ket{z} .
\end{equation}
We then compute the expectation value of $C$ for the variational state described in \cref{QAOA_Var_State}
\begin{equation}
\langle C \rangle  = \bra{\boldsymbol{\gamma}, \boldsymbol{\beta}} C  \ket{\boldsymbol{\gamma}, \boldsymbol{\beta}},
\label{FarhiEq8}
\end{equation}
which is accomplished by repeated measurements of fresh copies of the quantum system in the computational basis. The optimal parameters $(\boldsymbol{\gamma}^*, \boldsymbol{\beta}^*)$ that maximise the expectation value $\langle C \rangle$ are found using a classical computer:
\begin{equation}
(\boldsymbol{\gamma}^*, \boldsymbol{\beta}^*)=\underset{\boldsymbol{\gamma}, \boldsymbol{\beta}}{\arg \max }\, \langle C \rangle .
\end{equation}
Typically, this is performed by estimating the parameters and then optimising them using simplex or gradient techniques. The approximation ratio $r^*$ is a relevant metric for assessing the performance of QAOA, where
\begin{equation}
r^*=\frac{\langle C \rangle}{C_{\max }}.
\end{equation}

We will focus on applying QAOA to the MaxCut problem for the rest of this paper. To this end, note that the optimisation problem in \cref{maxcut_quadratic_program} is equivalent to finding the maximum eigenvalue of the problem Hamiltonian $C$ for MaxCut:
\begin{align}
C = \! \! \! \! \sum_{\{u,v\} \in E} \! \! \! C_{uv},  \quad \mbox{with } C_{uv} = \frac{1}{2}w_{uv}\left (1-Z_u Z_v  \right) ,
\end{align}
where $Z_i$ denotes the Pauli-Z matrix acting on the $i$-th qubit.

Before proceeding with the rest of the section, let us make a few definitions that will be used throughout the paper. For $w \in V$, let $\mathcal{N}(w) = \{ x \in V: \{x,w\} \in E \}$ be the set of neighbours of $w$, i.e.~vertices which are adjacent to $w$. Then, for an edge $\{u, v\} \in E$,
we have that
\begin{itemize} \setlength\itemsep{0.75em}
    \item $e = \mathcal{N}(v) \backslash \{u\}$ is the set of vertices other than $u$ that are connected to $v$.
    \item $d = \mathcal{N}(u) \backslash \{v\}$ is the set of vertices other than $v$ that are connected to $u$.
    \item $F =\mathcal{N}(u) \cap \mathcal{N}(v)$ is the set of vertices that form a triangle with the edge $\{u,v\}$. In other words, $F$ is the set of vertices that are neighbours of both $u$ and $v$.
\end{itemize}
The following theorem can be used to compute the expectation value of the cost function for QAOA at $p = 1$ (QAOA$_1$) for MaxCut on arbitrary weighted graphs, thereby allowing us to assess the performance of QAOA$_1$.

\begin{theorem}
\label{qaoa_hadfield1}
Consider the $\mathrm{QAOA}_1$ state $\ket{\gamma,\beta}$ for MaxCut on an arbitrary weighted graph $G$. Then, the expectation value of $C$ in $\ket{\gamma,\beta}$ is $\bra{\gamma,\beta} C\ket{\gamma,\beta}= \sum_{\{u,v\}\in E} \langle C_{uv}\rangle$,
where
\begin{widetext}
\begin{align}
    \left\langle C_{u v}\right\rangle  &=\frac{w_{uv}}{2} +\frac{w_{uv}}{4}\Bigg[ \sin 4 \beta \sin\gamma_{uv}' \left(\prod_{w \in e}\cos\gamma_{wv}' + \prod_{w \in d}\cos \gamma_{uw}' \right)  \nonumber\\
    & \quad+ \sin^2 2 \beta \prod_{\substack{w \in e \\ w \notin F}}\cos\gamma_{wv}' \prod_{\substack{w \in d \\ w \notin F}}\cos \gamma_{uw}' \Bigg( \prod_{f \in F} \cos(\gamma_{uf}' + \gamma_{vf}') - \prod_{f \in F} \cos(\gamma_{uf}' - \gamma_{vf}') \Bigg)  \Bigg]
\label{qaoa_formula}
\end{align}
\end{widetext}
and $\gamma'_{ij} = \gamma w_{ij}$.
\end{theorem}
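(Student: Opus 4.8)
The plan is to pass to the Heisenberg picture and compute a single two-point function, since $\langle C_{uv}\rangle = \tfrac{w_{uv}}{2}\big(1-\langle Z_uZ_v\rangle\big)$ with the expectation taken in $\ket{\gamma,\beta}=U(B,\beta)U(C,\gamma)\ket{s}$. I would conjugate the observable $Z_uZ_v$ outward through the mixer and then the problem unitary, and finally evaluate the resulting operator in the product state $\ket{s}=\ket{+}^{\otimes n}$ using $\bra{+}X\ket{+}=1$ and $\bra{+}Y\ket{+}=\bra{+}Z\ket{+}=0$. The organising principle is that only Pauli strings that are $I$ or $X$ on \emph{every} qubit survive this final expectation, which severely prunes the expansion and is what turns the calculation into finite products over the neighbour sets $d$, $e$, and $F$.

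First I would push $Z_uZ_v$ through the mixer. Since $U(B,\beta)$ is a product of commuting single-qubit rotations, only the factors on $u$ and $v$ act, and the rotation identity $e^{i\beta X}Z\,e^{-i\beta X}=\cos 2\beta\,Z+\sin 2\beta\,Y$ gives
\[
U(B,\beta)^\dagger Z_uZ_v\,U(B,\beta)=(\cos 2\beta\,Z_u+\sin 2\beta\,Y_u)(\cos 2\beta\,Z_v+\sin 2\beta\,Y_v),
\]
a sum of the four operators $Z_uZ_v$, $Z_uY_v$, $Y_uZ_v$, $Y_uY_v$ with prefactors $\cos^2 2\beta$, $\cos 2\beta\sin 2\beta$ (twice), and $\sin^2 2\beta$.

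Next I would conjugate each of these by the problem unitary, which up to a global phase is $\prod_{\{a,b\}\in E}e^{i\gamma'_{ab}Z_aZ_b/2}$ with $\gamma'_{ab}=\gamma w_{ab}$. Every $Z$ commutes with it, so the $Z_uZ_v$ term is untouched and dies in $\ket{s}$; this explains why $\cos^2 2\beta$ never appears. A factor $Y_w$ instead anticommutes with $Z_wZ_x$ on each incident edge, so conjugation turns it into $\prod_{x\in\mathcal N(w)}(\cos\gamma'_{wx}-i\sin\gamma'_{wx}Z_wZ_x)\,Y_w$. Evaluating the mixed term $Y_uZ_v$, the only way to cancel the dangling $Z_v$ is to draw the $Z_uZ_v$ piece from the $\{u,v\}$ edge, which simultaneously converts $Y_u$ into $X_u$ (via $Z_uY_u=-iX_u$), while every other edge at $u$ must supply its cosine to avoid a surviving $Z$; this yields $\langle Y_uZ_v\rangle=-\sin\gamma'_{uv}\prod_{w\in d}\cos\gamma'_{uw}$, and symmetrically $\langle Z_uY_v\rangle=-\sin\gamma'_{uv}\prod_{w\in e}\cos\gamma'_{wv}$. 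Combining these with the $\cos 2\beta\sin 2\beta$ prefactor, the overall sign of $-\tfrac{w_{uv}}{2}\langle Z_uZ_v\rangle$, and the identity $2\sin 2\beta\cos 2\beta=\sin 4\beta$ reproduces the first bracketed line of \cref{qaoa_formula}.

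The main obstacle is the $Y_uY_v$ term, where the triangle set $F$ enters. I would first note the useful fact that $Y_uY_v$ \emph{commutes} with $Z_uZ_v$, so the $\{u,v\}$ edge drops out entirely and the conjugation factorises over the edges at $u$ (the set $d$) and at $v$ (the set $e$). A vertex $b\in d\setminus F$ or $c\in e\setminus F$ forces its cosine, giving the prefactors $\prod_{w\in d,\,w\notin F}\cos\gamma'_{uw}$ and $\prod_{w\in e,\,w\notin F}\cos\gamma'_{wv}$. A triangle vertex $f\in F$ carries a $Z_f$ from both edges $\{u,f\}$ and $\{v,f\}$, so it must be hit an even number of times: either both edges give cosines, or both give their sine pieces, the latter depositing a factor $Z_uZ_v$ with weight $-\sin\gamma'_{uf}\sin\gamma'_{vf}$. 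Forcing $u$ and $v$ to end as $X$ rather than $Y$ makes the number of such sine choices over $F$ odd. Summing over odd subsets $T\subseteq F$ is then a parity computation: writing $C_f=\cos\gamma'_{uf}\cos\gamma'_{vf}$ and $S_f=\sin\gamma'_{uf}\sin\gamma'_{vf}$, the odd-parity sum equals $\tfrac12\big[\prod_f(C_f-S_f)-\prod_f(C_f+S_f)\big]$, which collapses through $C_f\mp S_f=\cos(\gamma'_{uf}\pm\gamma'_{vf})$ to precisely $\prod_{f\in F}\cos(\gamma'_{uf}+\gamma'_{vf})-\prod_{f\in F}\cos(\gamma'_{uf}-\gamma'_{vf})$. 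Tracking the accumulated phases and the $\sin^2 2\beta$ prefactor then yields the second line of \cref{qaoa_formula}. The genuinely delicate parts are this parity bookkeeping on $F$ and the careful accounting of signs; everything else reduces to routine Pauli algebra.
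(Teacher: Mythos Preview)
Your proposal is correct and the underlying technique---Heisenberg-picture conjugation of $Z_uZ_v$ through the mixer and the problem unitary, followed by evaluation in $\ket{+}^{\otimes n}$ using the rule that only all-$I$/$X$ Pauli strings survive, with the parity identity collapsing the triangle sum---is exactly the machinery the paper uses. The organisational difference is that the paper does not prove \cref{qaoa_hadfield1} directly: it first proves the more general \cref{XQAOA_Full_Thm} for $\XQAOA{XY}$ (nine terms $\xi(P,Q)$ with $P,Q\in\{X,Y,Z\}$ coming from the XY mixer) and then obtains \cref{qaoa_hadfield1} by setting $\alpha_i=0$ and $\beta_i=\beta$, $\gamma_{jk}=\gamma$. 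Your four-term expansion is precisely the specialisation of their nine-term one to $\alpha=0$, and your parity trick on $F$ is their \cref{lemma3}. Going direct, as you do, is slightly leaner if one only wants \cref{qaoa_hadfield1}; going through the general XQAOA result, as the paper does, gives all three theorems at once with no extra work.
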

In \cref{MA_QAOA_Proof}, we give a proof of \cref{qaoa_hadfield1}, which we show follows as a straightforward corollary of our main theorem (\cref{XQAOA_Full_Thm}). By taking $w_{ij}=1$ if $\{i,j\}\in E$ and 0 otherwise, \cref{qaoa_formula} simplifies for unweighted graphs to:
\begin{align}
    \langle C_{uv} \rangle
    &= \frac 12 + \frac 14 \Big\{ \sin 4\beta \sin\gamma (\cos^{|e|} \gamma +\cos^{|d|} \gamma) \nonumber\\[0.5em]
    &\quad+\sin^2 2\beta \cos^{|e|+|d|-2|F|}\gamma (\cos^{|F|} 2 \gamma-1)  \Big\},
    \label{eq:qaoa_unweighted}
\end{align}
which has previously appeared as eq.~(14) of \cite{wang2018quantum}\footnote{One could also find similar analytical expressions for unweighted MaxCut in, for example, \cite[eq.~(5.10)]{hadfield2018quantum}. See also \cite{bravyi2020obstacles,hadfield2022analytical,ozaeta2022expectation}, which provide analytical expressions for more general cost functions.}.

From \cref{qaoa_hadfield1}, we see that at $p=1$, the expectation value  $\left\langle C_{u v}\right\rangle$ of any edge in a graph depends on only the nodes and edges adjacent to it. The overall expectation value for QAOA$_1$ can then be calculated by summing the expectation values over all edges in the graph. For an $n$-node graph, the right-hand side of \cref{qaoa_formula} can be computed in linear time $O(n)$. Since the total number of edges in any graph is at most $\binom{n}{2} = O(n^2)$, computing the expectation value of QAOA would take at most $O(n^3)$ time. However, to find an actual bit string that represents an approximate solution for an arbitrary graph, here we use the QAOA quantum circuit to generate a quantum state on which measurement is performed.

\subsection{Multi-Angle QAOA (MA-QAOA)} \label{ma_qaoa}
The Multi-Angle QAOA (MA-QAOA)~\cite{herrman2022multi} varies from the original QAOA in that it allows each summand of the problem and mixing Hamiltonians to have its own angle, as opposed to these Hamiltonians having a single angle each\footnote{Predating Herrman \etal~\cite{herrman2022multi} was earlier work by Farhi \etal~\cite{farhi2017quantum}, who first considered allowing for multiple angles in QAOA.}. In this modification for $p=1$ (called MA-QAOA$_1$), the problem and mixing unitaries are defined as
\begin{align}
U (\boldsymbol C, \boldsymbol{\gamma}) &= e^{-i \sum_\mu \gamma_\mu C_\mu} = \prod\limits^{m}_{\mu =1}\, e^{-i\gamma_\mu C_\mu }, \mbox{ and} \\[0.75cm]
U (\boldsymbol B, \boldsymbol{\beta}) &= e^{-i \sum_\nu \beta_\nu B_\nu} = \prod\limits^n_{\nu=1}\, e^{-i \beta_\nu  X_\nu},
\end{align}
respectively, where $\boldsymbol{C} = (C_\mu)_{\mu=1,\ldots,m}$ and $\boldsymbol{B} = (B_\nu)_{\nu=1,\ldots,n}$ denote collections of operators. Thus, MA-QAOA$_1$ generates an angle-dependent quantum state of the form
\begin{align}
    \ket{\boldsymbol{\gamma}, \boldsymbol{\beta}} &=  U (\boldsymbol B, \boldsymbol{\beta}) \, U (\boldsymbol C, \boldsymbol{\gamma})\, \ket{s} \nonumber\\[0.5em]
    &=  \prod\limits^n_{\nu=1}\, e^{-i \beta_\nu  X_\nu} \prod_{\mu =1}^m e^{-i \gamma_{\mu} C_{\mu}} \ket{s},
\label{maqaoa_angle_dependent_state}
\end{align}
where $\boldsymbol{\gamma} = [\gamma_1, \gamma_2, \dots, \gamma_m]$ and $\boldsymbol{\beta} = [\beta_1, \beta_2, \dots, \beta_n]$. The subscript in $\gamma_\mu$ refers to the $\mu$-th clause, and the subscript in $\beta_\nu$ refers to the $\nu$-th qubit. In the context of MaxCut, $\mu$ and $\nu$ index the edges and vertices, respectively, of the graph involved. The approximation ratio obtained using QAOA lower bounds that of MA-QAOA, and MA-QAOA's guarantee of convergence to the exact solution as $p \to \infty$ follows immediately from \cite[eq.~(10)]{farhi2014quantum} and from noting that MA-QAOA is a generalisation of QAOA.

Herrman \etal~\cite{herrman2022multi} provide an analytical formula for computing the performance of MA-QAOA$_1$ on MaxCut for unweighted triangle-free graphs. We generalise their result with the following theorem, where we present an analytical formula for the expectation value of the cost function for MA-QAOA$_1$ for MaxCut on arbitrary weighted graphs, allowing for the assessment of MA-QAOA$_1$'s performance on general graphs.
\begin{theorem}
\label{ma_qaoa_theorem}
Consider the $\mathrm{MA\text{-}QAOA}_1$ state for MaxCut on an arbitrary graph G. Then, the expectation value of $C$ in $\ket{\boldsymbol{\gamma},\boldsymbol{\beta}}$ is $\bra{\boldsymbol{\gamma},\boldsymbol{\beta}} C\ket{\boldsymbol{\gamma},\boldsymbol{\beta}}= \sum_{\{u,v\}\in E} \langle C_{uv}\rangle_{\mathrm{MA}}$, where
\begin{widetext}
\begin{align}
    \left\langle C_{u v}\right\rangle_{\mathrm{MA}}  &=\frac{w_{uv}}{2} +\frac{w_{uv}}{2}\Bigg[ \cos 2 \beta_u  \sin 2 \beta_v \sin\gamma_{uv}' \prod_{w \in e}\cos\gamma_{wv}'  +  \sin 2 \beta_u  \cos 2 \beta_v \sin \gamma_{uv}' \prod_{w \in d}\cos \gamma_{uw}' \nonumber\\[0.5em]
    & + \frac{1}{2} \sin 2 \beta_u \sin 2 \beta_v \prod_{\substack{w \in e \\ w \notin F}}\cos\gamma_{wv}' \prod_{\substack{w \in d \\ w \notin F}}\cos \gamma_{uw}' \Bigg( \prod_{f \in F} \cos(\gamma_{uf}' + \gamma_{vf}') - \prod_{f \in F} \cos(\gamma_{uf}' - \gamma_{vf}') \Bigg) \Bigg]
\label{ma_qaoa_full_exp}
\end{align}
\end{widetext}
and $\gamma_{jk}' = \gamma_{jk}w_{jk}$.
\end{theorem}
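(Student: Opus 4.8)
The plan is to compute $\langle C_{uv}\rangle_{\mathrm{MA}}$ in the Heisenberg picture. Writing $C_{uv}=\frac{w_{uv}}{2}(1-Z_uZ_v)$, it suffices to evaluate $\langle Z_uZ_v\rangle = \bra{s}\,U^\dagger(\boldsymbol C,\boldsymbol\gamma)\,U^\dagger(\boldsymbol B,\boldsymbol\beta)\,Z_uZ_v\,U(\boldsymbol B,\boldsymbol\beta)\,U(\boldsymbol C,\boldsymbol\gamma)\,\ket{s}$ and use $\langle C_{uv}\rangle_{\mathrm{MA}} = \frac{w_{uv}}{2} - \frac{w_{uv}}{2}\langle Z_uZ_v\rangle$. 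I would first push $Z_uZ_v$ through the mixer. Since $U(\boldsymbol B,\boldsymbol\beta)=\prod_\nu e^{-i\beta_\nu X_\nu}$ is a product of commuting single-qubit rotations, only the factors on $u$ and $v$ act nontrivially, and $e^{i\beta_w X_w}Z_w e^{-i\beta_w X_w}=\cos2\beta_w\,Z_w+\sin2\beta_w\,Y_w$. This yields four terms with coefficients $\cos2\beta_u\cos2\beta_v$, $\cos2\beta_u\sin2\beta_v$, $\sin2\beta_u\cos2\beta_v$, $\sin2\beta_u\sin2\beta_v$ multiplying the operators $Z_uZ_v$, $Z_uY_v$, $Y_uZ_v$, $Y_uY_v$ respectively. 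The global phase of $U(\boldsymbol C,\boldsymbol\gamma)=\prod_\mu e^{-i\gamma_\mu C_\mu}$ cancels in the conjugation, so effectively I conjugate by $\prod_{\{j,k\}\in E} e^{i\gamma'_{jk}Z_jZ_k/2}$, a product of commuting two-qubit phase rotations.

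The decisive simplification is that $\ket s=\ket{+}^{\otimes n}$, so $\bra s P\ket s$ is nonzero (and equal to $1$) only for Pauli strings $P$ built from $I$ and $X$; any surviving $Y$ or $Z$ annihilates the term. I would therefore track, for each of the four operators, only those branches of the problem-unitary conjugation that end in an all-$I$/$X$ string. The relevant rules are that conjugating $Y_w$ (resp.\ $X_w$) by an \emph{active} edge $\{w,w'\}$ flips the qubit-$w$ Pauli between $Y_w$ and $X_w$, deposits a $Z_{w'}$, and carries a factor $\mp\sin\gamma'_{ww'}$, whereas an \emph{inactive} edge contributes $\cos\gamma'_{ww'}$. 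The $Z_uZ_v$ term commutes with the problem unitary and gives $\bra s Z_uZ_v\ket s=0$. For $Z_uY_v$, the spectator $Z_u$ can only be neutralised by a $Z_u$ deposited through the single edge $\{u,v\}$, which forces exactly one active edge at $v$; summing the inactive edges at $v$ over $e=\mathcal N(v)\setminus\{u\}$ gives $\sin\gamma'_{uv}\prod_{w\in e}\cos\gamma'_{wv}$, and the symmetric computation handles $Y_uZ_v$. Reinstating the coefficients and the $-\tfrac{w_{uv}}{2}$ prefactor reproduces the first two terms of \cref{ma_qaoa_full_exp}.

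The term $Y_uY_v$ is the crux. Here I conjugate $Y_u$ and $Y_v$ independently, expanding each over the subsets $S_u\subseteq\mathcal N(u)$ and $S_v\subseteq\mathcal N(v)$ of active edges. For the product to be an all-$I$/$X$ string, every deposited $Z_w$ on a qubit $w\neq u,v$ must be cancelled, which forces the triangle-vertex parts of $S_u$ and $S_v$ to coincide on a common set $T\subseteq F$ and forbids active edges into $d\setminus F$ and $e\setminus F$ (producing the $\prod_{w\in d\setminus F}\cos\gamma'_{uw}\prod_{w\in e\setminus F}\cos\gamma'_{wv}$ prefactor). The remaining freedom is whether each endpoint of the edge $\{u,v\}$ is active; demanding that qubits $u$ and $v$ both land on $X$ (not $Y$) imposes parity conditions that split the branches into an even-$|T|$ family and an odd-$|T|$ family. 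I would then establish two cancellations: the even-$|T|$ branches occur in sign-opposite pairs and cancel, while the two odd-$|T|$ branches recombine with weights $\cos^2\gamma'_{uv}$ and $\sin^2\gamma'_{uv}$, eliminating all $\gamma'_{uv}$ dependence. What survives is $\prod_{w\in e\setminus F}\cos\gamma'_{wv}\prod_{w\in d\setminus F}\cos\gamma'_{uw}\sum_{T\subseteq F,\ |T|\text{ odd}}\prod_{f\in T}\sin\gamma'_{uf}\sin\gamma'_{vf}\prod_{f\in F\setminus T}\cos\gamma'_{uf}\cos\gamma'_{vf}$. Expanding $\cos(\gamma'_{uf}\pm\gamma'_{vf})=\cos\gamma'_{uf}\cos\gamma'_{vf}\mp\sin\gamma'_{uf}\sin\gamma'_{vf}$ identifies this odd-parity sum with $-\tfrac12\big(\prod_{f\in F}\cos(\gamma'_{uf}+\gamma'_{vf})-\prod_{f\in F}\cos(\gamma'_{uf}-\gamma'_{vf})\big)$, and combining with the coefficient $\sin2\beta_u\sin2\beta_v$ and the $-\tfrac{w_{uv}}{2}$ prefactor gives the third term, including its factor of $\tfrac12$.

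The main obstacle is the bookkeeping in the $Y_uY_v$ term: correctly tracking the signs generated by the alternating $Y\!\leftrightarrow\!X$ flips together with the factors of $i$ produced when a deposited $Z_u$ or $Z_v$ meets the $Y_u$ or $Y_v$ already present, and verifying that these combine to give precisely the even-$|T|$ cancellation and the real $\cos^2+\sin^2=1$ collapse; everything else is routine. As a shortcut I note that \cref{ma_qaoa_theorem} can also be recovered as the special case of the XQAOA main theorem (\cref{XQAOA_Full_Thm}) in which the Pauli-$Y$ mixer amplitude is switched off, so an independent derivation need not be repeated once that theorem is in hand.
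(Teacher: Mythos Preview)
Your proposal is correct, and in fact your closing remark is exactly the paper's own proof: the paper derives \cref{ma_qaoa_theorem} in one line by setting all $\alpha_i=0$ in the XQAOA formula \cref{xqaoa_full_exp} of \cref{XQAOA_Full_Thm}. Your direct Heisenberg-picture computation---conjugating $Z_uZ_v$ first by the mixer and then by the problem unitary, and retaining only all-$I/X$ Pauli strings against $\ket{+}^{\otimes n}$---is precisely the machinery the paper deploys to prove \cref{XQAOA_Full_Thm} itself (Appendix~D.2), merely specialised to four mixer terms instead of nine. So the two routes coincide at the level of technique; you have simply elected to redo the relevant subset of the XQAOA calculation rather than quote it.

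One small simplification you are missing in the $Y_uY_v$ branch: the edge $\{u,v\}$ term $e^{i\gamma'_{uv}Z_uZ_v/2}$ commutes with $Y_uY_v$ (since $Z_uZ_v$ anticommutes with each factor separately), so the $\gamma'_{uv}$ dependence drops out immediately without any ``$\cos^2+\sin^2$'' recombination. This is how the paper handles it (cf.\ \cref{eq:gammaXX}, where $\Gamma_{YY}$ contains no $C'_{uv}$ factor), and it spares you the delicate branch-pairing you flag as the main bookkeeping hazard.
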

We present, in \cref{MA_QAOA_Proof}, a proof of \cref{ma_qaoa_theorem}, which again, is a straightforward corollary of our main theorem (\cref{XQAOA_Full_Thm}). Like \cref{qaoa_formula}, the expectation value $\left\langle C_{u v}\right\rangle_{\mathrm{MA}}$ for any edge in a graph depends on only its neighbouring nodes and edges, and the overall expectation value is the sum of the expectation values over all edges in the graph; hence, computing \cref{ma_qaoa_full_exp} for an arbitrary graph has a time complexity of $O(n^3)$. However, this time complexity has a larger constant prefactor compared to that of computing \cref{qaoa_formula}. While QAOA$_1$ involves only two hyperparameters regardless of the size of the problem, MA-QAOA$_1$ involves $|V| + |E| = O(n) + O(n^2)$ classical hyperparameters.

\section{Expressive QAOA (XQAOA)} \label{xqaoa}
The eXpressive QAOA builds on MA-QAOA by introducing an additional $\boldsymbol{\alpha}$-dependent unitary operator to the mixing Hamiltonian. Let us define the $\boldsymbol{\alpha}$-dependent operator to be the following product of commuting one-qubit operators:
\begin{equation}
U (\boldsymbol A, \boldsymbol{\alpha}) = e^{-i \sum_j \alpha_j A_j} = \prod\limits^n_{j=1}\, e^{-i \alpha_j  Y_j},
\end{equation}
where $\boldsymbol{A} = (A_i)_{i=1,\ldots,n}$, and $\alpha \in [0, \pi)$\footnote{Due to the $\alpha\rightarrow \alpha+\pi$ and $\beta\rightarrow \beta+\pi$ translational symmetries of the QAOA output state, one could without loss of generality assume that $\alpha$ and $\beta$ lie in the interval $[0, \pi)$. For the purposes of our simulations though, we do not place such an explicit restriction, since $\alpha$ and $\beta$ repeat in intervals of $\pi$ (in addition, for unweighted graphs, $\gamma$ repeats in intervals of $2\pi$) anyways. The data in \cref{xqaoa_angles} were adjusted to fit the ranges mentioned in this paper.}.

The mixing unitary is then given by the product of the $U (\boldsymbol B, \boldsymbol{\beta})$ and $U (\boldsymbol A, \boldsymbol{\alpha})$ unitary operators:
\begin{align}
\label{eq:ma-qaoa-mixer}
    U (\boldsymbol A, \boldsymbol{\alpha}) U (\boldsymbol B, \boldsymbol{\beta}) &= e^{-i \sum_j \alpha_j A_j} e^{-i \sum_j \beta_j B_j}\nonumber\\[0.25cm]
    &= \prod\limits^n_{j=1}\, e^{-i \alpha_j  Y_j} e^{-i \beta_j  X_j}.
\end{align}

Thus at $p=1$, XQAOA generates an angle-dependent quantum state of the form
\begin{equation}
\begin{split}
    \ket{\boldsymbol{\gamma}, \boldsymbol{\beta}, \boldsymbol{\alpha}} &=  U (\boldsymbol A, \boldsymbol{\alpha}) U (\boldsymbol B, \boldsymbol{\beta}) \, U (\boldsymbol C, \boldsymbol{\gamma})\, \ket{s} \\[0.25cm]
    &=  \prod\limits^n_{j=1}\, e^{-i \alpha_j  Y_j} e^{-i \beta_j  X_j} \prod_{\mu =1}^m e^{-i \gamma_{\mu} C_{\mu}} \ket{s},
\end{split}
\label{xqaoa_angle_dependent_state}
\end{equation}
where $\boldsymbol{\alpha} = [\alpha_1, \alpha_2, \dots, \alpha_n]$, $\boldsymbol{\beta} = [\beta_1, \beta_2, \dots, \beta_n]$, and $\boldsymbol{\gamma} = [\gamma_1, \gamma_2, \dots, \gamma_m]$. Similarly to the \cref{maqaoa_angle_dependent_state}, the subscript in $\gamma_i$ denotes the $i$-th clause, and the subscripts in $\alpha_i$ and $\beta_i$ refer to the $i$-th qubit, which in the context of MaxCut correspond to the edges and vertices, respectively, of the graph.

One motivation for introducing the XQAOA is that, unlike QAOA and MA-QAOA, the XY mixer\footnote{The XY mixer used in XQAOA differs from Wang et al.~\cite{wang2020x}'s approach. We utilise a single-qubit mixer to increase the range of Hilbert Space explored for binary combinatorial optimisation, while Wang et al.~employs a multi-qubit mixer in the Quantum Alternating Operator Ansatz to confine the search space to feasible solutions in integer-valued optimisation problems.} in \cref{eq:ma-qaoa-mixer} is the most general product (with respect to the $n$ registers in the circuit) unitary operator one could write for $p=1$ XQAOA, up to an unphysical global phase incurred when the system is measured immediately after the mixer unitary is applied. This makes XQAOA a natural generalisation of QAOA to consider, as one aims to maximise the expressiveness of the ansatz given the limitations on its depth, and also gives XQAOA the ability to output any computational-basis state given appropriate angles $\boldsymbol{\gamma}$, $\boldsymbol{\beta}$, and $\boldsymbol{\alpha}$. To see this, note that if we set the angles $\boldsymbol{\gamma} = \boldsymbol{\beta} = \mathbf{0}$ in \eqref{xqaoa_angle_dependent_state}, we are left with single-qubit Y-rotations on the $\ket +$ states. Choosing appropriate angles $\alpha_j$ on each qubit will bring $\ket +$ to $\ket 0$ or $\ket 1$. The same is true for when $\boldsymbol{\gamma} = \mathbf{0}$ and $\boldsymbol{\alpha} = \boldsymbol{\beta}$. Consequently, as we mentioned in \cref{introduction}, XQAOA is able to eschew any reachability deficits~\cite{akshay2020reachability,akshay2021reachability}.

\begin{table}[htpb]
\centering
\begin{tabularx}{\columnwidth}{>{\centering\arraybackslash}X >{\centering\arraybackslash}X}
\toprule
\textbf{Ansatz} & \textbf{No. of Parameters} \\
\midrule
MA-QAOA$_p$    & $(n+m)p$                      \\[0.3em]
XQAOA$^{\text{XY}}_p$  & $(2n+m)p$                     \\[0.3em]
XQAOA$^{\text{Y}}_p$   & $(n+m)p$                      \\[0.3em]
XQAOA$^{\text{X=Y}}_p$ & $(n+m)p$                      \\
\bottomrule
\end{tabularx}
\captionsetup{justification=raggedright, singlelinecheck=false}
\caption{Summary of the XQAOA Ansatz family and the
associated number of free parameters for $p$ iterations of the ansatz for MaxCut on graphs with $n$ vertices and $m$ edges.}
\label{XQAOA_ansatz_family}
\end{table}

From \cref{xqaoa_angle_dependent_state}, it is clear that several variations of the XQAOA ansatz can be generated by placing restrictions on the allowed angles of the mixing unitaries. The MA-QAOA is a special case of the XQAOA ansatz obtained by setting all $\alpha_i = 0$. Other configurations of the XQAOA ansatz worth noting are those with the XY Mixer, Y Mixer, and the X=Y Mixer, respectively. The XY Mixer is the most general mixer and uses individual angles $\alpha_i$, $\beta_i$ for each unitary in the mixing Hamiltonian. The Y Mixer consists of only Pauli-Y gates and is obtained by setting all $\beta_i$ to zero. The X=Y Mixer includes both Pauli-X and Pauli-Y gates but uses a single angle for both, with $\alpha_i$ equal to $\beta_i$.

As we summarise in \cref{XQAOA_ansatz_family}, the XQAOA ansatzes with the XY mixer, Y mixer, and X=Y mixer for $n$ qubits and $m$ clauses require the classical optimisation of $2n + m$, $n + m$, and $n + m$ angles, respectively. While the performances of these mixers are not known \emph{a priori}, the XY mixer is expected to have a higher computational overhead than the other two mixers due to the presence of an additional $n$ classical parameters. The X=Y mixer is expected to perform better than the Y mixer because it is able to trace a larger portion of the Bloch sphere due to its non-trivial trajectory, whereas the Y mixer is limited to the XZ plane.

In the remainder of the paper, we will use the superscript notation to indicate the specific variant of the XQAOA ansatz being used, i.e.~$\XQAOA{XY}$, $\XQAOA{X=Y}$, and $\XQAOA{Y}$ refer to $p=1$ XQAOA with the XY, X=Y, and Y mixers, respectively. The next theorem---the main theorem of this paper---allows us to calculate the expectation value of the cost function for $\XQAOA{XY}$ for MaxCut on arbitrary weighted graphs, which in turn allows us to evaluate the performance of XQAOA.
\begin{theorem}
Consider the $\mathrm{XQAOA}_1^{\mathrm{XY}}$ state $\ket{\boldsymbol{\gamma}, \boldsymbol{\beta}, \boldsymbol{\alpha}}$ for MaxCut on an arbitrary weighted graph $G$.  Then, the expectation value of $C$ in $\ket{\boldsymbol{\gamma},\boldsymbol{\beta}, \boldsymbol{\alpha}}$ is $\bra{\boldsymbol{\gamma},\boldsymbol{\beta}, \boldsymbol{\alpha}} C\ket{\boldsymbol{\gamma},\boldsymbol{\beta}, \boldsymbol{\alpha}}= \sum_{\{u,v\}\in E} \langle C_{uv}\rangle_{\mathrm{XY}}$, where
\begin{widetext}
\begin{align}
    \left\langle C_{u v}\right\rangle_{\mathrm{XY}}  &= \frac{w_{uv}}{2}+\frac{w_{uv}}{2}\Bigg[\cos 2 \alpha_u \cos 2 \alpha_v \sin \gamma_{uv}' \left(\cos 2 \beta_u \sin 2 \beta_v  \prod_{w \in e}\cos \gamma_{wv}' +  \sin 2 \beta_u  \cos 2 \beta_v  \prod_{w \in d}\cos \gamma_{uw}' \right)   \nonumber\\[0.5em]
    & - \frac{1}{2} \sin 2 \alpha_u \sin 2 \alpha_v \prod_{\substack{w \in e \\ w \notin F}}\cos\gamma_{wv}' \prod_{\substack{w \in d \\ w \notin F}}\cos \gamma_{uw}' \left( \prod_{f \in F} \cos(\gamma_{uf}' + \gamma_{vf}') + \prod_{f \in F} \cos(\gamma_{uf}' - \gamma_{vf}') \right) \nonumber\\[0.5em]
    &  + \frac{1}{2} \cos 2\alpha_u \sin 2 \beta_u \cos 2\alpha_v \sin 2 \beta_v \prod_{\substack{w \in e \\ w \notin F}}\cos\gamma_{wv}' \prod_{\substack{w \in d \\ w \notin F}}\cos \gamma_{uw}' \left( \prod_{f \in F} \cos(\gamma_{uf}' + \gamma_{vf}') - \prod_{f \in F} \cos(\gamma_{uf}' - \gamma_{vf}') \right) \Bigg] 
\label{xqaoa_full_exp}
\end{align}
\end{widetext}
and $\gamma_{jk}' = \gamma_{jk}w_{jk}$.
\label{XQAOA_Full_Thm}
\end{theorem}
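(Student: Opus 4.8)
The plan is to compute the single-edge contribution $\langle C_{uv}\rangle_{\mathrm{XY}} = \tfrac{w_{uv}}{2} - \tfrac{w_{uv}}{2}\brake{Z_uZ_v}$ by evaluating $\brake{Z_uZ_v}$ in the Heisenberg picture: push the observable $Z_uZ_v$ back through the mixer, then through the phase separator, and finally take the expectation in $\ket{s}=\ket{+}^{\otimes n}$. Since every factor of $U(\boldsymbol A,\boldsymbol\alpha)U(\boldsymbol B,\boldsymbol\beta)$ is a single-qubit rotation, its conjugation of $Z_u$ (resp.\ $Z_v$) stays local to qubit $u$ (resp.\ $v$). Using the elementary identity $e^{i\theta P}Qe^{-i\theta P}=Q\cos 2\theta + iPQ\sin 2\theta$ for anticommuting Paulis, I would first establish
\begin{align}
\widetilde Z_u &:= e^{i\beta_u X_u}e^{i\alpha_u Y_u}\,Z_u\,e^{-i\alpha_u Y_u}e^{-i\beta_u X_u} \nonumber\\
&= \cos 2\alpha_u\cos 2\beta_u\,Z_u + \cos 2\alpha_u\sin 2\beta_u\,Y_u \nonumber\\
&\quad - \sin 2\alpha_u\,X_u,
\end{align}
and the analogue $\widetilde Z_v$. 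Expanding $\widetilde Z_u\widetilde Z_v$ yields nine two-qubit terms $P_uQ_v$ with $P,Q\in\{X,Y,Z\}$, whose coefficients are products of the $\cos/\sin$ factors of $2\alpha$ and $2\beta$ above.

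Next I would conjugate each of these nine terms through $U(\boldsymbol C,\boldsymbol\gamma)=\prod_\mu e^{-i\gamma_\mu C_\mu}$. Every $Z$ operator commutes with each $C_\mu$, so only the in-plane ($X,Y$) components evolve; conjugating $X_u$ through an incident edge unitary $e^{-i\gamma_{uw}C_{uw}}$ gives $X_u\cos\gamma'_{uw}$ plus an $X\!\leftrightarrow\!Y$-flipped term depositing a factor $Z_w$ weighted by $\pm\sin\gamma'_{uw}$ (the sign depending on whether the flipped operator was $X$ or $Y$), with $\gamma'_{uw}=\gamma_{uw}w_{uw}$. Iterating over all edges incident to $u$ expands each in-plane operator into a sum over subsets of $\mathcal N(u)$: each neighbour $w$ either leaves the operator unflipped with weight $\cos\gamma'_{uw}$, or flips it and deposits $Z_w$ with weight $\pm\sin\gamma'_{uw}$. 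Taking the $\ket s$ expectation with $\bra{+}X\ket{+}=1$ and $\bra{+}Y\ket{+}=\bra{+}Z\ket{+}=0$, the only surviving configurations are those in which the final in-plane operator on each of $u,v$ is $X$ (fixing the parity of flips) and every deposited $Z_w$ is cancelled. A $Z_w$ on a neighbour of $u$ alone (or $v$ alone) can never be cancelled, so such terms vanish; cancellation is possible only on the edge $\{u,v\}$ itself or on a common neighbour $f\in F=\mathcal N(u)\cap\mathcal N(v)$.

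This sorting is the crux and the main obstacle. It splits into two families. The $Z_uY_v$ and $Y_uZ_v$ terms survive only when the edge $\{u,v\}$ is flipped exactly once, cancelling the standing $Z$ against the deposited one and leaving an $X$ on the opposite vertex that then passes unflipped through its remaining incident edges; these produce the first bracket of \eqref{xqaoa_full_exp}, with prefactor $\cos 2\alpha_u\cos 2\alpha_v\sin\gamma'_{uv}$ and the products $\prod_{w\in e}\cos\gamma'_{wv}$ and $\prod_{w\in d}\cos\gamma'_{uw}$. The $X_uX_v$ and $Y_uY_v$ terms survive through common neighbours: a $Z_f$ deposited from the $u$-side and one from the $v$-side at the same $f\in F$ multiply to the identity, and summing the two per-triangle sign choices yields the $\prod_{f\in F}\cos(\gamma'_{uf}\pm\gamma'_{vf})$ combinations, giving the $\sin 2\alpha_u\sin 2\alpha_v$ and $\cos 2\alpha_u\sin 2\beta_u\cos 2\alpha_v\sin 2\beta_v$ brackets. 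The hard part is this combinatorial bookkeeping over subsets of neighbours and triangles while tracking the Pauli-flip signs, together with verifying that the $X_uY_v$, $Y_uX_v$, $Z_uX_v$, $X_uZ_v$ and $Z_uZ_v$ sectors all cancel. Once assembled, setting $\boldsymbol\alpha=\mathbf 0$ should collapse \eqref{xqaoa_full_exp} to \cref{ma_qaoa_full_exp}, and further identifying the angles should recover \cref{qaoa_formula}; I would use both limits as consistency checks on the signs and combinatorial factors.
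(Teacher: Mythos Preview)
Your proposal is correct and follows essentially the same route as the paper's proof: conjugate $Z_uZ_v$ through the mixer to obtain the nine $P_uQ_v$ terms, propagate each through the phase separator via a subset expansion over the incident edges, take the $\ket{+}^{\otimes n}$ expectation to kill all but the four sectors $\xi(Z,Y)$, $\xi(Y,Z)$, $\xi(X,X)$, $\xi(Y,Y)$ you identify, and then resum the triangle contributions into the $\prod_{f\in F}\cos(\gamma'_{uf}\pm\gamma'_{vf})$ factors. The paper formalises your ``subset of neighbours'' expansion as an explicit sum over $\mathbb F_2$-indexed bitstrings and isolates the per-triangle resummation you sketch as a standalone lemma (the identities $\prod_i\cos(x_i-y_i)\pm\prod_i\cos(x_i+y_i)=2\sum_{|\mu|\text{ even/odd}}\prod_i\cos^{1-\mu_i}x_i\cos^{1-\mu_i}y_i\sin^{\mu_i}x_i\sin^{\mu_i}y_i$), but the mechanics are identical.
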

We present a proof of \cref{XQAOA_Full_Thm} in \cref{xqaoa_proof1}. Like \cref{qaoa_formula} and \cref{ma_qaoa_full_exp}, the expectation value of any edge in a graph $\left\langle C_{u v}\right\rangle_{\mathrm{XY}}$ is determined by its neighbouring nodes and edges, and the overall expectation value is the sum of the expectation values of all edges in the graph. Calculating \cref{xqaoa_full_exp} for an arbitrary graph also has a time complexity of $O(n^3)$, but has a larger prefactor compared to both \cref{qaoa_formula} and \cref{ma_qaoa_full_exp}. While QAOA$_1$ requires only two parameters regardless of the problem size and MA-QAOA$_1$ requires $n + n^2$ parameters, $\XQAOA{XY}$ requires $2n + n^2$ parameters. In contrast, both $\XQAOA{Y}$ and $\XQAOA{X=Y}$ require $n + n^2$ parameters.

In the following corollary, we show that for unweighted graphs with edges of odd edge degrees, the $\XQAOA{Y}$ ansatz can solve MaxCut exactly. Here, the edge degree $d(e)$ of an edge $e = \{u, v\}\in E $ is defined as the number of neighbours of $e$, i.e., $ d(e) = |\mathcal N(u)\cup \mathcal N(v)|-2$.

\begin{corollary}
Consider an unweighted graph $G$ where the edge degree of every edge is odd. Then, when $\gamma = \pi$ and $\alpha = \frac{\pi}{4}$, the $\XQAOA{Y}$ state $\ket{\gamma, \alpha}$ provides the exact MaxCut solution for $G$, where $\ket{\gamma, \alpha}$ denotes the state in \cref{xqaoa_angle_dependent_state} where all $\gamma_i = \gamma$, $\beta_i = 0$, and $\alpha_i=\alpha$.
\label{XQAOA_Exact_Thm}
\end{corollary}
We give a proof of corollary~\ref{XQAOA_Exact_Thm} in \cref{xqaoa_proof2}. One consequence of corollary~\ref{XQAOA_Exact_Thm} is that it allows us to identify a graph instance for which we can analytically prove a separation between $\XQAOA{Y}$ and $\mathrm{QAOA}_1$. Our next corollary elucidates this result.
\begin{corollary}
    For the unweighted 5-vertex star graph $G$, $\XQAOA{Y}$ with optimal angles (say, from corollary~\ref{XQAOA_Exact_Thm}) computes the MaxCut of $G$ with an expected (and worst-case) approximation ratio of 1, whereas the expected approximation ratio of $\mathrm{QAOA}_1$ with optimal angles is only 0.75. 
\label{cor:proof_separation}
\end{corollary}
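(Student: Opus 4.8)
The plan is to treat the two ansatzes separately: the perfect ratio for $\XQAOA{Y}$ follows almost immediately from corollary~\ref{XQAOA_Exact_Thm}, while the QAOA$_1$ ratio reduces to a short single-variable optimisation via \cref{eq:qaoa_unweighted}.

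First I would fix the structure of the 5-vertex star $G = K_{1,4}$: one central vertex adjacent to four leaves, giving four edges and (being bipartite) no triangles. Its MaxCut is clearly $4$, attained by placing the centre in one part and all leaves in the other. To invoke corollary~\ref{XQAOA_Exact_Thm} I must verify that every edge has odd edge degree. For any edge $\{u,v\}$ with $u$ the centre and $v$ a leaf, $\mathcal N(u)$ is the set of four leaves and $\mathcal N(v) = \{u\}$, so $\mathcal N(u)\cup\mathcal N(v)$ comprises all five vertices and $d(e) = 5 - 2 = 3$, which is odd. Corollary~\ref{XQAOA_Exact_Thm} then guarantees that $\XQAOA{Y}$ with $\gamma = \pi$, $\alpha = \pi/4$ prepares a state yielding the exact MaxCut, so its expected (and hence worst-case) approximation ratio is $1$.

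For QAOA$_1$ I would specialise \cref{eq:qaoa_unweighted}. Taking $u$ as the centre and $v$ as a leaf gives $|e| = 0$ (the leaf has no other neighbours), $|d| = 3$ (the centre's three remaining leaves), and $|F| = 0$ (triangle-free). The $\sin^2 2\beta$ term then vanishes because $\cos^{|F|} 2\gamma - 1 = 0$, leaving $\langle C_{uv}\rangle = \frac12 + \frac14 \sin 4\beta\, \sin\gamma\,(1 + \cos^3\gamma)$ for each of the four symmetry-equivalent edges. Summing yields $\langle C\rangle = 2 + \sin 4\beta\,\sin\gamma\,(1+\cos^3\gamma)$. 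Setting $\sin 4\beta = 1$ and maximising $f(\gamma) = \sin\gamma\,(1+\cos^3\gamma)$, I would differentiate to obtain the factorisation $f'(\gamma) = \cos\gamma\,(\cos\gamma + 1)(2\cos\gamma - 1)^2$, whose roots in $[0,\pi]$ are $\gamma = \pi/2,\ \pi,\ \pi/3$; evaluating $f$ at these shows the global maximum is $f(\pi/2) = 1$. Hence $\max \langle C\rangle = 3$ and the optimal QAOA$_1$ ratio is $3/4 = 0.75$, establishing the claimed separation.

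The computation is routine once the neighbourhood data $(|e|,|d|,|F|) = (0,3,0)$ are read off; the only genuine work is the single-variable optimisation, and the main subtlety is the clean factorisation of $f'(\gamma)$ that certifies $\gamma = \pi/2$ as the maximiser rather than the competing interior critical point $\gamma = \pi/3$, where $f = 9\sqrt3/16 \approx 0.97 < 1$. I expect no obstacle beyond confirming that this interior competitor, together with the endpoints, does not exceed the value at $\gamma = \pi/2$.
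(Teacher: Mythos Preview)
Your proposal is correct and essentially mirrors the paper's own proof: both invoke corollary~\ref{XQAOA_Exact_Thm} after checking odd edge degrees for the $\XQAOA{Y}$ part, and for QAOA$_1$ both specialise \cref{eq:qaoa_unweighted} to $(|e|,|d|,|F|)=(0,3,0)$, obtain the same factorisation $f'(\gamma)=\cos\gamma(\cos\gamma+1)(2\cos\gamma-1)^2$, and compare the critical value $9\sqrt{3}/16$ against $1$ to conclude the ratio is $3/4$.
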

We give a proof of corollary~\ref{cor:proof_separation} in \cref{app:proof_separation}. While our result above pertains to the 5-vertex star graph (see~\cref{fig:4-star}), one can readily generalise this proof to any $t$-vertex star graph, where $t\geq 5$ is odd (here, the oddness criterion arises because it is only for odd-vertex star graphs that the edge degrees of the graph are all odd). The statement that QAOA$_1$ achieves an optimal expected approximation ratio of 3/4 for these graphs could be considered a finite-dimensional analogue of \cite[Section IV]{herrman2022multi}'s result that in the limit as the number of vertices tends to infinity, the performance of QAOA$_{1}$ approaches 0.75 for star graphs. In terms of the expected approximation ratio that can be achieved, this infinite class of graphs instantiates a clear advantage that XQAOA has over QAOA.

\begin{figure}
    \begin{tikzpicture}
    \node[circle,fill=black,scale=0.6] at (10:0mm) (center) {};
    \foreach \n in {1,...,4}{
        \node[circle,fill=black,scale=0.6] at ({\n*360/4}:1cm) (n\n) {};
        \draw (center)--(n\n);
    }
\end{tikzpicture}
\captionsetup{justification=raggedright, singlelinecheck=false}
\caption{Diagrammatic representation of the 5-vertex star graph $S_4$, which we use to show an advantage that XQAOA has over QAOA. More specifically, we show that while $\XQAOA{Y}$ can find the MaxCut of $S_4$ with an approximation ratio of 1, QAOA$_1$ can achieve an approximation ratio of at most 3/4.}
    \label{fig:4-star}
\end{figure}
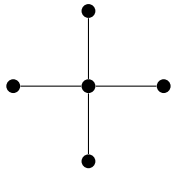

\section{Computational Results} \label{computational_results}
\begin{figure}[htbp]
\captionsetup{justification=raggedright, singlelinecheck=false}
\centering
\includegraphics[width=\columnwidth]{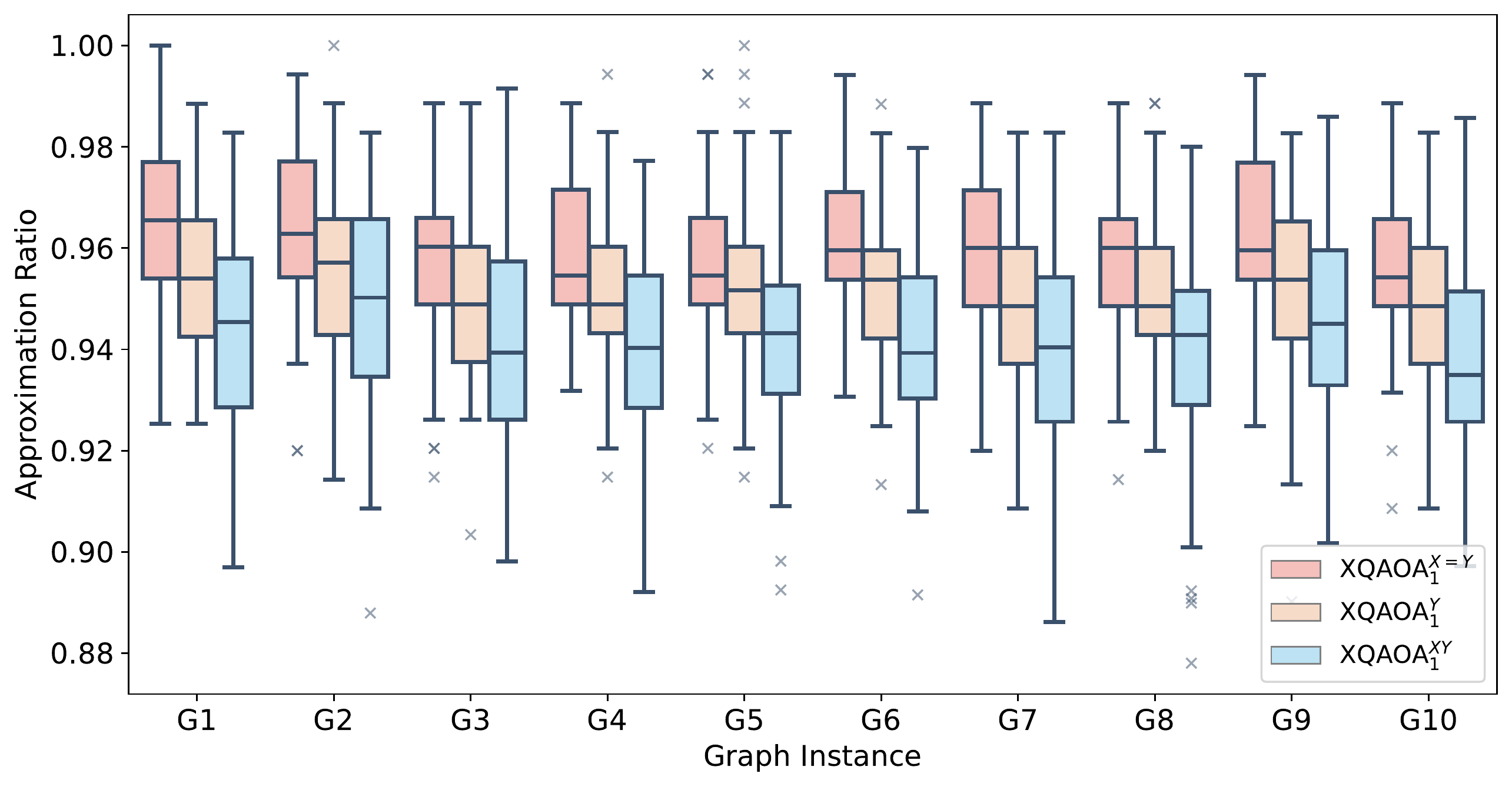}
    \caption{Improvement of $\XQAOA{X=Y}$ ansatz over the $\XQAOA{XY}$ and $\XQAOA{Y}$ ansatz for ten different instances of 3-regular graphs with 128 vertices. For each of the three ansatz variants, the Parallel-LBFGS optimiser was run 100 times with random initial values for each of the ten graph instances.}
    \label{xqaoa_variants}
\end{figure}

\begin{figure*}[htbp]
  \includegraphics[width=\textwidth]{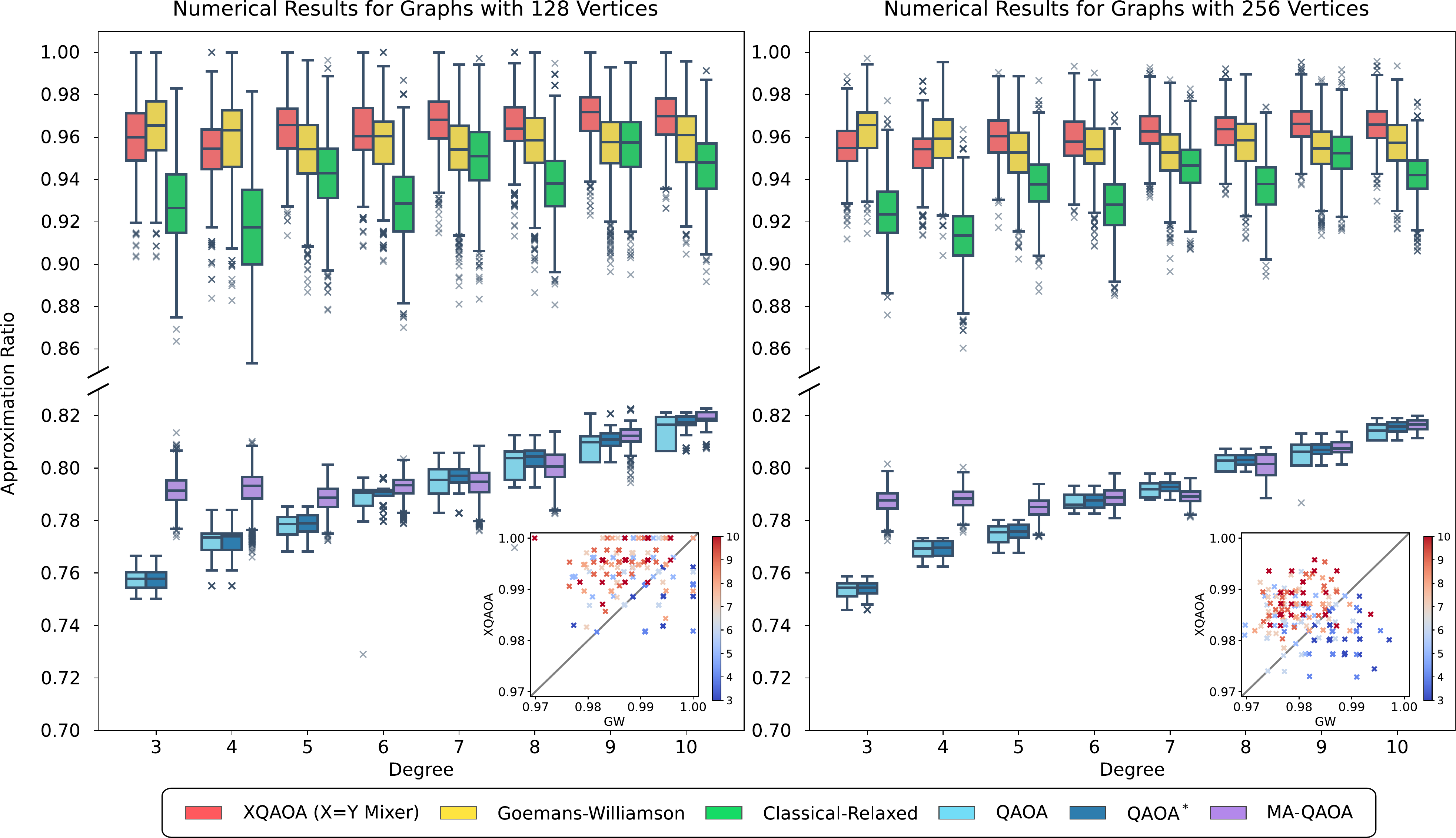}
  \captionsetup{justification=raggedright, singlelinecheck=false}
  \caption{This boxplot compares the sizes of cuts obtained by quantum and classical algorithms against the maximum cut size found by GUROBI. The data used to create the boxplots comes from 100 random cuts generated using the GW algorithm, 100 random initialisations for several other algorithms ($\XQAOA{X=Y}$, MA-QAOA$_1$, QAOA$_1$, and CR), and 100 informed initialisations for QAOA$_1$ (labelled as QAOA$^*$) applied to 25 different instances of regular graphs with 128 or 256 nodes and degree values ranging from 3 to 10. The whiskers extend up to data points within 1.5 times the interquartile range from the upper and lower quartiles, and crosses represent outliers. For clarity, outliers of QAOA$_1$, with approximation ratios less than $0.7$, that resulted from the barren plateau have been omitted. The inset scatterplot compares the best-found solutions for 100 runs for the $\XQAOA{X=Y}$ and the GW algorithm on all the graph instances. The points in the upper-left corner show that $\XQAOA{X=Y}$ performs better than GW, while the opposite is true for points in the lower-right corner. The colour of the points indicates the degree of the corresponding graph.}
  \label{benchmark_plots}
\end{figure*}

We evaluate the performance of the XQAOA algorithm by benchmarking it on the MaxCut problem on unweighted $D$-regular graphs that were generated using an algorithm developed by Steger and Wormald~\cite{steger1999generating}. Since the three different configurations $\XQAOA{XY}$, $\XQAOA{X=Y}$ and $\XQAOA{Y}$ of XQAOA that we consider have performances that are not known \emph{a priori}, we first benchmark them on 10 randomly generated instances of $3$-regular graphs with $128$ vertices. The best-performing configuration of XQAOA$_1$ is then benchmarked against MA-QAOA$_1$, QAOA$_1$, and the CR and GW algorithms on 25 instances of $D$-regular graphs with $128$ and $256$ vertices for $3 \leq D \leq 10$. The XQAOA$_1$, MA-QAOA$_1$, QAOA$_1$, and CR algorithms are benchmarked by performing 100 runs of the classical optimiser with random initial points, whereas the GW algorithm is benchmarked by first solving the relaxed problem and then generating 100 random vectors for hyperplane rounding. For an explanation of the GW algorithm, we refer the reader to \cref{gw_algorithm}. The CR algorithm computes the MaxCut by simply running the optimiser on the relaxed version of \cref{maxcut_quadratic_program}, i.e.
\begin{equation}
\begin{array}{ll@{}ll}
\text{Maximise}  & \displaystyle\sum\limits_{\{u,v\} \in E} \frac{1}{2} w_{uv} \left(1 - \sin\theta_u\sin\theta_v \right) ,
\end{array}
\label{maxcut_relaxed}
\end{equation}
where the maximisation is performed over angles $\theta_u$ and $\theta_v$. 

To compute the approximation ratio, we need to obtain the exact MaxCut values, which we did using the GUROBI solver~\cite{gurobi}, a widely used industry tool. Although proving optimality with GUROBI takes exponential time, it can find solutions quickly. GUROBI was able to identify optimal solutions for 128-node graphs and near-optimal solutions for 256-node graphs with at most $6\%$ MIPGap\footnote{The MIPGap is the gap between the lower and upper objective bound divided by the absolute value of the incumbent objective value.}. To compute the expectation values of QAOA$_1$, MA-QAOA$_1$, and XQAOA$_1$ for large $n$, we used the analytical results from theorems~\ref{qaoa_hadfield1},~\ref{ma_qaoa_theorem}, and~\ref{XQAOA_Full_Thm}. The Parallel-LBFGS algorithm was used to optimise the variational parameters of QAOA$_1$, MA-QAOA$_1$, XQAOA$_1$, and the CR algorithm.

Finally, to assess the performance of quantum algorithms at increased depths, we expanded our benchmarking to include depths ranging from 1 to 5 for the most effective XQAOA variant, as well as QAOA and MA-QAOA. Due to the lack of analytical formulas for larger $p$ values and the significant computational complexity associated with simulating deep quantum circuits, we conducted our benchmarks using the Qiskit~\cite{Qiskit} simulator on small graphs.

\subsection{Benchmark Results for \texorpdfstring{$p = 1$}{p = 1}} \label{benchmark_results_unit_depth}

Our comparative analysis of the three XQAOA variants on $3$-regular graphs revealed that the $\XQAOA{X=Y}$ variant performed the best (see \cref{xqaoa_variants}). When benchmarked against MA-QAOA$_1$, QAOA$_1$, CR, and the GW algorithm on $D$-regular graphs with 128 and 256 vertices for $3 \leq D \leq 10$, $\XQAOA{X=Y}$ consistently outperformed QAOA$_1$, MA-QAOA$_1$, and the CR algorithm on all graph instances. Notably, $\XQAOA{X=Y}$ demonstrated competitive performance against the GW algorithm for $3$ and $4$-regular graphs and exceeded it for $D$-regular graphs with $D > 4$ (see \cref{benchmark_plots}\footnote{Here, we note that in \cref{benchmark_plots}, the boxplots for the $\XQAOA{X=Y}$, CR, and GW algorithms show the distributions of the approximation ratios obtained for individual solutions. Notably, for $\XQAOA{X=Y}$, individual solutions were classically extracted by leveraging the quantum-classical transition, a phenomenon detailed in \cref{qc_transition}. In contrast, the boxplots for QAOA$_1$, QAOA$^*_1$, and MA-QAOA$_1$ show the distributions of expected approximation ratios; we do this in lieu of computing the approximation ratios of individual solutions, as the latter would require implementation of the quantum algorithm on a quantum computer in order to obtain samples from measuring the output states of the circuits involved. Hence, the actual solutions obtained from QAOA$_1$, QAOA$^*_1$, and MA-QAOA$_1$ may differ from the expected approximation ratios shown in the boxplots, and could be either higher or lower. For an extended discussion of the distinction between using expected approximation ratios and approximation ratios of individual samples, we refer the reader to Larkin \textit{et al.}~\cite{larkin2022evaluation}.}). The boxplots reveal that the lower, middle and upper quartile values of $\XQAOA{X=Y}$ are significantly higher than those of the GW algorithm for graphs with $D > 4$, indicating that $\XQAOA{X=Y}$ is more likely to produce a better solution irrespective of the parameter initialisation strategy. This robustness of the $\XQAOA{X=Y}$ to initial parameter choices can be attributed to its overparameterised ansatz, which is further explained in \cref{xqaoa_overparameterisation}.

Our analysis also highlighted a linear increase in the approximation ratio of QAOA$_1$ with the degree of the graph. Its performance nears that of MA-QAOA$_1$ for $D>5$, hinting at a possible reachability deficit~\cite{akshay2020reachability,akshay2021reachability}, limiting MA-QAOA$_1$'s ability to find an approximate solution close to the optimal. It is also important to note that the QAOA$_1$ ansatz experiences the barren plateau phenomenon~\cite{mcclean2018barren}, with the size of the plateau increasing with the degree of the graph. To mitigate this, careful selection of initial points for the classical optimiser is crucial, as outlined in \cref{qaoa_optimization}. The results of this strategy are also presented in \cref{benchmark_plots} as QAOA$^*$.

\subsection{Quantum--Classical Transition} \label{qc_transition}
\begin{figure}[htbp]
\includegraphics[width=\columnwidth]{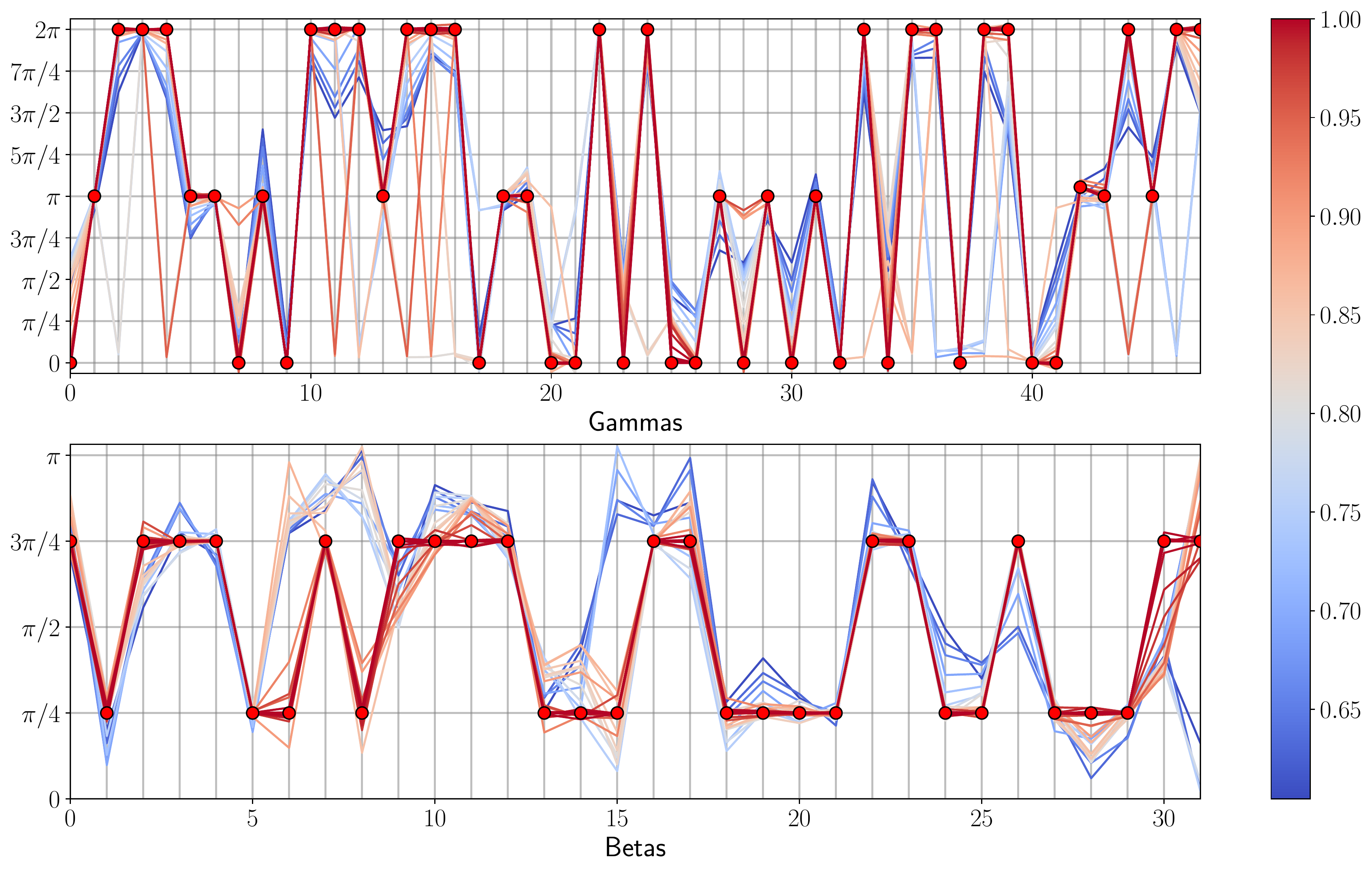}
    \captionsetup{justification=raggedright, singlelinecheck=false}
    \caption{This graph illustrates the changes in the $\boldsymbol{\gamma}$ and $\boldsymbol{\beta}$ angles of the $\XQAOA{X=Y}$ ansatz as the classical optimiser reaches the optimal solution for a 3-regular graph with 32 vertices and 48 edges. Each line in the graph represents a set of evaluated parameters, with the colour indicating the corresponding approximation ratio. The solid red circles in the graph represent the final angles determined by the classical optimiser for the $\XQAOA{X=Y}$ ansatz.}
    \label{xqaoa_angles}
\end{figure}
In our numerical simulations with the $\XQAOA{X=Y}$ ansatz, we observed that as the classical optimiser converged to an optimum, the optimal angles $(\boldsymbol{\gamma}^*, \boldsymbol{\beta}^*)$ stabilised at specific values. Specifically, $\gamma_{uv}^*$ converged to values in $\{0, \pi, 2\pi\}$\footnote{In weighted graphs, the optimal angles $\gamma_{uv}^*$ are scaled by the edge weight $w_{uv}$. The effective optimal angle, discounting the weight prefactor, can be calculated as $\gamma_{uv}^* = w_{uv} \gamma_{uv}^* \mod 2\pi$.}, while $\beta_u^*$ converged to values in $\{\pi/4, 3\pi/4\}$ (see \cref{xqaoa_angles}). When $\gamma_{uv}^* \in \{0, 2\pi\}$, the two-qubit R$_{\mathrm{ZZ}}(0) $ and  R$_{\mathrm{ZZ}}(2\pi)$ gates act as the identity gate, leaving qubits $u$ and $v$ unchanged. When $\gamma_{uv}^*=\pi$, the two-qubit gate R$_{\mathrm{ZZ}}(\pi) = \ket{+,+}\bra{-,-} + \ket{+,-}\bra{-,+} +  \ket{-,+}\bra{+,-} + \ket{-,-}\bra{+,+}$. In other words, it swaps the states $\ket{+}_u \leftrightarrow \ket{-}_u$ and $\ket{+}_v \leftrightarrow \ket{-}_v$. Since the initial state $\ket{s}$ is a product state $\ket{+\ldots+}$, the state after all two-qubit gates will be a product state of $\ket{+}$s and $\ket{-}$s. Specifically, qubit  $u$ will be in the $\ket{+}_u$ state if it is swapped an even number of times and in the $\ket{-}_u$ state if it is swapped an odd number of times. Finally, the action of the mixing gates is to convert the $\ket{+}$ and  $\ket{-}$ states to  $\ket{0}$ or  $\ket{1}$ depending on the value of $\beta^*$:
\begin{align}
\text{When } \beta^*_i = \pi/4\,, & \begin{cases} \ket{+}_i \rightarrow
  \ket{1}_i\\
 \ket{-}_i \rightarrow
  \ket{0}_i\end{cases}  \\
\text{and when } \beta_i^* = 3\pi/4\,, & \begin{cases} \ket{+}_i \rightarrow
  \ket{0}_i\\
 \ket{-}_i \rightarrow
  \ket{1}_i\end{cases}  \,,
\end{align}
allowing us to read off the classical bit-string. It is important to acknowledge that due to degenerate local optimums, a relatively small subset of angles might not stabilise at the specified values even after convergence. Given the benign characteristics of the $\XQAOA{X=Y}$ ansatz's loss landscape, such deviations are highly unlikely. However, in cases where deviations occur, we force $\gamma^*_{uv}$ to take the closest value in $\{0,\pi,2\pi\}$\footnote{For example, in a two-vertex graph, if $\gamma^*_{uv} \notin \{0, \pi, 2\pi\}$, the final state may be entangled, such as $c_0\ket{01} + c_1\ket{10}$. In this scenario, both measurement outcomes `01' and `10' yield the same MaxCut value, rendering the specific choice of $\gamma^*_{uv}$ inconsequential.}.

Reviewing \cref{xqaoa_angles} from a different perspective, we observe that the $\XQAOA{X=Y}$ ansatz, initially set with random angles, creates a highly entangled quantum state with a multitude of superposed states. As optimisation progresses, this entanglement gradually decreases, leading to a marked reduction in the number of superposed states. When the optimum is reached, the entangling layer disappears, leaving the system in a singular definitive state. In essence, what begins as a distinctly quantum state, through the course of optimisation, evolves into a classical state. This transition, marked by the disappearance of the entangling layer, facilitates the extraction of the solution through classical means, negating the need for quantum computers. This quantum-to-classical transition raises a natural question of whether the entangling layer is fundamentally necessary.
\begin{figure}[htbp]
\centering
\includegraphics[width=\columnwidth]{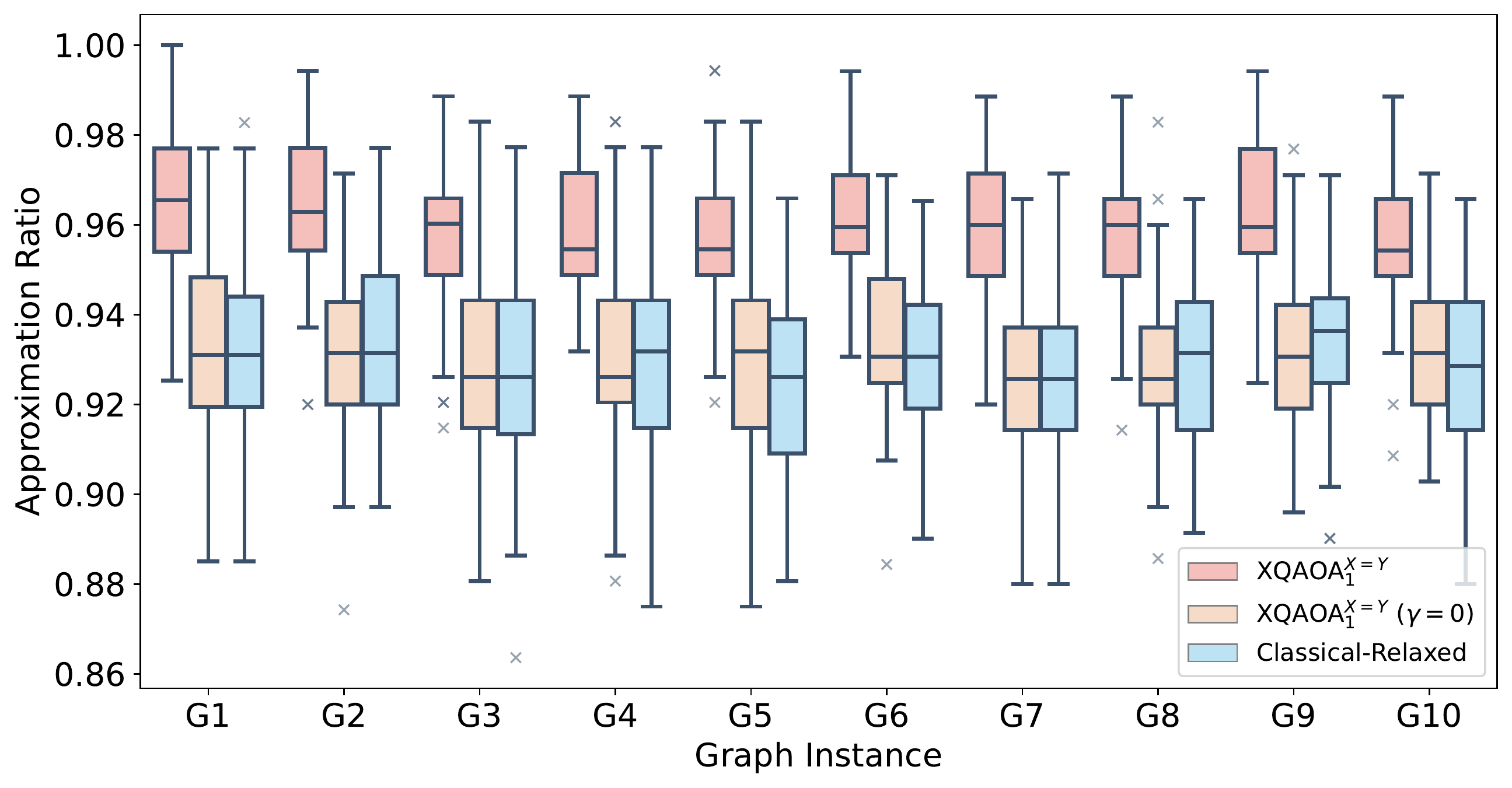}
    \captionsetup{justification=raggedright, singlelinecheck=false}
    \caption{Improvement of $\XQAOA{X=Y}$ ansatz over $\XQAOA{X=Y}$ with $\boldsymbol{\gamma} = 0$, and the Classical-Relaxed algorithms for ten different instances of 3-regular graphs with 128 vertices. For each of the three ansatz variants, the Parallel-LBFGS optimiser was run 100 times with random initial values for each of the ten graph instances.}
    \label{xqaoa_no_gamma}
\end{figure}
To answer this, we conducted further numerical simulations by setting $\boldsymbol{\gamma} = \mathbf{0}$ and optimising only the $\boldsymbol{\beta}$ angles. Our numerical results showed that without any entangling layer, the performance of $\XQAOA{X=Y}$ was similar to that of CR (see \cref{xqaoa_no_gamma}). In fact, when we set $\boldsymbol{\gamma} = \mathbf{0}$, \cref{xqaoa_full_exp} reduces to 
\begin{equation}
	\left\langle C_{u v}\right\rangle_{\mathrm{X=Y}} = \frac{1}{2} w_{uv} (1 - \sin 2\beta_u \sin 2\beta_v),
\end{equation}
which is the same as \cref{maxcut_relaxed} with the angles having an additional factor of $2$. This demonstrates that the overparameterised entangling layer augments the landscape, making the gradient-based classical optimiser less susceptible to getting trapped in local optima.

\subsection{Benchmark Results for \texorpdfstring{$1 \leq p \leq 5$}{1 <= p <= 5}} \label{benchmark_results_higher_depth}
\begin{figure}[htbp]
\captionsetup{justification=raggedright, singlelinecheck=false}
\centering
\includegraphics[width=\columnwidth]{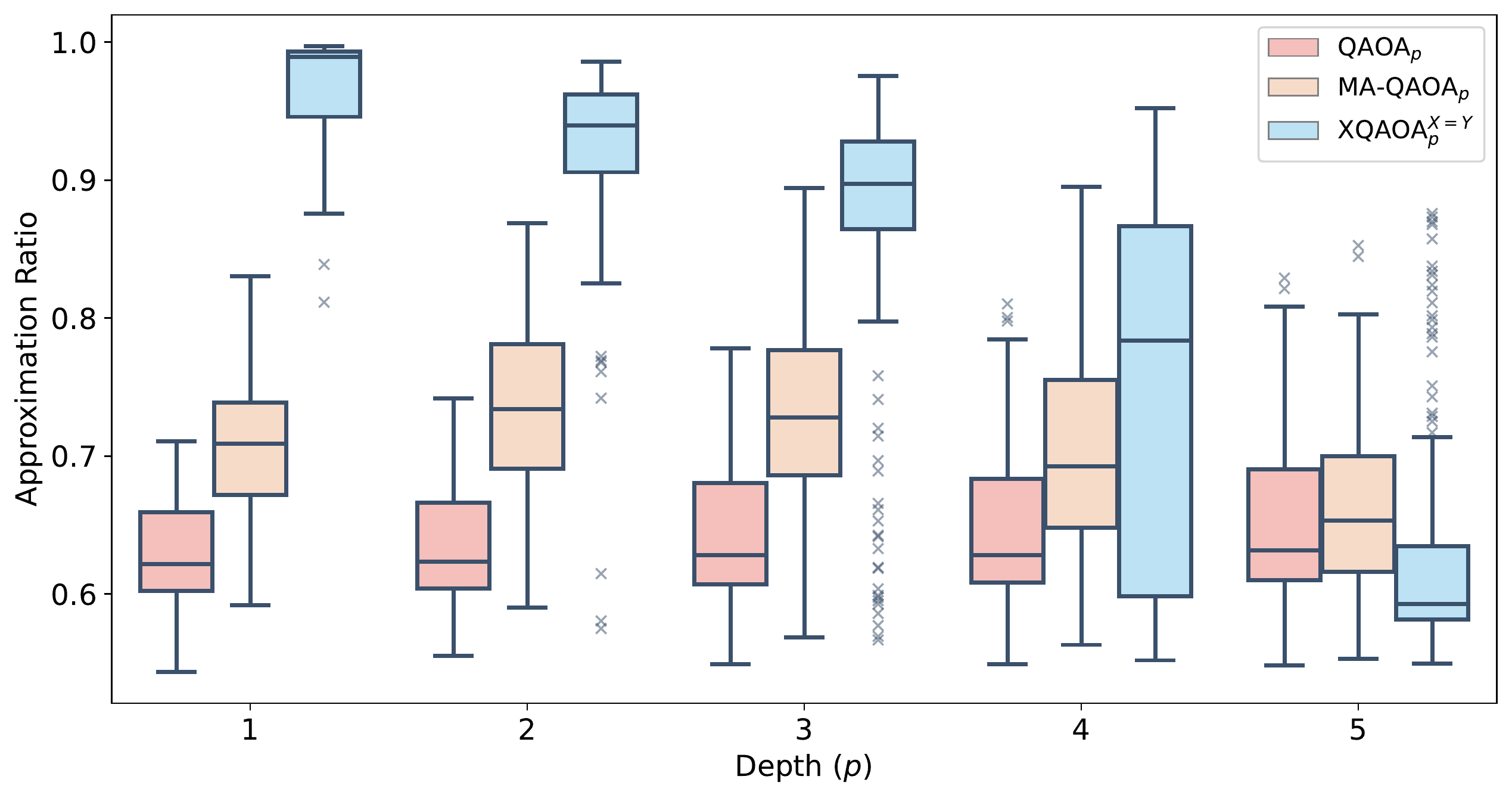}
    \caption{Comparison of approximation ratios achieved by QAOA$_{p}$, MA-QAOA$_{p}$, and XQAOA$^{\text{X=Y}}_{p}$ for $1 \leq p \leq 5$, evaluated across 20 instances of 3-regular graphs, each with 16 vertices. The Powell optimiser was run 10 times with random initialisation for each graph instance across all three algorithms.}
    \label{large_p_sim_results}
\end{figure}
In previous sections, we showed that $\XQAOA{X=Y}$ achieves near-optimal solutions and surpasses QAOA$_1$ and MA-QAOA$_1$. This led us to compare XQAOA$^{\text{X=Y}}_p$'s performance with QAOA$_p$ and MA-QAOA$_p$ for $p > 1$. Lacking analytical formulas for expectation values for $p > 1$ and constrained by the computational overhead of large-scale quantum simulation, we performed small-scale simulations using Qiskit~\cite{Qiskit}, benchmarking 20 random 3-regular graph instances with 16 vertices each. Given the variability in expectation values due to the limited number of shots (1024), we utilised the Powell optimiser~\cite{powell1964efficient} with ten random restarts to optimise the ansatzes' parameters. The simulation results, presented in \cref{large_p_sim_results}, reveal that the median approximation ratio of XQAOA$^{\text{X=Y}}_p$ consistently outperforms QAOA$_p$ and MA-QAOA$_p$ up to $p = 4$, with $\XQAOA{X=Y}$'s median nearly reaching 1.0. Interestingly, while QAOA$_p$ shows modest improvement with increasing depth, MA-QAOA$_p$ peaks at $p = 3$ before declining at $p \geq 4$\footnote{It should be noted that MA-QAOA$_p$'s performance is lower-bounded by QAOA$_p$ when its parameter optimisation is warm-started with the optimal angles of QAOA$_p$. The suboptimal results of MA-QAOA$p$ in our simulations stem from the use of random parameter initialisation. For a detailed analysis of MA-QAOA$_{p>1}$'s performance with various parameter initialisation strategies, see Gaidai \textit{et al.}\cite{gaidai2023performance}.}. In contrast, XQAOA$^{\text{X=Y}}_p$ exhibits a gradual decline in performance beyond $p = 1$, particularly noticeable at $p \geq 4$. This decline is attributed to barren plateaus due to the circuit's overexpressiveness, rather than a lack of effectiveness at higher depths~\cite{holmes2022connecting, larocca2022diagnosing}.

\section{Discussion} \label{discussion}

\subsection{Randomness in the Approximation Algorithms}

We optimise the classical parameters of XQAOA$_1$, MA-QAOA$_1$, QAOA$_1$, and CR algorithms using a gradient-based classical optimiser. We start the optimisation process with a randomly chosen initial point which affects the quality of the solution found, especially if the optimisation landscape is non-convex and has non-trivial features. If the initial point is near a local optimum or barren plateau, the gradient-based optimiser will converge to the local optimum and return a suboptimal solution. As a result, XQAOA$_1$, MA-QAOA$_1$, QAOA$_1$, and CR algorithms have a wide range of approximate solutions for the same problem. The maximum approximation ratio returned by MA-QAOA$_1$ is 0.82, whereas XQAOA$_1$ can often achieve an approximation ratio of 1.0 with the right choice of initial parameters. The range of approximation ratios for the CR algorithm is similar to XQAOA$_1$, but numerical simulations suggest that the cost landscape of the CR algorithm may be difficult to navigate and plagued with local optima, which may require an exponential number of initial points for the CR to match XQAOA$_1$'s performance, thus negating the benefit of having a polynomial-time approximation algorithm.

The GW algorithm also has randomness in its process. After it solves the relaxed version of the MaxCut problem, it generates an $n$-dimensional random vector $\boldsymbol{r}$ to perform its hyperplane rounding to find the optimal cut. While GW has an expected approximation ratio of $0.87856$ in the worst case at the asymptotic limit, the distribution of the approximation ratio returned by the GW algorithm for a finite number of randomly generated $\boldsymbol{r}$ vectors and problem instances can vary significantly, which is why we see a distribution for the output of the GW algorithm. While XQAOA$_1$, MA-QAOA$_1$, and the CR algorithm each require solving the problem anew for each random initial point, the GW algorithm solves the relaxed MaxCut problem just once and then efficiently generates individual solutions through hyperplane rounding with randomly generated vectors, avoiding repetitive computations.

\subsection{Classical Simulability of the XQAOA Ansatz}

Computing expectation values of QAOA$_1$, MA-QAOA$_1$, and XQAOA$_1$ for MaxCut on arbitrary graphs all have a time complexity of $O(n^3)$, albeit with varying prefactors. XQAOA$_1$ is unique in that its entangling layer vanishes at the optimal solution, in contrast to QAOA$_1$ and MA-QAOA$_1$, which maintain their entangling layer post-convergence. This entangling layer in QAOA$_1$ and MA-QAOA$_1$ requires generating an entangled quantum state through quantum computation followed by measurements to assign values to variables, precluding efficient classical simulation. In contrast, XQAOA$_1$'s optimal state, characterised by zero gammas, is non-entangled and permits classical bit assignment without quantum computation. However, in cases where XQAOA$_1$ falls short, a $p > 1$ XQAOA$_p$ might be required, necessitating quantum computation. 

It remains an open question whether a simple analytical formula exists for efficiently computing the mean values of Pauli operators of the XQAOA$_1$, MA-QAOA$_1$, and QAOA$_1$ states when dealing with $k$-local Ising Hamiltonians for $k > 2$. Additionally, if the problem assignments take integer-valued arguments, the quantum circuit would require more qubits per variable, further increasing the entangling and non-locality of the circuit, potentially allowing for quantum advantage. The $p = 1$ Recursive QAOA (RQAOA$_1$)~\cite{bravyi2020obstacles} is another quantum algorithm that can solve the MaxCut problem classically in time $O(n^4)$ without requiring any quantum computation or multi-qubit measurements.

\subsection{Role of Overparameterisation in XQAOA} \label{xqaoa_overparameterisation}

The efficacy of XQAOA largely stems from its overparameterised ansatz. Overparameterisation, which involves introducing additional parameters to increase model dimensionality, reshapes the loss landscape to facilitate easier optimisation. To understand the impact of overparameterisation, consider the role of local minima in this context: a local minimum is a point in the loss landscape with a loss value lower or equal to those of its neighbours within an $\epsilon$ radius. While such minima are prevalent in lower dimensions, they become less likely with increasing dimensionality as it becomes harder for their loss values to remain the lowest across all new dimensions, converting them from minima into saddle points. Saddle points differ from local minima as they are not the lowest points in all directions; thus, optimisers can navigate past them more readily. As overparameterisation turns more potential local minima into saddle points, the path to the global minimum becomes less obstructed. Although the overparameterised models may not be completely devoid of local minima, the remaining local minima tend to be close to the global minimum. This proximity reduces the likelihood of settling for suboptimal solutions and facilitates more efficient convergence to the global minimum~\cite{allen2019convergence, du2019gradient, buhai2020empirical, du2018gradient, brutzkus2017sgd}. This advantageous effect of overparameterisation is evident in \cref{benchmark_plots}, where the lower quartiles for $\XQAOA{X=Y}$ consistently surpass $0.92$ approximation ratio on all the benchmarked graph instances, strongly hinting at the benign loss landscape devoid of barren plateaus and suboptimal local minima.

While the previous discussion intuitively explains how overparameterisation enhances the efficacy of XQAOA, it's crucial to place this within the wider context of ongoing research on the trainability\footnote{In this context, `trainability' refers to the ability of a PQC to efficiently adjust its variational parameters for optimising a cost function. The terms `train,' `trainable,' and `trainability' are often used interchangeably with `optimise' and `optimisable' in quantum algorithms literature due to the parallel drawn between PQCs and Quantum Neural Networks, where the process of optimising network variables is commonly referred to as `training.'} of parameterised quantum circuits (PQCs). These studies, which form the basis of quantum landscape theory (QLT)~\cite{larocca2023theory}, offer a deeper understanding of quantum loss landscapes~\cite{mcclean2018barren, holmes2022connecting, larocca2022diagnosing, arrasmith2022equivalence, you2022convergence, liu2023analytic, stkechly2023connecting, anschuetz2022beyond, wang2021noise, kiani2020learning, garcia2023effects, akshay2021parameter, brandao2018fixed}. QLT defines a PQC to be overparameterised when it has sufficiently many parameters to explore all relevant directions of its state space~\cite{larocca2023theory}. An essential aspect of this definition is the inherent expressiveness of the PQC, which is characterised by its ability to generate a wide range of unitaries under varied parameter settings~\cite{sim2019expressibility}. This distinction is particularly evident when comparing $\XQAOA{X=Y}$ and MA-QAOA$_1$ for the MaxCut problem; despite having an equal number of parameters, $\XQAOA{X=Y}$ consistently outperforms MA-QAOA$_1$ on all problem instances. However, high expressivity also has drawbacks, such as the barren plateau phenomenon, where circuits show vanishingly small gradients due to their expressiveness~\cite{holmes2022connecting}. XQAOA addresses this by using a problem-specific ansatz, tailoring its circuit design to the task at hand. This approach allows XQAOA to be overparameterised with a quadratic number of parameters, in contrast to generic ansatzes that may require exponentially more parameters and deeper circuits~\cite{haug_quantum_geometry, larocca2023theory}. XQAOA thus achieves an optimal balance, avoiding both the limitations of underparameterisation, such as spurious local minima and reachability deficits~\cite{anschuetz2022beyond}, and the challenges of high expressivity like barren plateaus. This positions XQAOA in an optimal `Goldilocks zone' of trainability and expressivity.

\subsection{Shallow XQAOA vs Deep QAOA} \label{xqaoa_vs_qaoa}
The decision to use shallow XQAOA versus deep QAOA circuits hinges on their trainability for a given problem. Trainability depends on the loss landscapes of their ansatzes, influenced by factors like parameter count, initialisation strategy, circuit depth, and hardware noise. These factors can adversely affect trainability, hindering optimisation. Thus, the choice between shallow XQAOA and deep QAOA is determined by their relative trainability under these conditions.

It is evident that to surpass the performance of XQAOA$_1$, QAOA circuits need to be sufficiently deep. For example, it is conjectured that a minimum depth of $p = 12$ is necessary for QAOA to outperform the GW algorithm on the MaxCut problem on unweighted 3-regular graphs~\cite{wurtz2021fixed}. This conjecture not only posits a depth requirement but also suggests a parameter initialisation strategy of using a set of predetermined angles for warm-starting the optimisation. However, this strategy becomes ineffective in the presence of hardware noise~\cite{wang2021noise}. The noise distorts the loss landscape, leading to barren plateaus, and necessitates deeper, more noise-prone circuits. In contrast, our findings suggest that $\XQAOA{X=Y}$ is adequate for the MaxCut problem, questioning the need for deeper XQAOA circuits on NISQ devices. While XQAOA may require $p > 1$ for certain problems, necessitating quantum computation, its relatively shallower depth compared to QAOA makes it less susceptible to noise. With sufficiently low noise levels, XQAOA can still achieve near-optimal solutions, albeit with additional random restarts or parameter initialisation strategies~\cite{garcia2023effects}.

For problems unlike MaxCut on unweighted 3-regular graphs, where patterns are ambiguous~\cite{brandao2018fixed, akshay2021parameter} and \emph{a priori} information is limited, training deep QAOA circuits is more challenging~\cite{PhysRevLett.127.120502}. This is due to the difficulty in determining the minimal effective depth~\cite{qcma_hard} and suitable parameter initialisation strategies in scenarios where random initialisation is suboptimal~\cite{mcclean2018barren}. For such problems, where extracting useful \emph{a priori} information is challenging, XQAOA may be a preferable choice, regardless of whether the quantum computers are NISQ or fault-tolerant. An example of such a problem could be the MaxCut on randomly weighted regular graphs or randomly generated graphs.

\section{Conclusion} \label{conclusion}

In this work, we presented the XQAOA ansatz and its variants and explained how they generalise the MA-QAOA and QAOA ansatzes. Our numerical simulations reveal that a single iteration of the XQAOA ansatz, especially with the X=Y mixer, outperforms a single iteration of both MA-QAOA and QAOA. This enhanced performance of XQAOA is attributed to its overparameterised ansatz, which enables exploration in all relevant directions of its state space. The incorporation of the Pauli-Y rotation gate also significantly contributes to this improved efficacy. Our benchmarks also reveal that XQAOA performs just as well as the state-of-the-art Goemans-Williamson algorithm and even outperforms it for unweighted regular graphs with degrees greater than 4. Additionally, we find that the naive Classical-Relaxed algorithm with fewer classical parameters than MA-QAOA performs better by a large margin and that the performance of QAOA grows arbitrarily close to MA-QAOA for regular graphs with increasing degrees. Finally, we find an infinite family of graphs for which XQAOA solves MaxCut exactly and show analytically that for some graphs in this family, special cases of XQAOA are capable of achieving a much larger approximation ratio than QAOA.

Interestingly, we found that as the $\XQAOA{X=Y}$ ansatz converges to an optimum, its entangling layer disappears, leaving behind only single-qubit unitaries, making it possible to efficiently solve and extract the solution classically. Although the entangling layer disappears as the ansatz reaches an optimal solution, it is necessary for the optimisation process, without which the performance of the $\XQAOA{X=Y}$ ansatz deteriorates to that of the Classical-Relaxed algorithm. Although for the problem of MaxCut, the efficient classical simulation of the $p=1$ XQAOA ansatz eliminates quantum advantage, it remains open whether this is still the case for larger $p$ and problems whose Ising formulations are 2-local with external fields or $k$-local with $k > 2$.

We have also shown that despite the XQAOA ansatz being overparameterised—with a quadratic increase in free parameters in the worst-case scenario—it is significantly easier to train compared to the underparameterised QAOA ansatz and the adequately parameterised Classical-Relaxed algorithm. The QAOA ansatz struggles with issues like spurious local minima, barren plateaus, and reachability deficits, while the Classical-Relaxed algorithm often encounters sub-optimal local minima far from the global optimum. In contrast, the XQAOA ansatz, like other overparameterised models, features a more benign loss landscape, free from barren plateaus, spurious local minima, and reachability deficits. This characteristic enables the classical optimiser to consistently converge to optimal or near-optimal solutions, independent of the parameter initialisation strategy. While the increased number of free parameters in XQAOA might suggest higher computational costs, its faster convergence rate, eliminating the need for specific initialisation strategies or random restarts, compensate for the extra parameters' computational overhead.

Our work opens up new avenues for further research into improving quantum optimisation algorithms as well as their impact on various applications. For example, QAOA and its variants have already found numerous potential uses in solving various optimisation problems beyond MaxCut, including problems in graph theory~\cite{wang2022quantum, bengtsson2020improved, basso2022performance}, finance \cite{brandhofer2023benchmarking}, chemistry \cite{kremenetski2021quantum,mustafa2022variational}, and others \cite{hadfield2019quantum,mesman2021qpack}. Future work could extend these results by adopting and exploiting the advantages of XQAOA in various applications. In addition, due to its advantages at low depth, XQAOA could be tested and implemented on near-term quantum hardware and compared against existing experimental benchmarks \cite{leo2020quantum,mesman2021qpack,bengtsson2020improved,lubinski2023optimization}.

\section{Acknowledgements}
VV is thankful to Ye Jun from the A*STAR Institute of High Performance Computing and the A*STAR Computational Resource Centre for supporting this work through the use of their high-performance computing facilities. VV is thankful to Aaron Tranter for the stimulating discussions and insightful suggestions. We thank Truman Ng for helpful comments on an earlier version of this manuscript. This research is supported by A*STAR C230917003 and the Australian Research Council Centre of Excellence CE170100012. DEK acknowledges funding support from the A*STAR Central Research Fund (CRF) Award for Use-Inspired Basic Research; and the National Research Foundation, Singapore and A*STAR under the Quantum Engineering Programme (NRF2021-QEP2-02-P03).

\section{Data Availability}
We provide a total of 420 $D$-regular graphs that were used in this paper, along with their optimal cut and their solutions in a machine-readable \texttt{CSV} format. We also provide the simulation data and scripts used to generate the plots presented in this paper. The benchmark and simulation dataset and the scripts for generating the plots can be found at \href{https://github.com/vijeycreative/XQAOA-Dataset}{https://github.com/vijeycreative/XQAOA-Dataset} \cite{V_Vijendran_XQAOA-Dataset_2023}.

\bibliography{refs}

\appendix


\section{Goemans-Williamson (GW) Algorithm} \label{gw_algorithm}

The Goemans-Williamson algorithm~\cite{goemans1995improved} is a polynomial-time approximation algorithm for approximately solving the MaxCut problem. The algorithm works by constructing a semidefinite programming relaxation of the MaxCut problem and then rounding the solution to get a near-optimal cut in the original graph.

Recall in \cref{maxcut} that the MaxCut problem can be formulated as a binary quadratic program of the form
\begin{equation}
\begin{array}{ll@{}ll}
\text{Maximise}  & \displaystyle\sum\limits_{1 \leq i < j \leq n} \frac{1}{2} w_{ij} \left(1 - y_i y_j \right) &\\[0.5cm]
\text{s.t}& y_i \in \{-1, 1 \} \quad \forall i \in V.
\end{array}
\label{maxcut_quad_program}
\end{equation}
We can relax this program to a vector program by allowing the binary variables $y_i$ to be $n$-dimensional vector variables $\boldsymbol{v}_i$ that lie on the $n$-dimensional unit sphere $S_n$. Replacing the product of scalar terms in \cref{maxcut_quad_program} with the corresponding inner product, we obtain the following vector program for MaxCut.
\begin{equation}
\begin{array}{ll@{}ll}
\text{Maximise}  & \displaystyle\sum\limits_{1 \leq i < j \leq n} \frac{1}{2} w_{ij} \left(1 - \boldsymbol{v}_i \cdot \boldsymbol{v}_j \right) &\\[0.5cm]
\text{s.t}& \boldsymbol{v}_i \in S_n \quad \forall i \in V.
\end{array}
\label{maxcut_vec_program}
\end{equation}
This relaxed vector program can be efficiently solved by semidefinite programming, which allows us to obtain a set of optimal vectors $\boldsymbol{v}^*_i$ for each node in the original graph. The Goeman-Williamson algorithm then uses a random $n$-dimensional vector $\boldsymbol{r}$ from $S_n$ to partition the vertices into two sets by assigning $\text{sign}(\boldsymbol{r} \cdot \boldsymbol{v}^*_i)$ to each node. The sign function returns 1 for non-negative inputs and -1 elsewhere, meaning that each node's rounding depends on its position relative to the hyperplane defined by $\boldsymbol{r}$ that passes through the origin. The probability of the hyperplane rounding cutting an edge $\{i, j\}$ is proportional to the angle between the vectors and can be expressed as 
\begin{equation}
    \text{Pr}[\text{sign}(\boldsymbol{r} \cdot \boldsymbol{v}_i) \neq \text{sign}(\boldsymbol{r} \cdot \boldsymbol{v}_j)]  = \dfrac{\arccos (\boldsymbol{v}_i \cdot \boldsymbol{v}_j)}{\pi}.
\end{equation}
The expected weight of the cut found by the algorithm is calculated by adding up the expected contributions of each edge, where the contribution of an individual edge is its probability of being cut. We can write the sum as follows
\begin{equation}
\begin{split}
	\mathbb{E}[W] &= \sum_{1 \leq i < j \leq n}^n w_{ij} \text{Pr}[\text{sign}(\boldsymbol{r} \cdot \boldsymbol{v}_i) \neq \text{sign}(\boldsymbol{r} \cdot \boldsymbol{v}_j)] \\
	&= \frac{1}{\pi} \sum_{1 \leq i < j \leq n}^n w_{ij} \arccos (\boldsymbol{v}_i \cdot \boldsymbol{v}_j).
\end{split}
\label{expected_W}
\end{equation}
To find the approximation ratio, we need to compare the expected weight of the cut produced by the algorithm to the optimal cut. This is done by comparing the ratio $\alpha$ of individual edge contributions for each edge $\{i, j\}$ in \cref{expected_W} and \cref{maxcut_vec_program} and finding the minimum value:
\begin{equation}
\begin{split}
    \alpha &= \dfrac{\arccos{(\boldsymbol{v}_i \cdot \boldsymbol{v}_j)}}{\pi} \dfrac{2}{1-\boldsymbol{v}_i \cdot \boldsymbol{v}_j} \\
    & = \dfrac{\theta}{\pi} \dfrac{2}{1-\cos \theta},
\end{split}
\end{equation}
where $\theta=\arccos{(\boldsymbol{v}_i \cdot \boldsymbol{v}_j)}$ is the angle between the vectors $\boldsymbol{v_i}$ and $\boldsymbol{v_j}$. Minimising the above expression, we get
\begin{equation}
    \alpha = \min_{0 \leq \theta \leq \pi} \dfrac{2}{\pi} \dfrac{\theta}{1 - \cos \theta} \approx 0.87856.
\end{equation}
Having determined that each edge's contribution to the cut is expected to be no less than $0.87856$ of the optimal value for $\theta = 2.331122$, we can use the linearity of expectation to conclude that the total expected value is also no less than $0.87856$ of the optimal value. If the Unique Games Conjecture~\cite{khot2002power, khot2007optimal, 5497893, khot2015unique} proves true, this method offers the strongest possible guarantee that any classical algorithm can achieve in polynomial time.

\section{Parallel-LBFGS Algorithm} \label{plbfgs}

\begin{figure}[htpb]
\vspace{-0.5cm}
\centering
\includegraphics[width=0.9\columnwidth]{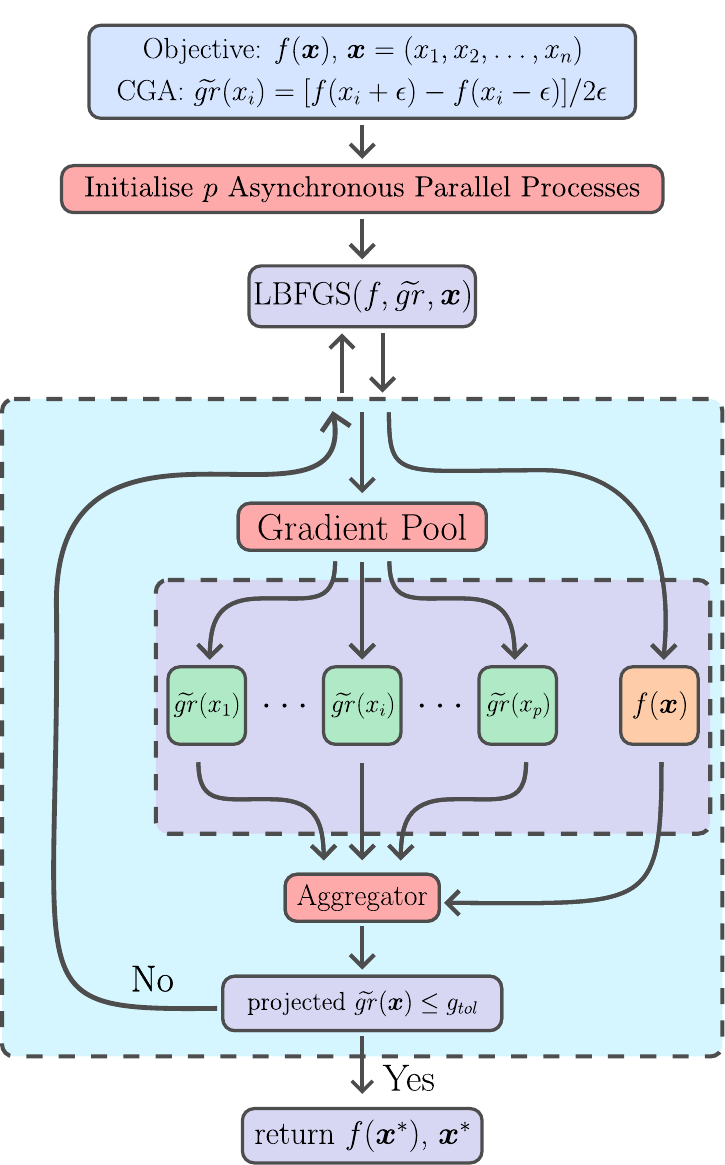}
    \captionsetup{justification=raggedright, singlelinecheck=false}
    \caption{The Parallel-LBFGS algorithm and CGA method are used to optimise an objective function. The objective function $f(\boldsymbol{x})$, gradient function $\widetilde{gr}(x_i)$, and initial points $\boldsymbol{x} = (x_1, x_2, \dots, x_n)$ are provided to the LBFGS algorithm, which is initialised with $p$ asynchronous processes. A gradient pool distributes the gradient calculation across $p$ available processors. When the LBFGS algorithm calls either the objective or gradient function, the wrapper interface (blue box) begins evaluating the objective function and all the gradients in parallel. The results are then combined and returned to the LBFGS algorithm. This process is repeated until the optimisation converges to a solution (i.e., when all the gradients are less than or equal to the specified gradient tolerance value).}
    \label{Parallel_LBFGS}
\end{figure}

\begin{figure*}[htbp]
    \begin{subfigure}[b]{0.325\textwidth}
        \includegraphics[width=\textwidth]{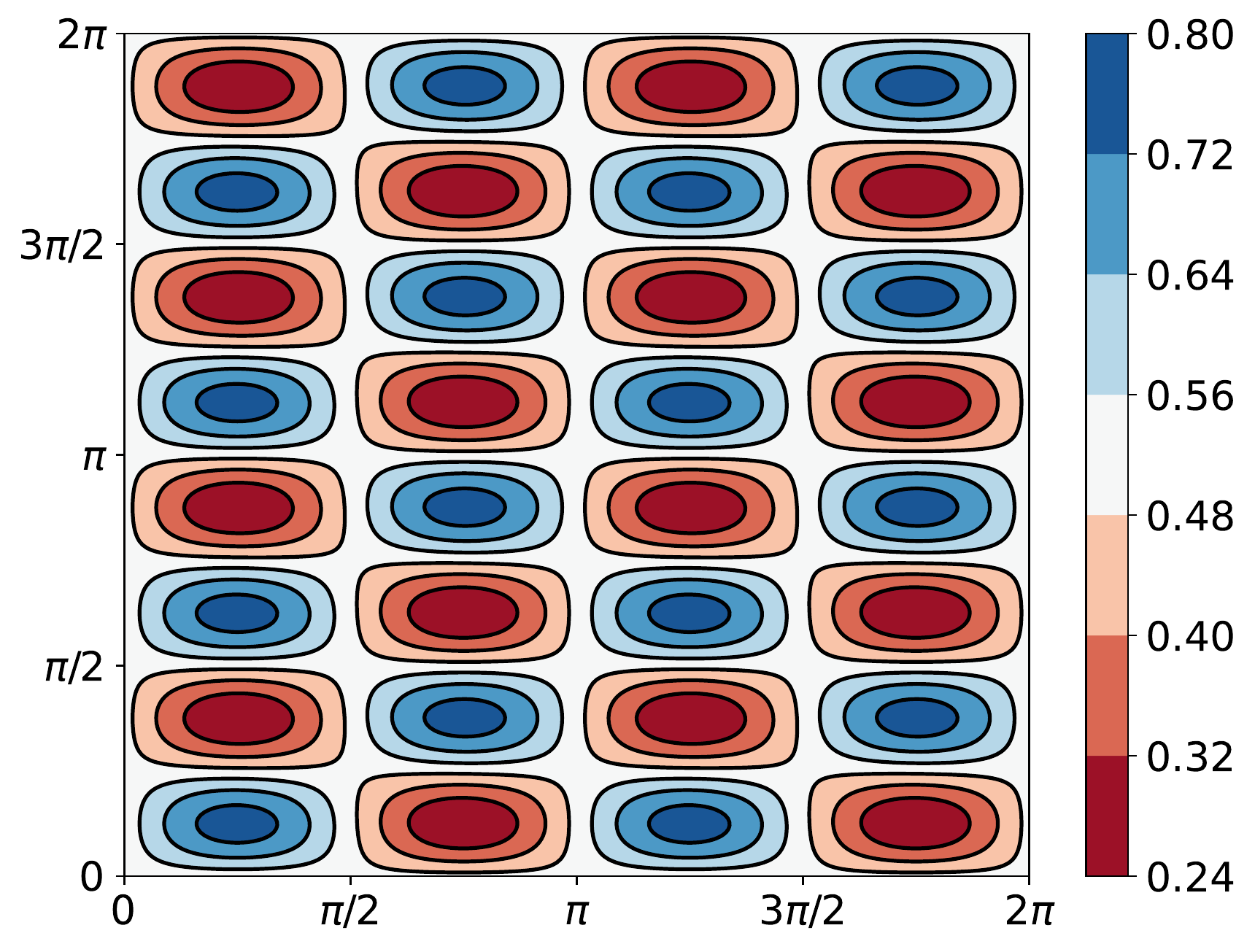}
        \caption{2-Regular Graph}
    \end{subfigure}
    \hfill
    \begin{subfigure}[b]{0.325\textwidth}
        \includegraphics[width=\textwidth]{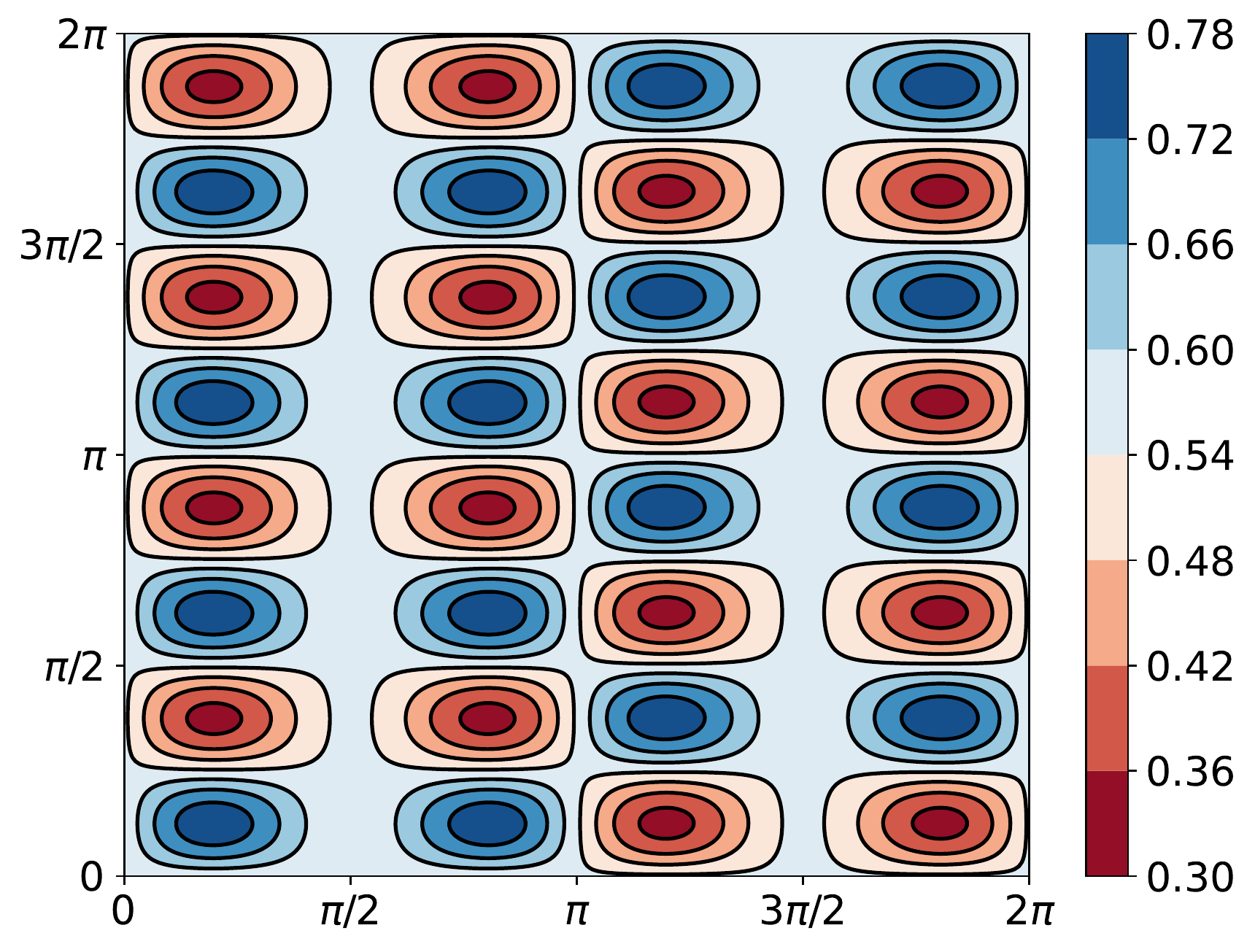}
        \caption{3-Regular Graph}
    \end{subfigure}
    \hfill
    \begin{subfigure}[b]{0.325\textwidth}
        \includegraphics[width=\textwidth]{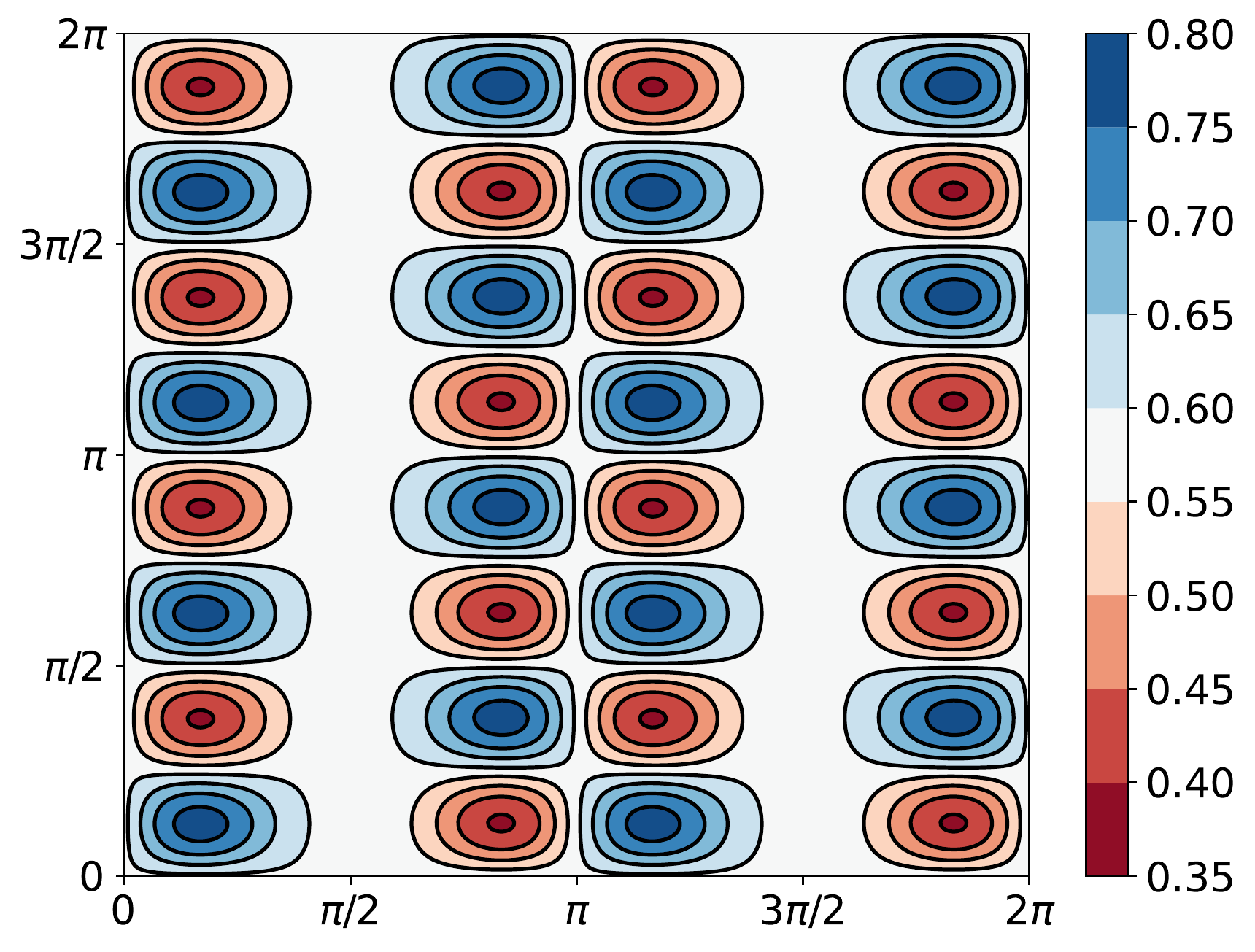}
        \caption{4-Regular Graph}
    \end{subfigure}
    \hfill
    \begin{subfigure}[b]{0.325\textwidth}
        \includegraphics[width=\textwidth]{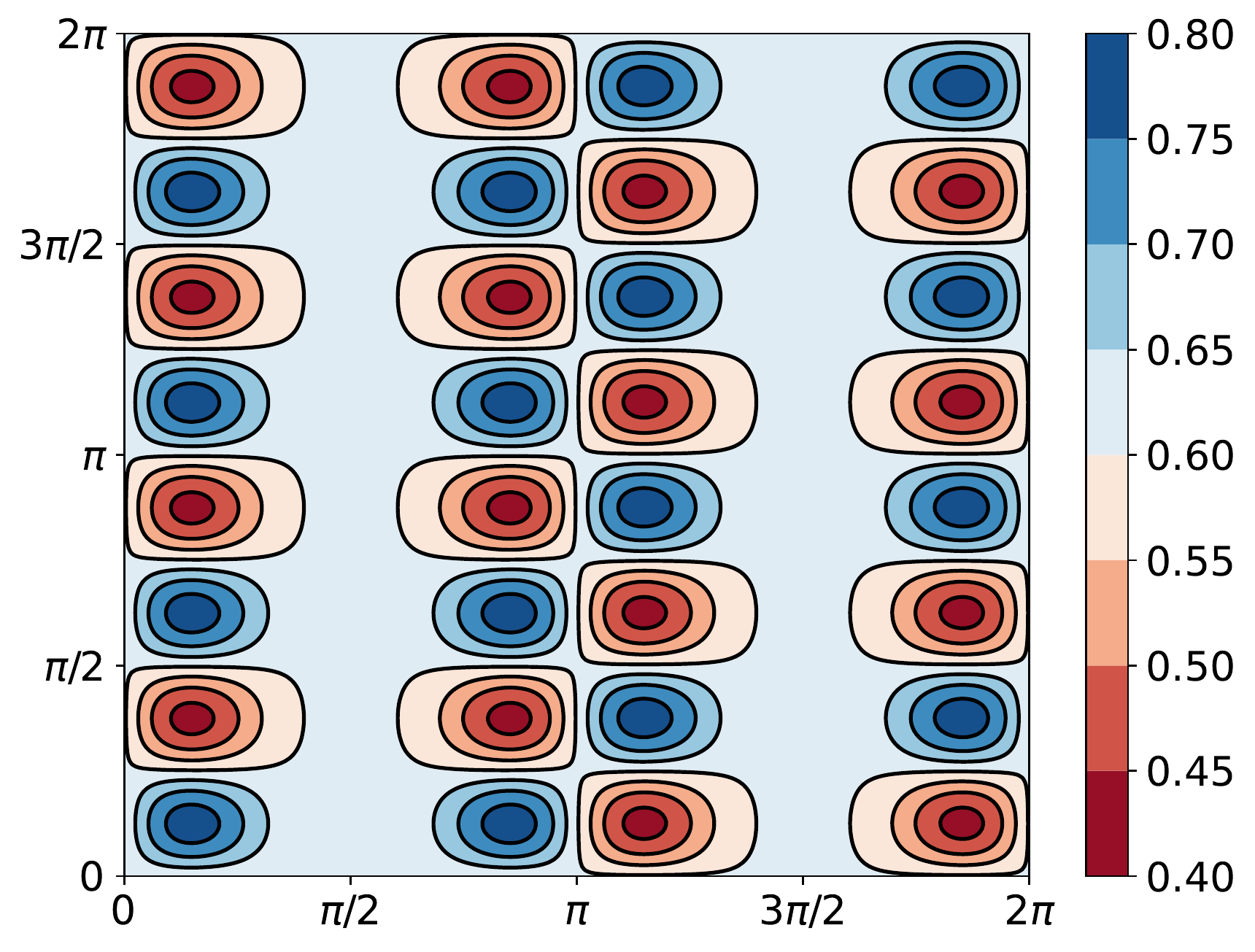}
        \caption{5-Regular Graph}
    \end{subfigure}
    \hfill
    \begin{subfigure}[b]{0.325\textwidth}
        \includegraphics[width=\textwidth]{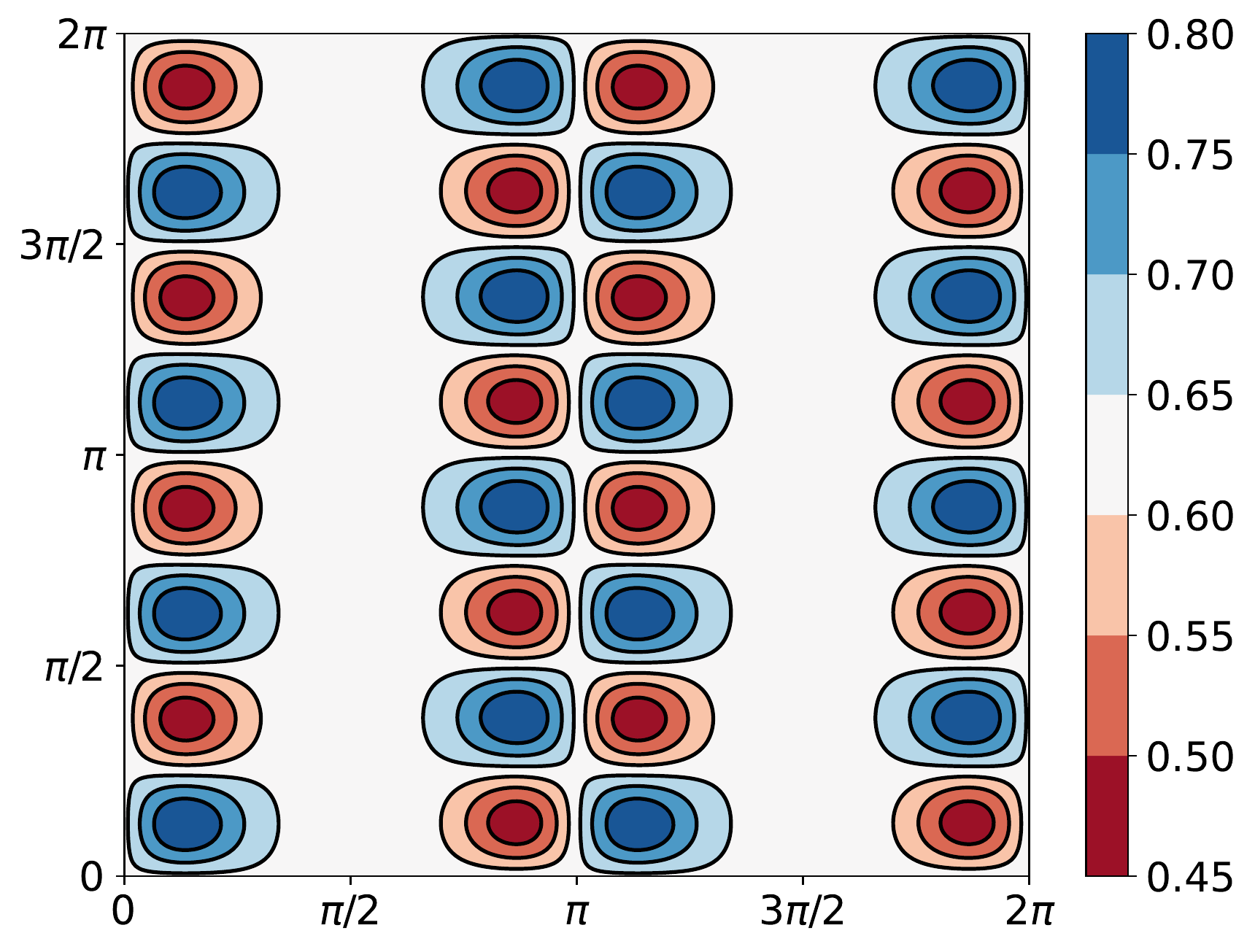}
        \caption{6-Regular Graph}
    \end{subfigure}
    \hfill
    \begin{subfigure}[b]{0.325\textwidth}
        \includegraphics[width=\textwidth]{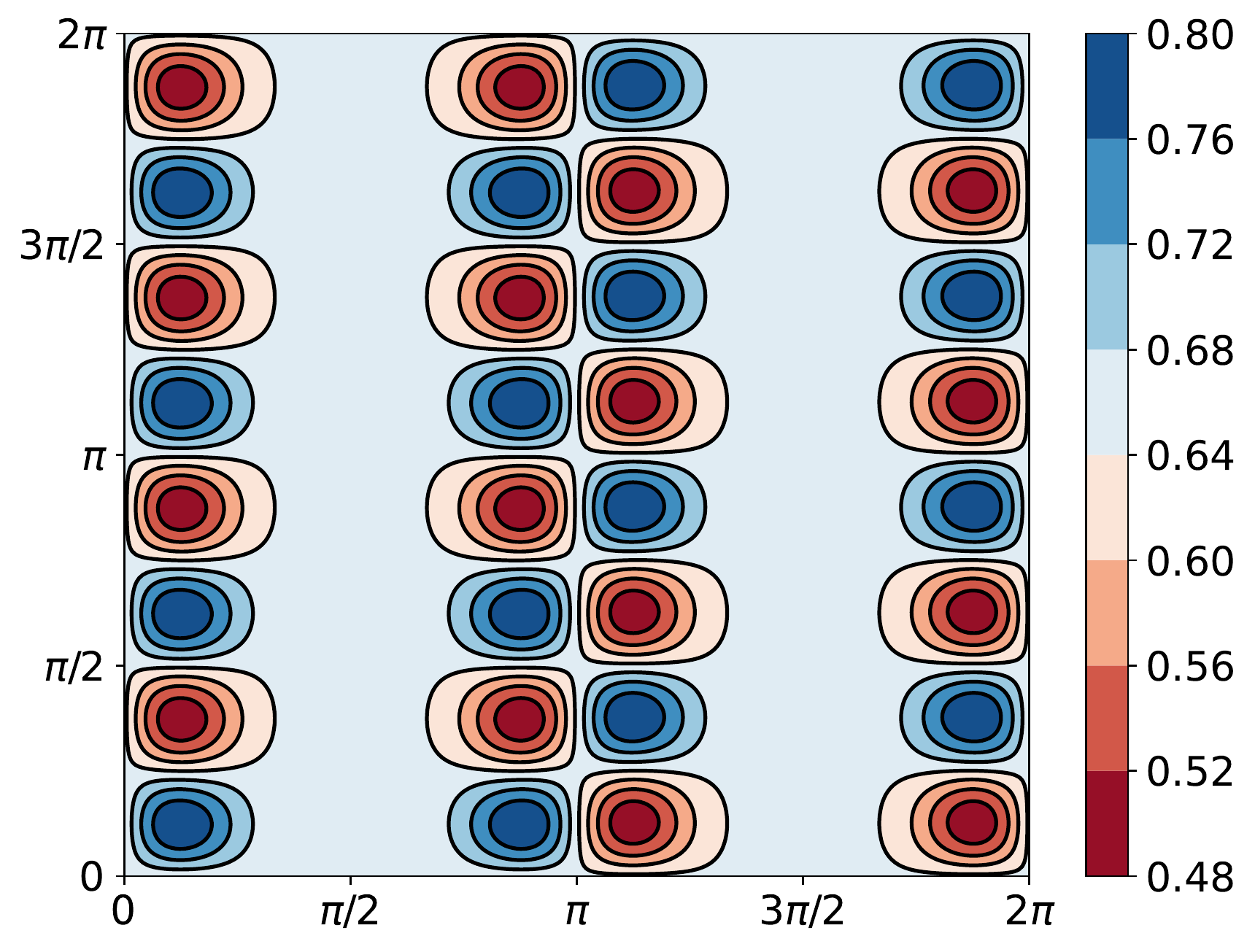}
        \caption{7-Regular Graph}
    \end{subfigure}
    \hfill
    \begin{subfigure}[b]{0.325\textwidth}
        \includegraphics[width=\textwidth]{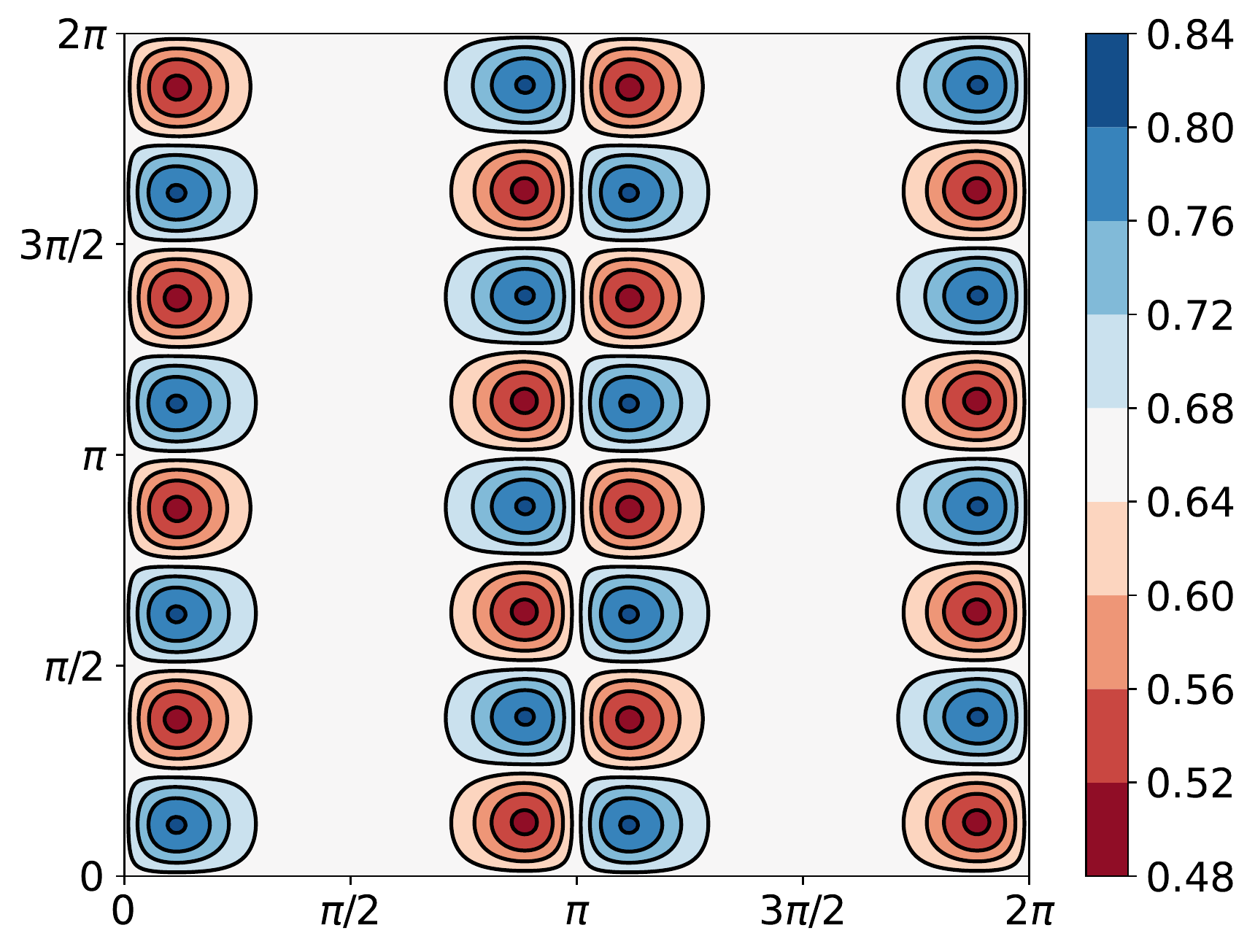}
        \caption{8-Regular Graph}
    \end{subfigure}
    \hfill
    \begin{subfigure}[b]{0.325\textwidth}
        \includegraphics[width=\textwidth]{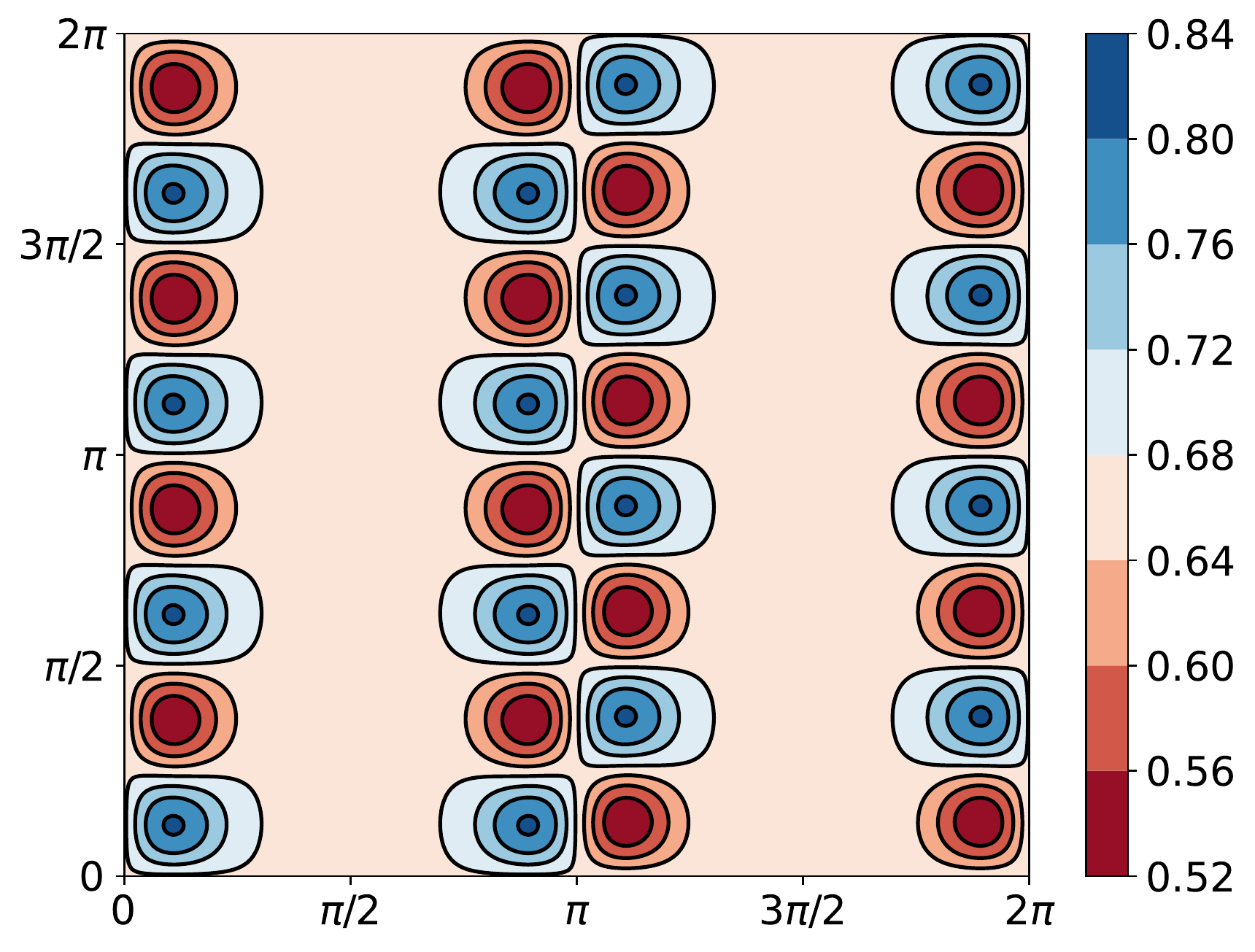}
        \caption{9-Regular Graph}
    \end{subfigure}
    \hfill
    \begin{subfigure}[b]{0.325\textwidth}
        \includegraphics[width=\textwidth]{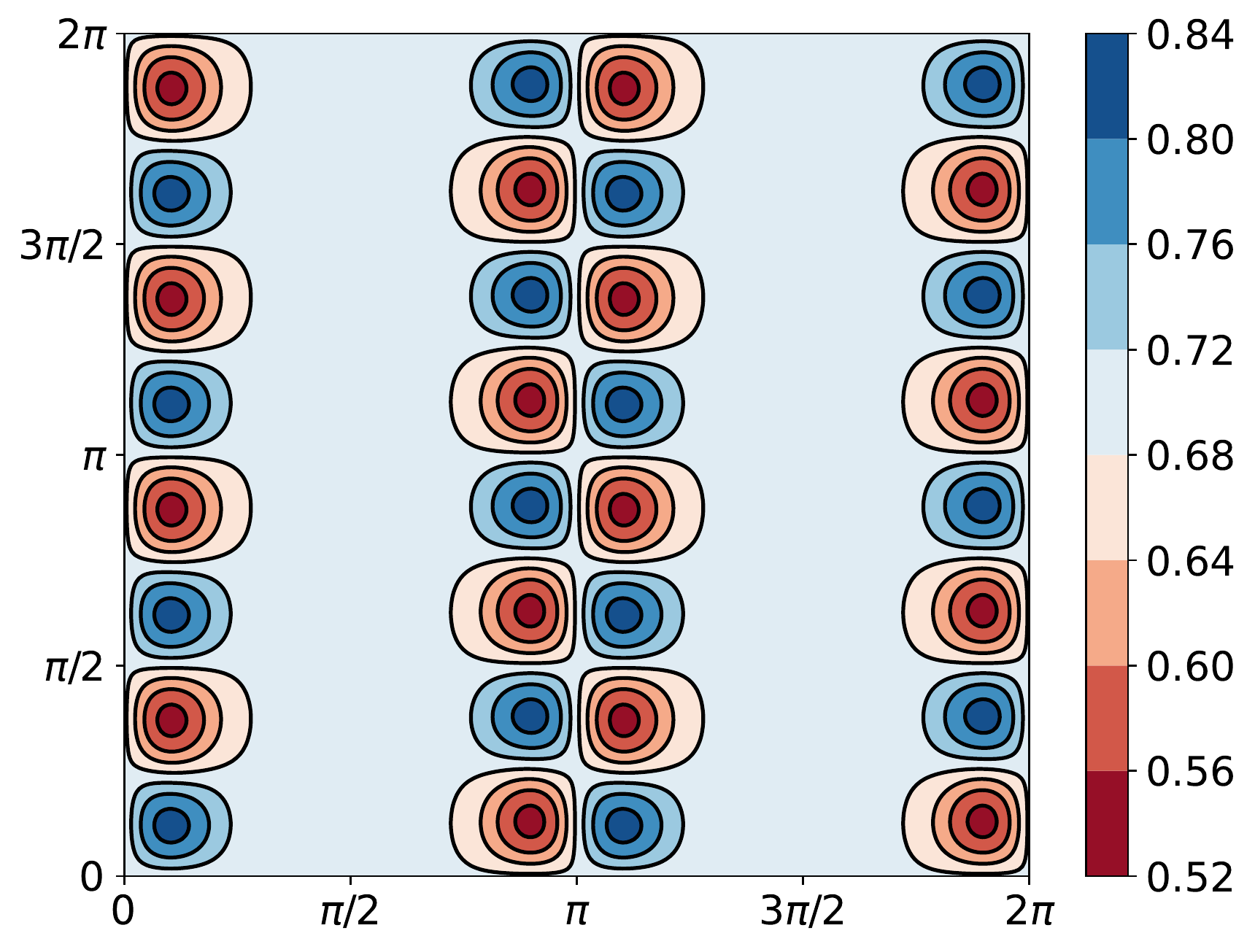}
        \caption{10-Regular Graph}
    \end{subfigure}
    \captionsetup{justification=raggedright, singlelinecheck=false}
    \caption{The cost function for QAOA$_1$ ansatz for $D$-regular graphs with 128 vertices is plotted on contour plots for $2 \leq D \leq 10$, with the $\beta$ and $\gamma$ angles represented on the $x$ and $y$-axis, respectively. The dark blue (red) regions represent points that maximise (minimise) the objective function. As the degree of the graph increases, the size of the barren plateaus in the cost function also increases. The barren plateaus are those lightly coloured regions without any contour lines. Despite this increase, the locations of the maxima and minima remain almost at similar locations for $0 \leq \beta \leq \frac{\pi}{4}$ and $\frac{7 \pi}{4} \leq \beta \leq 2 \pi$, alternating between degrees at $\frac{3 \pi}{4} \leq \beta \leq \frac{5 \pi}{4}$.}
    \label{qaoa_contour_plots}
\end{figure*}

The XQAOA$_1$ and MA-QAOA$_1$ methods require more classical effort to identify angles that maximise the approximation ratio due to the presence of more variables to optimise. To compute the angles for QAOA$_1$, MA-QAOA$_1$, and XQAOA$_1$ on the collection of graphs, we used a parallel implementation of the Limited-Memory Broyden-Fletcher-Goldfarb-Shanno (LBFGS)~\cite{liu1989limited} algorithm. This implementation, similar to that described in \cite{gerber2019optimparallel}, utilises parallelism to compute each variable's approximate gradients and function in parallel through the use of a wrapper class that interfaces with the standard LBFGS code. Approximate gradients are calculated using a numeric central difference gradient approximation (CGA), which requires $2n$ evaluations of the objective function if the function has $n$ parameters. The LBFGS algorithm sequentially evaluates the objective function $1 + 2n$ times per iteration. When run on $p$ available processor cores, the Parallel-LBFGS algorithm can evaluate all objective function calls in parallel, reducing the running time by a factor of $p$. \Cref{Parallel_LBFGS} illustrates the operation of the Parallel-LBFGS algorithm.

\section{Barren-Plateau Free Classical Optimisation of QAOA Ansatz} \label{qaoa_optimization}

It has been previously demonstrated in~\cite{brandao2018fixed} that when the problem instance comes from a reasonable distribution, the cost landscape and the optimal parameters of the QAOA$_1$ ansatz are independent of the specific instance. This means that, for typical instances, the value of the objective function and optimal parameters are nearly the same. As a result, a strategy for finding good parameters is to take one instance of the problem and invest time and resources into finding good parameters. Although this may be computationally expensive, once this has been done, these same parameter values will result in good cost function values on other randomly chosen instances. In other words, the overall cost of solving multiple instances becomes smaller as the number of instances increases. In the case of the MaxCut problem on $D$-regular graphs, all graphs have this property: all vertices are connected to $D$ other vertices. This same reasoning will apply to other combinatorial search problems that have a restriction where the number of clauses in which any variable can appear does not grow with $n$ or at least grows only slowly with high probability.

While we are not focused on reducing the computation cost of optimising the QAOA$_1$ ansatz for a group of graphs, we are interested in identifying the location of the barren plateaus in the cost landscape in order to avoid choosing initial points in these areas. \Cref{qaoa_contour_plots} shows the contour plots of the optimisation landscape of the QAOA$_1$ ansatz for regular graphs with 128 vertices and degrees ranging from 2 to 10. The flat lightly coloured regions without contour lines are the barren plateaus, and it is important to avoid these areas as initial points for the classical optimiser, as failing to do so may result in the optimiser converging to a suboptimal solution and not providing the maximal solution possible using the QAOA$_1$ ansatz. Choosing good initial points is especially important for regular graphs with larger degrees, as the barren plateau's size increases with the graph's degree.

In the contour plots shown in \cref{qaoa_contour_plots}, the locations of the maxima and minima remain nearly the same for $0 \leq \beta \leq \frac{\pi}{4}$ and $\frac{7 \pi}{4} \leq \beta \leq 2 \pi$, alternating between degrees at $\frac{3 \pi}{4} \leq \beta \leq \frac{5 \pi}{4}$. Therefore, to help the classical optimiser converge on the best possible solution, we can set the initial points $\gamma \in [0, \frac{\pi}{4}]$ and $\beta \in [0, \frac{\pi}{4}]$, which will allow the classical optimiser to always converge on the optimal solution in the lower left corner of the optimisation landscape.

\onecolumngrid
\section{Quantifying the Performance of XQAOA, MA-QAOA, and QAOA}

The goal of this appendix is to derive the analytical expressions given by \cref{xqaoa_full_exp}, \cref{ma_qaoa_full_exp}, and \cref{qaoa_formula} for the components of the $\XQAOA{XY}$, MA-QAOA and QAOA cost functions, respectively.

\subsection{Some Useful Identities}
\label{sec:useful_identities}

Before proceeding with derivations of the analytical formulas, let us first state and prove some identities that will be used in this paper. We denote the set of positive integers by $\mathbb Z^+ = \{1,2,\ldots\}$ and the set of $n$-bit strings by $\bbF^n = \{0,1\}^n$. When $n=1$, we write $ \bbF =\bbF^1 = \{0,1\}$.

\begin{lemma}\label{lemma1}
Let $f \in \mathbb{Z}^+$ and $x_1, x_2, \dots, x_f, y_1, y_2, \dots, y_f \in \mathbb{R}$ be real numbers. Then,
\begin{align}
	\prod_{i = 1}^{f} \cos(x_i - y_i) &= \sum_{\mu \in \bbF^f} \prod_{i = 1}^{f} \cos^{1-\mu_i}x_i \cos^{1-\mu_i}y_i \sin^{\mu_i}x_i \sin^{\mu_i}y_i ,
 \label{eq:cosine_identity_negative}\\
  \prod_{i = 1}^{f} \cos(x_i + y_i) &= \sum_{\mu \in \bbF^f} (-1)^{|\mu|} \prod_{i = 1}^{f} \cos^{1-\mu_i}x_i \cos^{1-\mu_i}y_i \sin^{\mu_i}x_i \sin^{\mu_i}y_i,
  \label{eq:cosine_identity_positive}
\end{align}
where $|\mu| = \sum_{i=1}^f \mu_i$.
\end{lemma}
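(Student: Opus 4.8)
The plan is to prove both identities simultaneously by expanding a product of single-angle addition formulas. The starting point is the elementary identity
\begin{equation}
\cos(x_i \pm y_i) = \cos x_i \cos y_i \mp \sin x_i \sin y_i,
\end{equation}
which exhibits each factor $\cos(x_i \pm y_i)$ as a sum of exactly two monomials: a ``cosine--cosine'' term $\cos x_i \cos y_i$ and a ``sine--sine'' term $\sin x_i \sin y_i$ (the latter carrying a $+$ sign in the difference case and a $-$ sign in the sum case). First I would take the product over $i = 1, \dots, f$ and apply the distributive law, so that each of the $2^f$ terms in the expansion corresponds to a choice, for every index $i$, of either the cosine--cosine or the sine--sine monomial.

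Next I would encode this choice by a bit string $\mu \in \bbF^f$, declaring that $\mu_i = 0$ selects the cosine--cosine factor and $\mu_i = 1$ selects the sine--sine factor. The compact notation $\cos^{1-\mu_i} x_i \, \cos^{1-\mu_i} y_i \, \sin^{\mu_i} x_i \, \sin^{\mu_i} y_i$ realises exactly this rule, since it reduces to $\cos x_i \cos y_i$ when $\mu_i = 0$ and to $\sin x_i \sin y_i$ when $\mu_i = 1$. Summing over all $\mu \in \bbF^f$ then reproduces the full expansion, which establishes \eqref{eq:cosine_identity_negative} for the difference case, where every monomial appears with a $+$ sign. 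For the sum case \eqref{eq:cosine_identity_positive}, the only change is that each selection of a sine--sine factor contributes a factor of $-1$; collecting these gives an overall sign $(-1)^{\sum_i \mu_i} = (-1)^{|\mu|}$, as claimed.

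To make the ``distribute and collect'' step fully rigorous I would phrase it as an induction on $f$. The base case $f = 1$ is precisely the angle-addition formula above. For the inductive step, I would split the product as $\bigl[\prod_{i=1}^{f} \cos(x_i \pm y_i)\bigr]\cos(x_{f+1} \pm y_{f+1})$, apply the inductive hypothesis to the bracketed factor, expand the last factor by the addition formula, and match terms by appending a new bit $\mu_{f+1}$ to each $\mu \in \bbF^f$; the sign bookkeeping $(-1)^{|\mu|}(-1)^{\mu_{f+1}} = (-1)^{|\mu'|}$ with $\mu' = (\mu, \mu_{f+1})$ closes the induction. I do not anticipate a genuine obstacle here, as the content is entirely elementary; the only point demanding care is ensuring that the sign factor in the sum case is correctly attributed to the sine--sine selections and aggregates to $(-1)^{|\mu|}$.
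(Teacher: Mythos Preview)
Your proposal is correct and follows essentially the same approach as the paper: expand each factor via the angle-addition formula and encode the distributive expansion by a bit string $\mu\in\bbF^f$. The only cosmetic differences are that the paper phrases the expansion via the identity $a+b=\sum_{\mu\in\bbF}a^{1-\mu}b^\mu$ rather than explicit induction, and obtains \eqref{eq:cosine_identity_positive} from \eqref{eq:cosine_identity_negative} by the substitution $y_i\mapsto -y_i$ instead of tracking signs directly.
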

\begin{proof} The first identity \cref{eq:cosine_identity_negative} follows from recognising that for any $a,b\in \mathbb R$, the sum $a+b$ can be written as the following sum of products:
\begin{align} \label{eq:product_to_sumss_trick}
a+b=\sum_{\mu \in 
\bbF}a^{1-\mu}b^\mu.
\end{align}
Applying \cref{eq:product_to_sumss_trick} to each term 
of the sum $\cos x_i \cos y_i + \sin x_i \sin y_i$ gives \begin{align*}
		\prod_{i = 1}^{f} \cos(x_i - y_i) &= \prod_{i = 1}^{f} \left(\cos x_i \cos y_i + \sin x_i \sin y_i \right) \\
		&= \prod_{i = 1}^{f} \sum_{\mu_i \in
  \bbF} \cos^{1-\mu_i}x_i \cos^{1-\mu_i}y_i \sin^{\mu_i}x_i \sin^{\mu_i}y_i \\
		&=  \sum_{\mu \in \bbF^f} \prod_{i = 1}^{f} \cos^{1-\mu_i}x_i \cos^{1-\mu_i}y_i \sin^{\mu_i}x_i \sin^{\mu_i}y_i.
\end{align*}
The identity in \cref{eq:cosine_identity_positive} follows immediately from replacing each $y_i$ in \cref{eq:cosine_identity_negative} with $-y_i$.
\end{proof}

By taking the sum (difference, respectively) of \cref{eq:cosine_identity_negative} and \cref{eq:cosine_identity_positive}, only the even (odd, respectively) terms remain. Hence, it follows that
\begin{lemma}
Let $f \in \mathbb{Z}^+$ and $x_1, x_2, \dots, x_f, y_1, y_2, \dots, y_f \in \mathbb{R}$. Then, we have that
	\begin{align}
\prod_{i = 1}^{f} \cos(x_i - y_i) + \prod_{i = 1}^{f} \cos(x_i + y_i) &= 2 \sum_{\substack{\mu \in \bbF^f \\ |\mu| \mathrm{\, even}}} \prod_{i = 1}^{f} \cos^{1-\mu_i}x_i \cos^{1-\mu_i}y_i \sin^{\mu_i}x_i \sin^{\mu_i}y_i,
\label{eq:cosine_identity_even}
\\
\prod_{i = 1}^{f} \cos(x_i - y_i) - \prod_{i = 1}^{f} \cos(x_i + y_i) &= 2 \sum_{\substack{\mu \in \bbF^f \\ |\mu| \mathrm{\, odd}}} \prod_{i = 1}^{f} \cos^{1-\mu_i}x_i \cos^{1-\mu_i}y_i \sin^{\mu_i}x_i \sin^{\mu_i}y_i.\label{eq:cosine_identity_odd}
\end{align}
 \label{lemma3}
\end{lemma}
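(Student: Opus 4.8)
The plan is to derive both identities directly from Lemma~\ref{lemma1}, which already expresses each of $\prod_{i=1}^f \cos(x_i - y_i)$ and $\prod_{i=1}^f \cos(x_i + y_i)$ as a sum over $\mu \in \bbF^f$ of one common family of terms; the only difference between the two expansions is the sign factor $(-1)^{|\mu|}$ appearing in the second. Writing $T_\mu = \prod_{i=1}^f \cos^{1-\mu_i}x_i \cos^{1-\mu_i}y_i \sin^{\mu_i}x_i \sin^{\mu_i}y_i$ for the shared summand, \cref{eq:cosine_identity_negative} reads $\prod_{i=1}^f \cos(x_i - y_i) = \sum_{\mu \in \bbF^f} T_\mu$, while \cref{eq:cosine_identity_positive} reads $\prod_{i=1}^f \cos(x_i + y_i) = \sum_{\mu \in \bbF^f} (-1)^{|\mu|} T_\mu$.

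First I would add these two expansions, collecting the coefficient $1 + (-1)^{|\mu|}$ in front of each $T_\mu$. This coefficient equals $2$ when $|\mu|$ is even and vanishes when $|\mu|$ is odd, so the combined sum collapses to twice the sum over even-weight indices, yielding \cref{eq:cosine_identity_even}. Next I would subtract the second expansion from the first, producing the coefficient $1 - (-1)^{|\mu|}$ on each term; this is $0$ for even $|\mu|$ and $2$ for odd $|\mu|$, leaving twice the sum over odd-weight indices, which is \cref{eq:cosine_identity_odd}.

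There is no substantive obstacle here: the whole content of the lemma is the elementary parity observation that $(-1)^{|\mu|}$ acts as a selector for even versus odd $|\mu|$ under addition and subtraction. The only point requiring care is bookkeeping—verifying that the summand $T_\mu$ is genuinely identical in both identities of Lemma~\ref{lemma1}, so that the two series may be combined term-by-term over the common index set $\bbF^f$, and then correctly restricting that index set to the even-$|\mu|$ and odd-$|\mu|$ subsets in the respective conclusions.
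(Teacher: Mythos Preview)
Your proposal is correct and matches the paper's own argument essentially verbatim: the paper states that by taking the sum (respectively, difference) of \cref{eq:cosine_identity_negative} and \cref{eq:cosine_identity_positive}, only the even (respectively, odd) terms survive, which is precisely the parity selector observation you spell out.
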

We now make a remark about the above identities: while the right-hand sides of \cref{eq:cosine_identity_negative}, \cref{eq:cosine_identity_positive}, \cref{eq:cosine_identity_even}, and \cref{eq:cosine_identity_odd} each involves a sum over exponentially (in $f$) many terms (the cardinality of $\mathbb F_2^f$ is $2^f$), their left-hand sides involve just products of polynomially (in $f$) many terms. Hence, going from the right-hand sides of these identities to their left-hand sides results in exponential savings in computational cost. This will be useful for the expressions that we derive in \cref{xqaoa_proof1}. 

\subsection{Proof of \texorpdfstring{\cref{XQAOA_Full_Thm}}{3}} \label{xqaoa_proof1}

\begin{proof}[Proof of \cref{XQAOA_Full_Thm}]
Consider the XQAOA ansatz applied to MaxCut with $\ket{s} = \ket{+}^{\otimes n}$ and $Q =  e^{-i \bolddot\alpha Y}e^{-i \bolddot\beta X}e^{-i \bolddot\gamma C}$,
where $\bolddot\alpha Y = \sum_{i=1}^n \alpha_i Y_i$, $\bolddot\beta X = \sum_{i=1}^n \beta_i X_i$, and $\bolddot\gamma C = \sum_{\{u,v\}\in E} \gamma_{uv} C_{uv}
=\frac 12\sum_{\{u,v\}\in E} \gamma_{uv} w_{uv}(I-Z_u Z_v)
$.

Observe that 
\begin{equation}
    \langle C\rangle_{\mathrm{XY}} = \left\langle s\left|Q^{\dagger} C Q\right| s\right\rangle= \sum_{\{u,v\} \in E} \frac{w_{uv}}{2}-\frac{w_{uv}}{2} \langle s|Q^{\dagger} Z_{u} Z_{v} Q| s\rangle .
\end{equation}
Hence, to compute the expectation value $\langle C \rangle_{\mathrm{XY}}$, it suffices to compute each term
\begin{equation}
    \left\langle C_{u v}\right\rangle_{\mathrm{XY}}:=\frac{w_{uv}}{2}-\frac{w_{uv}}{2} \bra s Q^{\dagger} Z_{u} Z_{v} Q\ket  s
\label{XQAOA_Intro_Exp}
\end{equation}
in the sum separately. 

For the rest of this proof, we fix an edge $\{u,v\} \in E$. We shall evaluate the product $Q^{\dagger} Z_{u} Z_{v}Q$ in \cref{XQAOA_Intro_Exp} by conjugating the Pauli operator $Z_{u} Z_{v}$ by the mixing unitary and then by the problem unitary. By the commutation properties of the Pauli matrices $\Lambda$ and the fact that they satisfy $e^{-i \theta \Lambda}=\cos \theta I-i \sin \theta \Lambda$, it follows that most of the terms in the mixing unitary $e^{-i \bolddot\alpha Y}e^{-i \bolddot{\beta}X}=\prod_{j=1}^{n} e^{-i \alpha_j Y_{j}}e^{-i \beta_j X_{j}}$ commute through $Z_u Z_v$ and annihilate their inverses. Hence,
\begin{equation}
    \begin{split}
        e^{i \bolddot{\beta} X} e^{i \bolddot{\alpha} Y} Z_u Z_v  e^{-i \bolddot{\alpha} Y} e^{-i \bolddot{\beta} X} = \left( e^{i \beta_u X_u}e^{2i \alpha_u Y_u}e^{i \beta_u X_u}  Z_u  \right)  \otimes   \left(e^{i \beta_v X_v}e^{2i \alpha_v Y_v} e^{i \beta_v X_v} Z_v  \right) .
    \end{split}
    \label{eq:XQAOA_Mixing_Exp_0}
\end{equation}
The factors in the above Kronecker product can be expanded as follows. For $a\in\{u,v\}$,
\begin{equation}
    \begin{split}
         e^{i \beta_a X_a}e^{2i \alpha_a Y_a}e^{i \beta_a X_a}  Z_a &= \left( \cos \beta_a + i\sin \beta_a X_a \right)\left( \cos 2\alpha_a  + i\sin 2\alpha_a Y_u \right)\left( \cos \beta_a  + i\sin \beta_a X_a \right)Z_a \\
        &= \cos2\alpha_a  \cos2\beta_a Z_a + \cos2\alpha_a  \sin2\beta_a Y_a - \sin2\alpha_a X_a .
    \end{split}
\label{XQAOA_Mixing_Exp}
\end{equation}
Substituting this into equation \cref{eq:XQAOA_Mixing_Exp_0} gives
\begin{equation}
\begin{split}
e^{i \bolddot{\beta} X} e^{i \bolddot{\alpha} Y} Z_u Z_v  e^{-i \bolddot{\alpha} Y} e^{-i \bolddot{\beta} X}
&= \cos2\alpha_u  \cos2\beta_u \cos2\alpha_v  \cos2\beta_v Z_u Z_v + \cos2\alpha_u  \cos2\beta_u \cos2\alpha_v \sin2\beta_v Z_u Y_v\\
& \quad - \cos2\alpha_u  \cos2\beta_u \sin 2 \alpha_v Z_uX_v + \cos2\alpha_u \sin 2 \beta_u \cos 2\alpha_v \cos 2\beta_v Y_u Z_v\\
& \quad + \cos 2 \alpha_u \sin 2\beta_u \cos 2 \alpha_v \sin 2\beta_v Y_uY_v - \cos2 \alpha_u \sin 2 \beta_u \sin 2 \alpha_v Y_u X_v\\
& \quad - \sin 2 \alpha_u \cos 2 \alpha_v \cos 2 \beta_v X_uZ_v - \sin 2 \alpha_u \cos 2 \alpha_v \sin 2 \beta_v X_u Y_v\\
& \quad + \sin 2\alpha_u \sin 2 \alpha_v X_uX_v.
\end{split}
\label{A3_Expansion}
\end{equation}
Hence, by substituting this expression into \cref{XQAOA_Intro_Exp}, the expected cost function corresponding to the edge $\{u,v\}\in E$ can be written as
\begin{align}
    \langle C_{uv}\rangle_{\mathrm{XY}} = \frac{w_{uv}}2 - \frac{w_{uv}}2 &\Big\{ 
    \cos2\alpha_u  \cos2\beta_u \cos2\alpha_v  \cos2\beta_v \xi(Z,Z) + \cos2\alpha_u  \cos2\beta_u \cos2\alpha_v \sin2\beta_v \xi(Z,Y)
    \nonumber\\
& \quad - \cos2\alpha_u  \cos2\beta_u \sin 2 \alpha_v \xi(Z,X) + \cos2\alpha_u \sin 2 \beta_u \cos 2\alpha_v \cos 2\beta_v \xi(Y,Z)\nonumber\\
& \quad + \cos 2 \alpha_u \sin 2\beta_u \cos 2 \alpha_v \sin 2\beta_v \xi(Y,Y) - \cos2 \alpha_u \sin 2 \beta_u \sin 2 \alpha_v \xi(Y,X)\nonumber\\
& \quad - \sin 2 \alpha_u \cos 2 \alpha_v \cos 2 \beta_v \xi(X,Z) - \sin 2 \alpha_u \cos 2 \alpha_v \sin 2 \beta_v \xi(X,Y)\nonumber\\
& \quad + \sin 2\alpha_u \sin 2 \alpha_v \xi(X,X)\Big\}, 
\label{eq:expected_cost_function_interm}
\end{align}
where for single-qubit Pauli matrices $P,Q \in \{X,Y,Z\}$, we have defined
\begin{align}
    \xi(P,Q) &= \bra s \eta(P,Q) \ket s,
    \label{eq:xi}
    \\
    \eta(P,Q) &= e^{i \bolddot{\gamma} C} P_u Q_v e^{-i \bolddot{\gamma} C}.
    \label{eq:eta}
\end{align}

Moving forward, the approach we take is as follows. Firstly, we shall evaluate the expression for $\eta(P,Q)$ in \cref{eq:eta} for all $P,Q \in \{X,Y,Z\}$. Secondly, we shall 
substitute our expressions for $\eta(P,Q)$ into \cref{eq:xi} to derive analytical expressions for $\xi(P,Q)$. Thirdly and also finally, we shall substitute these analytical expressions into \cref{eq:expected_cost_function_interm} to obtain our desired expression \cref{xqaoa_full_exp}.

Before we execute these three steps, we first introduce some notation to help us keep track of the neighbours of the edge $\{u,v\}$: let
\begin{align}
    \mathcal N_{u\bbackslash v} = \mathcal N(u) \backslash \left(
    \mathcal N(v) \cup \{v\}
    \right)
    =\{\omega_1, \omega_2,\ldots, \omega_b\}
\end{align}
be the set of neighbors of $u$ that are of distance 2 or greater from $v$. Similarly, let 
\begin{align}
    \mathcal N_{v\bbackslash u} = \mathcal N(v) \backslash \left(
    \mathcal N(u) \cup \{u\}
    \right)
    =\{q_1, q_2,\ldots, q_c\}
\end{align}
be the set of neighbors of $v$ that are of distance 2 or greater from $u$. Next, let
\begin{align}
    \mathcal N_{uv} = \mathcal N(u) \cap \mathcal N(v) = \{a_1, a_2,\ldots ,a_f\} = \{\omega_{b+1}, \omega_{b+2},\ldots,\omega_d\} =
    \{q_{c+1}, q_{c+2},\ldots,q_e\},
    \label{eq:Nuv}
\end{align}
where $a_i = \omega_{b+i} = q_{c+i}$ for $i=1,\ldots ,f$,
be the set of vertices that are neighbours of both $u$ and $v$, i.e., $\mathcal N_{uv}$ comprises those nodes in $V$ that form a triangle with both $u$ and $v$. Finally, let
\begin{align}
    \mathcal N_{u\backslash v} =  \mathcal N(u) \backslash \{v\} =  \mathcal N_{u\bbackslash v} \cup \mathcal N_{uv} = \{\omega_1,\ldots, \omega_b,a_1,\ldots, a_f\} = \{\omega_1,\ldots, \omega_b,\ldots, \omega_d\}
    \label{eq:N_u_slash_v}
\end{align}
be the set of neighbours of $u$ that are not $v$ and let 
\begin{align}
    \mathcal N_{v\backslash u} = \mathcal N_{v\bbackslash u} \cup \mathcal N_{uv} = \{q_1,\ldots, q_c,a_1,\ldots, a_f\} = \{q_1, \ldots q_c,\ldots, q_e\}
    \label{eq:N_v_slash_u}
\end{align}
be the set of neighours of $v$ that are not $u$. We are now ready to execute our aforementioned three steps.
\\~\\
\underline{Step 1: Evaluation of $\eta(P,Q)$}
\\~\\
First, we rewrite \cref{eq:eta} as
\begin{align}
    \eta(P,Q) &= e^{i \bolddot{\gamma} {C'}} P_u Q_v e^{-i \bolddot{\gamma} {C'}},
    \label{eq:eta_with_prime}
\end{align}
where
\begin{align}
    \bolddot{\gamma} {C'} = \sum_{\{x,y\}\in E} \gamma_{xy} C'_{xy}, \quad C'_{xy}=
    -\frac 12  w_{xy} Z_x Z_y,
\end{align}
since the term $\frac 12\sum_{\{u,v\}\in E} \gamma_{uv} w_{uv} I$ in $C_{uv}$ is cancelled by its inverse and does not contribute to \cref{eq:eta}. Next, we expand $
    \bolddot{\gamma} {C'} $ as 
\begin{align}
    \bolddot{\gamma} {C'} = \gamma_{uv} C_{uv}' + \bolddot{\gamma} {C_u'} + \bolddot{\gamma} {C_v'} + \sum_{\{x,y\}\in E_{(uv)}} \gamma_{xy} C'_{xy},
    \label{eq:gammaCprime}
\end{align}
where 
\begin{align}
    \bolddot{\gamma} {C_u'} &=
    \sum_{i=1}^d \gamma_{u \omega_i} C_{u \omega_i}' =
    \sum_{i=1}^b \gamma_{u \omega_i} C_{u \omega_i}' +
    \sum_{i=1}^f \gamma_{u a_i} C_{u a_i}'
\end{align}
contains terms involving $u$ but not $v$, and 
\begin{align}
    \bolddot{\gamma} {C_v'} &=
    \sum_{i=1}^e \gamma_{v q_i} C_{v q_i}' =
    \sum_{i=1}^c \gamma_{v q_i} C_{v q_i}' +
    \sum_{i=1}^f \gamma_{v a_i} C_{v a_i}'
\end{align}
contains terms involving $v$ but not $u$, and $E_{(uv)} = \{ \{a,b\}\in E : a,b \notin \{u,v\}\}$ denotes the set of edges that do not contain either $u$ or $v$ as an endpoint. By substituting \cref{eq:gammaCprime} into \cref{eq:eta_with_prime}, we obtain
\begin{align}
    \eta(P,Q) = 
    e^{i \bolddot{\gamma}{C_u'}} e^{i \bolddot{\gamma}{C_v'}}e^{i \gamma_{uv}C'_{uv}} P_u Q_v e^{-i \gamma_{uv}C'_{uv}} e^{-i \bolddot{\gamma}{C_v'}} e^{-i \bolddot{\gamma}{C_u'}}
    = \Gamma_{PQ} P_u Q_v,
    \label{eq:etaPQGamma}
\end{align}
where $\Gamma_{PQ} = \eta(P,Q) P_u Q_v$. By using the fact that Pauli operators either commute or anti-commute, evaluating \cref{eq:etaPQGamma} gives
\begin{align}
    \Gamma_{XX} &= \Gamma_{XY} = \Gamma_{YX} = \Gamma_{YY} = e^{2i \bolddot{\gamma}{C_u'}}
    e^{2i \bolddot{\gamma}{C_v'}},
    \label{eq:gammaXX}
    \\
    \Gamma_{XZ} &= \Gamma_{YZ} = e^{2i \gamma_{uv}C_{uv}'} 
    e^{2i \bolddot{\gamma}{C_u'}} ,
    \label{eq:gammaXZ}
    \\
    \Gamma_{ZX} &= \Gamma_{ZY} = e^{2i \gamma_{uv}C_{uv}'} 
    e^{2i \bolddot{\gamma}{C_v'}},
    \label{eq:gammaZX}
    \\
    \Gamma_{ZZ} &= I.
    \label{eq:gammaZZ}
\end{align}

To evaluate \cref{eq:gammaXX}, we first compute
\begin{align}
    e^{2 i \bolddot{\gamma}{C_u'}} &=
    \prod_{i=1}^d \left[
    \cos(\gamma_{u\omega_i}w_{u \omega_i})I - i \sin (\gamma_{u\omega_i}w_{u \omega_i}) Z_{u} Z_{\omega_i}
    \right]
    \nonumber\\
    &= \sum_{x\in \mathbb F_2^d} \cos^{1-x_1}(\gamma_{u\omega_1}w_{u \omega_1})\cdots
    \cos^{1-x_d}(\gamma_{u\omega_d}w_{u \omega_d}) [-i \sin(\gamma_{u\omega_1}w_{u \omega_1})]^{x_1}\cdots
    [-i \sin(\gamma_{u\omega_d}w_{u \omega_d})]^{x_d} Z_u^{|x|} Z_{\omega_1}^{x_1}\cdots
    Z_{\omega_d}^{x_d},
    \label{eq:gammaXX_u}
\end{align}
where we applied the trick in \cref{eq:product_to_sumss_trick} to each term in the above product. In the above expression, $|x| = \sum_{i=1}^d x_i$ denotes the Hamming weight of the string $x\in \{0,1\}^d$. Similarly,
\begin{align}
    e^{2 i \bolddot{\gamma}{C_v'}} 
    = \sum_{y\in \mathbb F_2^e} \cos^{1-y_1}(\gamma_{vq_1}w_{v q_1})\cdots
    \cos^{1-y_e}(\gamma_{vq_e}w_{v q_e}) [-i \sin(\gamma_{vq_1}w_{v q_1})]^{y_1}\cdots
    [-i \sin(\gamma_{vq_e}w_{v q_e})]^{y_e} Z_v^{|y|} Z_{q_1}^{y_1}\cdots
    Z_{q_e}^{y_e}.
    \label{eq:gammaXX_v}
\end{align}

By taking the product of \cref{eq:gammaXX_u}
and \cref{eq:gammaXX_v}, and using \cref{eq:Nuv} to relabel the vertices in $\mathcal N_{uv}$ by $a_i$'s, we obtain the following expression for \cref{eq:gammaXX}:
\begin{align}
\label{eq:gammaXX_new}
    \Gamma_{XX} &= 
    \sum_{\alpha \in \mathbb F_2^b}
    \sum_{\beta \in \mathbb F_2^c}
    \sum_{\mu \in \mathbb F_2^f}
    \sum_{\nu \in \mathbb F_2^f}
    \cos^{1-\alpha_1}(\gamma_{u\omega_1}w_{u \omega_1})\cdots
    \cos^{1-\alpha_b}(\gamma_{u\omega_b}w_{u \omega_b})
    \cos^{1-\mu_1}(\gamma_{u a_1}w_{u a_1})\cdots
    \cos^{1-\mu_f}(\gamma_{u a_f}w_{u a_f})\nonumber\\
    &\qquad\times
        \cos^{1-\beta_1}(\gamma_{vq_1}w_{v q_1})\cdots
    \cos^{1-\beta_c}(\gamma_{vq_c}w_{v q_c})
    \cos^{1-\nu_1}(\gamma_{v a_1}w_{v a_1})\cdots
    \cos^{1-\nu_f}(\gamma_{v a_f}w_{v a_f})\nonumber\\
    &\qquad\times
    [-i \sin(\gamma_{u\omega_1}w_{u \omega_1})]^{\alpha_1}\cdots
    [-i \sin(\gamma_{u\omega_b}w_{u \omega_b})]^{\alpha_b}
    [-i \sin(\gamma_{u a_1}w_{u a_1})]^{\mu_1}\cdots
    [-i \sin(\gamma_{ua_f}w_{u a_f})]^{\mu_f}\nonumber\\
    &\qquad\times
    [-i \sin(\gamma_{vq_1}w_{v q_1})]^{\beta_1}\cdots
    [-i \sin(\gamma_{vq_c}w_{v q_c})]^{\beta_c}
    [-i \sin(\gamma_{v a_1}w_{v a_1})]^{\nu_1}\cdots
    [-i \sin(\gamma_{va_f}w_{v a_f})]^{\nu_f}\nonumber\\
    &\qquad\times
    Z_u^{|\alpha|+|\mu|} \cdot
    Z_v^{|\beta|+|\nu|} \cdot
    \prod_{i=1}^b Z_{\omega_i}^{\alpha_i}\cdot
    \prod_{i=1}^c Z_{q_i}^{\beta_i}
    \cdot
    \prod_{i=1}^f Z_{a_i}^{\mu_i+\nu_i}.
\end{align}

Next, by substituting \cref{eq:gammaXX_u} into \cref{eq:gammaXZ}, we obtain
\begin{align}
\label{eq:gammaXZ_new}
    \Gamma_{XZ} &=
    \sum_{x_0 x_1 \ldots x_d \in \mathbb F_2^{d+1}}
    \cos^{1-x_0}(\gamma_{uv}w_{uv})
    \cos^{1-x_1}(\gamma_{u\omega_1}w_{u \omega_1})\cdots
    \cos^{1-x_d}(\gamma_{u\omega_d}w_{u \omega_d}) \nonumber\\
    &\qquad\times [-i \sin(\gamma_{uv}w_{u v})]^{x_0}[-i \sin(\gamma_{u\omega_1}w_{u \omega_1})]^{x_1}\cdots
    [-i \sin(\gamma_{u\omega_d}w_{u \omega_d})]^{x_d} Z_u^{|x|} Z_v^{x_0} Z_{\omega_1}^{x_1}\cdots
    Z_{\omega_d}^{x_d}.
\end{align}
Similarly, by substituting \cref{eq:gammaXX_v} into \cref{eq:gammaZX}, we obtain
\begin{align}
\label{eq:gammaZX_new}
    \Gamma_{ZX} &=
    \sum_{y_0 y_1 \ldots y_{{e}} \in \mathbb F_2^{e+1}}
    \cos^{1-y_0}(\gamma_{uv}w_{uv})
    \cos^{1-y_1}(\gamma_{vq_1}w_{v q_1})\cdots
    \cos^{1-y_e}(\gamma_{vq_e}w_{v q_e}) \nonumber\\
    &\qquad\times [-i \sin(\gamma_{uv}w_{u v})]^{y_0}[-i \sin(\gamma_{vq_1}w_{v q_1})]^{y_1}\cdots
    [-i \sin(\gamma_{vq_e}w_{v q_e})]^{y_e} Z_u^{y_0} Z_v^{|y|} Z_{q_1}^{y_1}\cdots
    Z_{q_{{e}}}^{y_{{e}}}.
\end{align}

This completes Step 1, where $\eta(P,Q)$ is specified by \cref{eq:etaPQGamma}, \cref{eq:gammaXX_new}, \cref{eq:gammaXZ_new}, \cref{eq:gammaZX_new}, and
\cref{eq:gammaZZ}.
\\~\\
\underline{Step 2: Evaluation of $\xi(P,Q)$}
\\~\\
There are $3^2=9$ different choices of $PQ \in \{X,Y,Z\}^2$. We will split these choices into four cases, as follows:
\begin{itemize}
    \item[\small $\bullet$]
\underline{Case 1}: $PQ = XX, XY, YX, YY$. By  substituting \cref{eq:etaPQGamma} and \cref{eq:gammaXX_new} into \cref{eq:xi}, we obtain
\begin{align}
    \xi(P,Q) &= \sum_{\alpha \in \mathbb F_2^b}
    \sum_{\beta \in \mathbb F_2^c}
    \sum_{\mu \in \mathbb F_2^f}
    \sum_{\nu \in \mathbb F_2^f}
    \cos^{1-\alpha_1}(\gamma_{u\omega_1}w_{u \omega_1})\cdots
    \cos^{1-\alpha_b}(\gamma_{u\omega_b}w_{u \omega_b})
    \cos^{1-\mu_1}(\gamma_{u a_1}w_{u a_1})\cdots
    \cos^{1-\mu_f}(\gamma_{u a_f}w_{u a_f})\nonumber\\
    &\qquad\times
        \cos^{1-\beta_1}(\gamma_{vq_1}w_{v q_1})\cdots
    \cos^{1-\beta_c}(\gamma_{vq_c}w_{v q_c})
    \cos^{1-\nu_1}(\gamma_{v a_1}w_{v a_1})\cdots
    \cos^{1-v_f}(\gamma_{v a_f}w_{v a_f})\nonumber\\
    &\qquad\times
    [-i \sin(\gamma_{u\omega_1}w_{u \omega_1})]^{\alpha_1}\cdots
    [-i \sin(\gamma_{u\omega_b}w_{u \omega_b})]^{\alpha_b}
    [-i \sin(\gamma_{u a_1}w_{u a_1})]^{\mu_1}\cdots
    [-i \sin(\gamma_{ua_f}w_{u a_f})]^{\mu_f}\nonumber\\
    &\qquad\times
    [-i \sin(\gamma_{vq_1}w_{v q_1})]^{\beta_1}\cdots
    [-i \sin(\gamma_{vq_c}w_{v q_c})]^{\beta_c}
    [-i \sin(\gamma_{v a_1}w_{v a_1})]^{\nu_1}\cdots
    [-i \sin(\gamma_{va_f}w_{v a_f})]^{\nu_f}\nonumber\\
    &\qquad\times
    \underbrace{
    \tr \left\{ \ketbra{s}{s}  Z_u^{|\alpha|+|\mu|} P_u \cdot
    Z_v^{|\beta|+|\nu|} Q_v \cdot
    \prod_{i=1}^b Z_{\omega_i}^{\alpha_i}\cdot
    \prod_{i=1}^c Z_{q_i}^{\beta_i}
    \cdot
    \prod_{i=1}^f Z_{a_i}^{\mu_i+\nu_i}
    \right\}
    }_{\circled{1}}.
    \label{eq:xi_PQ_1}
\end{align}
By writing the initial state as $\ketbra ss = \bigotimes_{j=1}^n \frac{1}{2}(I+X_j)$, the last line of \cref{eq:xi_PQ_1} can be expanded as
\begin{align}
\label{eq:circled_1}
    \circled{1} &= \tr\left\{\frac{I+X_u}{2} Z_u^{|\alpha|+|\mu|} P_u\right\}
    \tr\left\{\frac{I+X_v}{2} Z_v^{|\beta|+|\nu|} Q_v\right\} \nonumber\\
    &\qquad\times
    \prod_{i=1}^b
    \underbrace{\tr\left\{\frac{I+X_{\omega_i}}{2} Z_{\omega_i}^{\alpha_i} \right\}}_{= \, \delta_{\alpha_i,0}}
    \prod_{i=1}^c
    \underbrace{\tr\left\{\frac{I+X_{q_i}}{2} Z_{q_i}^{\beta_i} \right\}}_{= \, \delta_{\beta_i,0}}
    \prod_{i=1}^f
    \underbrace{\tr\left\{\frac{I+X_{a_i}}{2} Z_{a_i}^{\mu_i+\nu_i} \right\}}_{= \, \delta_{\mu_i, \nu_i}}
    \nonumber\\
    &= \underbrace{\tr\left\{\frac{I+X_{{u}}}{2} Z_{{u}}^{|\mu|} P_{{u}} \right\}
    }_{\circled{2}}
    \underbrace{\tr\left\{\frac{I+X_{{v}}}{2} Z_{{v}}^{|\nu|} Q_{{v}}\right\}
    }_{\circled{3}}
    \prod_{i=1}^b \delta_{\alpha_i,0}
    \prod_{i=1}^c \delta_{\beta_i,0}
    \prod_{i=1}^f \delta_{\mu_i,\nu_i}.
\end{align}
Now, for $P \in \{X,Y\}$ and for $|\mu| \in \mathbb N$, 
\begin{align}
    \circled{2} = [P=X][|\mu| \mbox{ even}] - i
    [P=Y][|\mu| \mbox{ odd}],
    \label{eq:circled_2}
\end{align}
where the $[A]$ denotes the Iverson bracket of statement $A$; i.e.~$[A]=1$ if $A$ is true and 0 otherwise. Similarly, 
\begin{align}
    \circled{3} = [Q=X][|\mu| \mbox{ even}] - i
    [Q=Y][|\mu| \mbox{ odd}].
    \label{eq:circled_3}
\end{align}
Hence, the product of \cref{eq:circled_2} and \cref{eq:circled_3} is
\begin{align}
    \circled{2}\times
    \circled{3}
    =
    [P=Q=X][|\mu| \mbox{ even}] - [P=Q=Y][|\mu| \mbox{ odd}].
    \label{eq:circled_2_3}
\end{align}
Substituting \cref{eq:circled_2_3} into \cref{eq:circled_1} gives
\begin{align}
    \circled{1} = \delta_{\alpha,0} \delta_{\beta,0}
    \delta_{\mu,\nu} ([P=Q=X][|\mu| \mbox{ even}] - [P=Q=Y][|\mu| \mbox{ odd}]).
    \label{eq:circled_1_a}
\end{align}
Substituting \cref{eq:circled_1_a} into \cref{eq:xi_PQ_1} and using the Kronecker deltas to eliminate terms in the sum, we get
\begin{align}
    \xi(P,Q) &= \sum_{\mu \in \mathbb F_2^f}
    \cos(\gamma_{u\omega_1}w_{u \omega_1})\cdots
    \cos(\gamma_{u\omega_b}w_{u \omega_b})
    \cos^{1-\mu_1}(\gamma_{u a_1}w_{u a_1})\cdots
    \cos^{1-\mu_f}(\gamma_{u a_f}w_{u a_f})\nonumber\\
    &\qquad\times
        \cos(\gamma_{vq_1}w_{v q_1})\cdots
    \cos(\gamma_{vq_c}w_{v q_c})
    \cos^{1-\mu_1}(\gamma_{v a_1}w_{v a_1})\cdots
    \cos^{1-\mu_f}(\gamma_{v a_f}w_{v a_f})
    \nonumber\\
    &\qquad\times
    (-i)^{2|\mu|} \sin^{\mu_1}(\gamma_{ua_1}w_{u a_1})\cdots
    \sin^{\mu_f}(\gamma_{ua_f}w_{u a_f})
    \sin^{\mu_1}(\gamma_{va_1}w_{v a_1})\cdots
    \sin^{\mu_f}(\gamma_{va_f}w_{v a_f})
    \nonumber\\
    &\qquad\times 
    \left([P=Q=X][|\mu| \mbox{ even}] - [P=Q=Y][|\mu| \mbox{ odd}]\right)
    \nonumber\\
    &=
    \prod_{\omega\in \mathcal N_{u\bbackslash v}} \cos (\gamma_{u\omega} w_{u\omega})    \prod_{q\in \mathcal N_{v\bbackslash u}} \cos (\gamma_{vq} w_{vq}) 
    \Bigg(
    [P=Q=X] \sum_{
    \substack{\mu \in \mathbb F_2^f \\ |\mu| \text{ even}}}
    +
    [P=Q=Y] \sum_{
    \substack{\mu \in \mathbb F_2^f \\ |\mu| \text{ odd}}}
    \Bigg)
    \nonumber\\
    &\qquad
    \prod_{i=1}^f \cos^{1-\mu_i}(\gamma_{ua_i}w_{ua_i}) \cos^{1-\mu_i}(\gamma_{va_i}w_{va_i})
    \sin^{\mu_i}(\gamma_{ua_i}w_{ua_i}) \sin^{\mu_i}(\gamma_{va_i}w_{va_i}).    
    \label{eq:xi_PQ_2}
\end{align}
Since the Iverson brackets in \cref{eq:xi_PQ_2} vanish when $PQ = XY$ or $YX$, it follows that 
\begin{align}
    \xi(X,Y) = \xi(Y,X) = 0.
    \label{eq:xi_XY_final}
\end{align}
When $PQ=XX$, \cref{eq:xi_PQ_2} reduces to
\begin{align}
    \xi(X,X) &=
    \prod_{\omega\in \mathcal N_{u\bbackslash v}} \cos (\gamma_{u\omega} w_{u\omega})    \prod_{q\in \mathcal N_{v\bbackslash u}} \cos (\gamma_{vq} w_{vq}) 
    \nonumber\\
    &\qquad
    \times \sum_{
    \substack{\mu \in \mathbb F_2^f \\ |\mu| \text{ even}}}\prod_{i=1}^f \cos^{1-\mu_i}(\gamma_{ua_i}w_{ua_i}) \cos^{1-\mu_i}(\gamma_{va_i}w_{va_i})
    \sin^{\mu_i}(\gamma_{ua_i}w_{ua_i}) \sin^{\mu_i}(\gamma_{va_i}w_{va_i})    
    \nonumber\\
    &=
    \frac 12 \prod_{\omega\in \mathcal N_{u\bbackslash v}} \cos (\gamma_{u\omega} w_{u\omega})    \prod_{q\in \mathcal N_{v\bbackslash u}} \cos (\gamma_{vq} w_{vq}) \nonumber\\
    &\qquad \times 
    \left[ \prod_{i=1}^f \cos(\gamma_{u a_i} w_{ua_i}+\gamma_{va_i} w_{va_i})+
    \prod_{i=1}^f \cos(\gamma_{u a_i} w_{ua_i}-\gamma_{va_i} w_{va_i})
    \right]
    \nonumber\\
    &=
    \frac 12 \prod_{\omega\in \mathcal N_{u\bbackslash v}} \cos (\gamma_{u\omega} w_{u\omega})    \prod_{q\in \mathcal N_{v\bbackslash u}} \cos (\gamma_{vq} w_{vq}) \nonumber\\
    &\qquad \times 
    \left[ \prod_{q\in \mathcal N_{uv}} \cos(\gamma_{u q} w_{uq}-\gamma_{vq} w_{vq})+
    \prod_{q\in \mathcal N_{uv}} \cos(\gamma_{uq} w_{uq}+\gamma_{vq} w_{vq})
    \right],
    \label{eq:xi_XX_final}
\end{align}
where the second equality follows from \cref{eq:cosine_identity_even}. 

Similarly, for the case $PQ=YY$, utilising \cref{eq:cosine_identity_odd} gives
\begin{align}
    \xi(Y,Y)&=
    \frac 12 \prod_{\omega\in \mathcal N_{u\bbackslash v}} \cos (\gamma_{u\omega} w_{u\omega})    \prod_{q\in \mathcal N_{v\bbackslash u}} \cos (\gamma_{vq} w_{vq}) \nonumber\\
    &\qquad \times 
    \left[ \prod_{q\in \mathcal N_{uv}} \cos(\gamma_{u {q}} w_{u{q}}-\gamma_{v{q}} w_{v{q}})-
    \prod_{q\in \mathcal N_{uv}} \cos(\gamma_{u {q}} w_{u {q}}+\gamma_{v {q}} w_{v {q}})
    \right],
    \label{eq:xi_YY_final}
\end{align}
which differs from the expression given by \cref{eq:xi_XX_final} for $\xi(X,X)$ by only a single minus sign. As we mentioned in \cref{sec:useful_identities}, the identities \cref{eq:cosine_identity_even} and \cref{eq:cosine_identity_odd} allowed us to simplify the exponential sums in \cref{eq:xi_PQ_2} to get exponential savings in the computational cost in the worst case. 
\item[\small $\bullet$]
\underline{Case 2}: $PQ=XZ, YZ$. By substituting \cref{eq:etaPQGamma} and \cref{eq:gammaXZ_new} into \cref{eq:xi}, we obtain, for $P=X,Y$,
\begin{align}
    \xi(P,Z) &= \sum_{x_0 x_1 \ldots x_d \in \mathbb F_2^{d+1}}
    \cos^{1-x_0}(\gamma_{uv}w_{uv})
    \cos^{1-x_1}(\gamma_{u\omega_1}w_{u \omega_1})\cdots
    \cos^{1-x_d}(\gamma_{u\omega_d}w_{u \omega_d}) \nonumber\\
    &\qquad\times [-i \sin(\gamma_{uv}w_{u v})]^{x_0}[-i \sin(\gamma_{u\omega_1}w_{u \omega_1})]^{x_1}\cdots
    [-i \sin(\gamma_{u\omega_d}w_{u \omega_d})]^{x_d}
    \nonumber\\
    &\qquad\times
    \underbrace{\tr\left\{\ketbra{s}{s} Z_u^{|x|}P_u Z_v^{x_0+1} Z_{\omega_1}^{x_1}\cdots
    Z_{\omega_d}^{x_d}
    \right\}
    }_{\circled{4}}.
    \label{eq:xi_PQ_3}
\end{align}
By writing $\ketbra ss = \bigotimes_{j=1}^n \frac{1}{2}(I+X_j)$, the last line of {\cref{eq:xi_PQ_3}} can be expanded as
\begin{align}
\label{eq:circled_4}
    \circled{4} &= \tr\left\{\frac{I+X_u}{2} Z_u^{x_0+x_1+\cdots+x_d} P_u\right\}
    \tr\left\{\frac{I+X_v}{2} Z_v^{x_0+1} \right\}
    \prod_{i=1}^d
    \underbrace{\tr\left\{\frac{I+X_{\omega_i}}{2} Z_{\omega_i}^{x_i} \right\}}_{= \, \delta_{x_i,0}}
    \nonumber\\
    &= \tr\left\{ \frac{I+X{_u}}{2} Z{_u} P{_u}\right\} \delta_{x_0,1} \delta_{x_1,0}\ldots\delta_{x_d,0}
    \nonumber\\
    &= -i [P=Y] \delta_{x_0,1} \delta_{x_1,0}\ldots\delta_{x_d,0}
    .
\end{align}
Hence, by substituting \cref{eq:circled_4} into \cref{eq:xi_PQ_3}, we obtain
\begin{align}
    \xi(X,Z) &= 0,
    \label{eq:xi_XZ_final}
    \\
    \xi(Y,Z) &= -\sin(\gamma_{uv}w_{uv}) \prod_{\omega \in \mathcal N_{u\backslash v}}\cos(\gamma_{u\omega}w_{u\omega}). 
    \label{eq:xi_YZ_final}
\end{align}
\item \underline{Case 3}: $PQ=ZX, ZY$. This case is identical to Case 2, but with the vertices $u$ and $v$ swapped. Hence, from \cref{eq:xi_XZ_final} and \cref{eq:xi_YZ_final}, we deduce that
\begin{align}
    \xi(Z,X) &= 0, \label{eq:xi_ZX_final}
    \\
    \xi(Z,Y) &= -\sin(\gamma_{uv}w_{uv}) \prod_{q \in \mathcal N_{v\backslash u}}\cos(\gamma_{vq}w_{vq}). \label{eq:xi_ZY_final}
\end{align}
\item \underline{Case 4}: $PQ=ZZ$. This case is straightforward. One readily computes that $\xi(Z,Z)$ vanishes:
\begin{align}
    \xi(Z,Z) &= \tr\left(\bigotimes_{j=1}^n \frac 12(I+X_j) Z_u Z_v\right) \nonumber\\
    &=
    \prod_{a\neq u,v} \tr\left(\frac {I+X_a}2\right)\cdot \tr\left(\frac {I+X_u}2 Z_u\right)
    \cdot \underbrace{\tr\left(\frac {I+X_v}2 Z_v\right)}_{=0} 
    \nonumber\\
    &=0.
\end{align}
\end{itemize}
From the above calculations, we see that for five of the nine choices of $P$ and $Q$, $\xi(P,Q)$ vanishes: $\xi(Z,Z)=\xi(Z,X)=\xi(Y,X)=\xi(X,Z)=\xi(X,Y)=0$. The remaining four non-vanishing ones are given by
\cref{eq:xi_XX_final}, \cref{eq:xi_YY_final}, \cref{eq:xi_YZ_final}, and \cref{eq:xi_ZY_final}.
\\~\\
\underline{Step 3: Derivation of \cref{xqaoa_full_exp}}
\\~\\
We are now essentially done. By substituting the expressions for $\xi(P,Q)$ that we obtained in Step 2 into \cref{eq:expected_cost_function_interm}, and noting that $\mathcal N_{v\backslash u}=e$, $\mathcal N_{u \backslash v}{=d}$, $\mathcal N_{u\bbackslash v} = d\backslash F$, $\mathcal N_{v\bbackslash u} = e\backslash F$, and $N_{uv} = F$, we obtain \cref{xqaoa_full_exp}.
\end{proof}

\subsection{Proofs of \texorpdfstring{Theorems~\ref{ma_qaoa_theorem} and~\ref{qaoa_hadfield1}}{Theorems 1 and 2}} \label{MA_QAOA_Proof}

The analytical expressions for the expected cost function of both MA-QAOA and QAOA can be derived from XQAOA's analytical formula. Setting all $\alpha_i$ to 0 in \cref{xqaoa_full_exp} gives \cref{ma_qaoa_full_exp} in \cref{ma_qaoa_theorem}. Moreover, setting all $\beta_i$ to $\beta$ and $\gamma_i$ to $\gamma$ in \cref{ma_qaoa_full_exp} and simplifying gives \cref{qaoa_formula} in \cref{qaoa_hadfield1}.

\subsection{Proof of corollary~\ref{XQAOA_Exact_Thm}} \label{xqaoa_proof2}

\begin{proof}[Proof of corollary~\ref{XQAOA_Exact_Thm}]

By setting $\beta_k = 0$ for all $k$ in \cref{xqaoa_full_exp},
we obtain the following expected cost function (corresponding to the edge $\{u,v\}$) for $\XQAOA{Y}$ :
\begin{align}
\langle C_{uv}\rangle_{\mathrm{Y}} =
\frac {w_{uv}}2 - \frac {w_{uv}}4 \sin{2\alpha_u}\sin{2\alpha_v} \prod_{\substack{w \in e \\ w \notin F}}\cos\gamma_{wv}' \prod_{\substack{w \in d \\ w \notin F}}\cos \gamma_{uw}' \left( \prod_{f \in F} \cos(\gamma_{uf}' + \gamma_{vf}') + \prod_{f \in F} \cos(\gamma_{uf}' - \gamma_{vf}') \right).
\label{eq:XQAOAY}
\end{align}
To prove corollary~\ref{XQAOA_Exact_Thm}, we shall specialise \cref{eq:XQAOAY} to unweighted graphs with odd edge degrees. First, we show that if every edge degree of a graph $G=(V,E)$ is odd, then the graph is necessarily two-colourable, i.e., there exists a map $\tau:V\rightarrow \{0,1\}$ such that for all edges $\{u,v\}\in E$, $\tau(u)\neq \tau(v)$. Indeed, an instantiation of such a map $\tau$ is given by:
\begin{align}
    \tau(v) = \begin{cases}
        0, & \deg(v) \text{ even,} \\
        1, & \deg(v) \text{ odd.}
    \end{cases}
\end{align}

It now remains for us to prove that $\tau$ is indeed a two-colouring: let $\{u,v\}\in E$. Note that the edge degree of an edge $\{i,j\}\in E$ is $\deg(\{i,j\}) = |\mathcal{N}(i)\cup \mathcal{N}(j)|-2 = \deg(i)+\deg(j)-2$. By our assumption, the edge degree $\deg(\{u,v\})$ is odd, and hence $\deg(u) + \deg(v)$ is also odd. This implies that exactly one of $\deg(u)$ and $\deg(v)$ is odd, from which it follows that exactly one of $\tau(u)$ and $\tau(v)$ is equal to 1. Hence, $\tau(u)$ and $\tau(v)$ cannot be equal to each other, i.e., $\tau(u) \neq \tau(v)$. This completes our proof that $G$ is two-colourable.

Now, a graph is 2-colourable if and only if it does not contain an odd cycle 
(see \cite[Theorem 5.21]{keller2017applied}, for example). Hence, the graph $G$ considered in corollary~\ref{XQAOA_Exact_Thm} with all edge degrees being odd must be triangle-free, i.e. $G$ does not contain a 3-cycle. This implies that the set $F$ in \cref{eq:XQAOAY} must be empty.

Therefore, setting $F= \emptyset$, $w_{uv}=1$, and $\gamma_{ij}' = \gamma_{ij}$ for all edges $\{i,j\}\in E$ gives the expected $\XQAOA{Y}$ cost function (for the edge $\{u,v\}$) for unweighted graphs with all edge degrees being odd:
\begin{align}
\langle C_{uv}\rangle_{\mathrm{Y}} =
\frac 12 - \frac 12 \sin{2\alpha_u}\sin{2\alpha_v} \prod_{w \in e}\cos\gamma_{wv} \prod_{w \in d}\cos \gamma_{uw}.
\label{eq:XQAOAY_edge_odd}
\end{align}
Note that if all the mixer unitary angles $\alpha_i = \alpha$ are equal and all the problem unitary angles $\gamma_{jk} = \gamma$ are equal, then \cref{eq:XQAOAY_edge_odd} simplifies to
\begin{align}
    \left\langle C_{u v}\right\rangle_{\mathrm{Y}} =\frac{1}{2}- \frac 12 \sin^2 2\alpha  \cos^{|e|+|d|} \gamma.    \label{eq:XQAOAY_edge_odd_equal_angles} 
\end{align}
In \cref{eq:XQAOAY_edge_odd_equal_angles}, if one takes $\alpha=\tfrac \pi 4$ and $\gamma = \pi$, one obtains
\begin{align}
    \left\langle C_{u v}\right\rangle_{\mathrm{Y}} =\frac 12-\frac 12 (-1)^{|e|+|d|}.
\end{align}
By assumption, $G$ has only odd edge degrees. Therefore, $|e|+|d|$ is odd, which implies that
\begin{align}
    \left\langle C_{u v}\right\rangle_{\mathrm{Y}} =\frac 12+\frac 12 = 1.
\end{align}
So, the optimal expected cost function is given by
\begin{align}
    \sum_{\{u,v\}\in E} \left\langle C_{u v}\right\rangle_{\mathrm{Y}} = |E|,
\end{align}
which coincides with the maximum cut size of 2-colourable graphs (the MaxCut of 2-colourable graphs is $|E|$ because one could just choose the maximum cut to be the 2-colouring).

Therefore, the approximation ratio achieved is 1, i.e., the $\XQAOA{Y}$ state $\ket{\gamma, \alpha}$ provides the exact MaxCut solution for $G$.
\end{proof}

\subsection{Proof of \texorpdfstring{corollary~\ref{cor:proof_separation}}{corollary 5}} 
\label{app:proof_separation}

In this appendix, we give a proof of corollary~\ref{cor:proof_separation}, which quantifies the advantage that XQAOA has over QAOA for an unweighted $(k+1)$-vertex star graph $G = S_k$, illustrated in \cref{fig:4-star}. To keep our analysis general, we will take $k$ to be an arbitrary positive integer for now and only later specialise to $k=4$. 

Note that when $k$ is even, all the edge degrees in $S_k$ are odd; hence,
corollary~\ref{XQAOA_Exact_Thm} implies that $\XQAOA{Y}$ with optimal angles
computes the maximum cut of $G$ with an approximation ratio of 1 whenever $k$ is even. Hence, to complete the proof of corollary~\ref{cor:proof_separation}, it remains to show that QAOA$_1$ can do no better than achieve an approximation ratio of $3/4$.

To this end, consider the cost function for
QAOA$_1$ given by \cref{eq:qaoa_unweighted}. For the star graph $S_k$, $|e|=|F|=0$ and $|d|=k-1$. Hence, \cref{eq:qaoa_unweighted} reduces to 
$\langle C_{uv}\rangle = \frac 12 + \frac 14 \sin 4\beta \sin\gamma(1+\cos^{k-1} \gamma)$, which is independent of the edge $\{u,v\}$. Therefore, the expected value of $C$ can be written as
\begin{align}
    \langle C \rangle = \frac k2 + \frac k4 \sin 4\beta \sin\gamma(1+\cos^{k-1} \gamma).
    \label{eq:expC_QAOA1}
\end{align}
Maximising \cref{eq:expC_QAOA1} over all $\beta$'s and $\gamma$'s gives 
\begin{align}
    \langle C\rangle_{\max} &= 
    \frac k2 + \frac k4 \max_{\gamma,\beta\in \mathbb R} \big[\sin 4\beta \sin\gamma(1+\cos^{k-1} \gamma)\big] \nonumber\\
    &= 
    \frac k2 + \frac k4 \max_{\beta\in \mathbb R} \big[\sin 4\beta\big] \max_{\gamma\in \mathbb R} \big[\sin\gamma(1+\cos^{k-1} \gamma)\big]
    \nonumber\\
    &=
    \frac k2 + \frac k4 \max_{\gamma \in \mathbb R} g_k(\gamma),
    \label{eq:C_max}
    \\
    \mbox{where}\quad 
    g_k(\gamma) &= \sin \gamma(1+\cos^{k-1}\gamma).
    \label{eq:g_k}
\end{align}
To find the maximum point(s) of $g_k$, we start by taking its derivative. For $k\geq 2$, differentiating \cref{eq:g_k} and simplifying gives
\begin{align}
    g_k'(\gamma) = k \cos^k \gamma - (k-1)\cos^{k-2}\gamma + \cos\gamma,
    \label{eq:gk_prime}
\end{align}
which is a degree-$k$ polynomial in $\cos \gamma$. Hence, finding the maximum point(s) of $g_k(\gamma)$ involves finding the roots of this polynomial. At this point, we specialise to $k=4$. This allows to factorise the quartic polynomial \cref{eq:gk_prime} as
\begin{align}
    g_4'(\gamma) &= 4 \cos^4 \gamma - 3\cos^2\gamma + \cos\gamma
    =
    \cos \gamma (\cos\gamma+1)(2\cos\gamma-1)^2.
\end{align}
Hence, at a maximum point $\gamma^* \in \arg\max_{\gamma \in \mathbb R} g_k(\gamma)$, solving $g_4'(\gamma^*)=0$ gives $\cos\gamma^* \in \{0,-1,\tfrac 12\}$. When $\cos\gamma^*=0$, $g_4(\gamma^*)=\pm 1$; when $\cos\gamma^*=-1$, $g_4(\gamma^*)=0$; and when $\cos\gamma^*=\tfrac 12$, $g_4(\gamma^*)=\pm \tfrac 9{16}\sqrt{3} \approx \pm 0.974 < 1$. Taking the maximum of all these values gives $\max_{\gamma \in \mathbb R} g_k(\gamma) = 1$. Substituting this into \cref{eq:C_max} gives an approximation ratio of
\begin{align}
    \tfrac 1k \langle C\rangle_{\max}= \frac 12 + \frac 14 \cdot 1 = \frac 34.
\end{align}
In conclusion, for the unweighted 5-vertex star graph $S_4$, $\XQAOA{Y}$ achieves an approximation ratio of 1, whereas QAOA$_1$ achieves an approximation ratio of at most $\frac 34$.

\end{document}